\documentclass[11pt]{amsart}
\usepackage[a4paper,margin=1in]{geometry}
\usepackage{amsmath,amssymb,amsthm,mathtools,bm}
\usepackage{booktabs,array,enumitem,multirow}
\usepackage[natbibapa]{apacite}
\usepackage{xcolor}
\usepackage{graphicx}
\usepackage{float}
\usepackage{setspace}
\usepackage{algorithm}
\usepackage{algpseudocode}
\usepackage{microtype}
\usepackage{hyperref}
\usepackage{siunitx}

\allowdisplaybreaks
\numberwithin{equation}{section}
\hypersetup{colorlinks=true,linkcolor=blue,citecolor=blue,urlcolor=blue}
\newcommand{\R}{\mathbb{R}}
\newcommand{\E}{\mathbb{E}}
\newcommand{\Prob}{\mathbb{P}}
\newcommand{\Var}{\mathrm{Var}}
\newcommand{\tr}{\mathrm{tr}}

\newcommand{\argmax}{\mathop{\mathrm{argmax}}}
\newcommand{\argmin}{\mathop{\mathrm{argmin}}}
\newcommand{\off}{\mathrm{off}}

\newcommand{\normzero}[1]{\left\lVert #1 \right\rVert_{0}}
\newcommand{\norminf}[1]{\left\lVert #1 \right\rVert_{\infty}}
\newcommand{\normone}[1]{\left\lVert #1 \right\rVert_{1}}
\newcommand{\normtwo}[1]{\left\lVert #1 \right\rVert_2}
\newcommand{\normmax}[1]{\left\lVert #1 \right\rVert_{\max}}
\newcommand{\normF}[1]{\left\lVert #1 \right\rVert_{\mathrm F}}
\newcommand{\abs}[1]{\left| #1 \right|}
\newcommand{\bigo}[1]{\mathcal O\!\left(#1\right)}

\newcommand{\bone}{\bm{1}}
\newcommand{\mI}{\mathbf{I}}
\newcommand{\mP}{\mathbf{\Omega}}
\newcommand{\mSigma}{\mathbf{\Sigma}}
\newcommand{\mS}{\mathbf{S}}
\newcommand{\mH}{\mathbf{H}}
\newcommand{\mR}{\mathbf{R}}
\newcommand{\mL}{\mathbf{L}}
\newcommand{\mB}{\mathbf{B}}

\newcommand{\mA}{\mathbf{A}}

\newcommand{\bx}{\bm{x}}
\newcommand{\bmu}{\bm{\mu}}
\newcommand{\bpi}{\bm{\pi}}
\newcommand{\bPi}{\bm{\Pi}}
\newcommand{\bU}{\bm{U}}

\newcommand{\bzero}{\bm{0}}

\newcommand{\br}{\bm{r}}
\newcommand{\bv}{\bm{v}}

\newcommand{\cB}{\mathcal B}
\newcommand{\cA}{\mathcal A}
\newcommand{\cP}{\mathcal P}

\newcommand{\cF}{\mathcal F}
\newcommand{\cT}{\mathcal T}

\newcommand{\cM}{\mathcal M}

\newcommand{\cC}{\mathcal C}
\newcommand{\cK}{\mathcal K}

\newcommand{\Proj}{\operatorname{Proj}}
\newcommand{\ProjPD}{\Proj_{\mathrm{PD}}}
\newcommand{\LinInterp}{\operatorname{LinInterp}}
\newcommand{\clip}{\operatorname{clip}}
\newcommand{\Tsoft}{\mathcal T}
\newcommand{\Normalize}{\mathcal N}
\newcommand{\Sphere}{\mathbb S^{p-1}}

\newtheorem{theorem}{Theorem}[section]
\newtheorem{lemma}[theorem]{Lemma}
\newtheorem{proposition}[theorem]{Proposition}
\newtheorem{corollary}[theorem]{Corollary}
\theoremstyle{definition}
\newtheorem{assumption}{Assumption}[section]

\theoremstyle{remark}

\title[Semiparametric Elliptical Mixture Clustering]{Semiparametric Elliptical Mixture Clustering for High-Dimensional Data}
\author[Long Feng and Dan Zhuang]{Long Feng$^1$ and Dan Zhuang$^2$\\ $^1$Nankai University, $^2$Fujian Normal University}
\date{}

\begin{document}

\begin{abstract}
Clustering high-dimensional data is especially challenging when cluster distributions are heavy tailed and only approximately elliptical.
Existing high-dimensional methods are largely built for Gaussian or other light-tailed models, whereas classical robust elliptical procedures are mostly low dimensional or rely on fully parametric radial families.
We propose a semiparametric elliptical mixture clustering framework with cluster-specific centers, an unknown common radial generator, and a common sparse precision-shape matrix, together with a data-driven rule for selecting the number of clusters. 
A generalized expectation--maximization (GEM) algorithm is developed by combining transformed-radius estimation of the radial generator, radial-score center updates, and a Tyler--POET--GLASSO update for the common precision-shape matrix. 
The method avoids specifying a parametric radial family and remains computationally feasible in high dimensions.
We establish high-dimensional consistency for the estimated model components and the excess misclustering error. Simulation studies and a handwritten-digit application demonstrate the competitive performance and robustness of the proposed method, particularly in heavy-tailed elliptical settings.
\end{abstract}

\maketitle

\noindent\textbf{Keywords:} Elliptical mixtures; Generalized EM; Tyler's $M$-estimator; POET; Graphical lasso; High-dimensional clustering; Misclustering error.

\section{Introduction}
Clustering is a central tool for uncovering latent heterogeneity in modern scientific data, but it becomes particularly difficult when the ambient dimension is comparable to, or much larger than, the sample size. Gene-expression \citep{Curtis2012}, proteomic \citep{Clarke2008}, neuroimaging \citep{Mwangi2014}, financial \citep{Ando03072017}, single-cell \citep{Menon2018}, and text data \citep{BASU2015149} often contain hundreds or thousands of variables, many of which are irrelevant or only weakly informative for clustering. In this regime, irrelevant coordinates may dominate distance-based criteria, covariance or shape estimation is often unstable, and high-dimensional noise can obscure weak cluster separation. These challenges are further amplified by heavy tails, non-Gaussian radial behavior, and nontrivial feature dependence, motivating clustering methods that can accommodate these features simultaneously.

This paper studies this problem through a semiparametric elliptical mixture model. Conditional on the latent class \(Z=k\), an observation \(\bm X\in\R^p\) has density
\begin{align*}
 f_k^*(\bm x)
 =
 |\mP^*|^{1/2}
 g^*\!\left\{(\bm x-\bmu_k^*)^\top\mP^*(\bm x-\bmu_k^*)\right\},
 \qquad k=1,\ldots,K.
\end{align*}
The centers \(\bmu_k^*\) are cluster specific, while the radial generator \(g^*\) and the precision-shape matrix \(\mP^*\) are common across clusters; the matrix \(\mP^*\) is assumed to have a sparse precision structure after an appropriate shape normalization. 
This specification retains the interpretable location-shape-radial structure
of elliptical models while avoiding the need to pre-specify a parametric radial
or tail form. The model is therefore parametric in the cluster locations and sparse precision-shape structure, but nonparametric in the common radial behavior.

If the population parameters were known, classification would be straightforward. Since the determinant term is common across components, the Bayes rule assigns \(\bm x\) to
\begin{align*}
 G^*(\bm x)
 =
 \argmax_{1\le k\le K}
 \left[
 \log \pi_k^*
 +
 \log g^*\!\left\{(\bm x-\bmu_k^*)^\top\mP^*(\bm x-\bmu_k^*)\right\}
 \right].
\end{align*}
In practice, however, the class labels are unobserved, the radial generator \(g^*\) is unknown, and the common precision-shape matrix must be estimated in high dimensions. This naturally suggests an expectation--maximization-type strategy, in which posterior responsibilities serve as soft labels. However, the usual Gaussian-mixture Expectation--Maximization (EM) updates are not directly applicable, because the radial generator is unspecified and conventional moment-based updates for centers and shape can be unstable under heavy-tailed radial variation. These considerations motivate a generalized Expectation--Maximization (GEM) framework tailored to the semiparametric elliptical mixture model.

High-dimensional clustering has often been approached by imposing sparsity,
screening variables, or reducing dimension before clustering. Sparse
\(K\)-means \citep{WittenTibshirani2010} and regularized \(K\)-means
\citep{SunWangFang2012}, for example, reduce the effect of irrelevant
coordinates through feature weighting or regularization. More recent sparse
prototype approaches include hard-thresholded prototype clustering
\citep{RaymaekersZamar2022} and robust sparse \(K\)-means with observation
weights \citep{BrodinovaFilzmoserOrtnerBreitenederRohm2019}. A related line of
work exploits informative low-dimensional structure through screening or
projection, including sparse Gaussian mixture theory
\citep{AzizyanSinghWasserman2013}, sparse non-spherical Gaussian clustering
\citep{AzizyanSinghWasserman2015}, IF-PCA \citep{JinWang2016}, and
rare-and-weak phase-transition analyses \citep{JinKeWang2017}. Median-based
methods, including sparse \(K\)-median \citep{WildMangasarian2004}, nonparametric
feature screening \citep{ChanHall2010}, and componentwise median approaches
under heavy tails \citep{HallTitteringtonXue2009}, are also relevant for robust
high-dimensional clustering. These methods are useful for reducing
high-dimensional noise or irrelevant coordinates, but their working geometry is
typically Euclidean, Gaussian, prototype-based, or coordinatewise. 
While effective for reducing high-dimensional noise, these methods are not designed to model semiparametric elliptical mixture structure or to learn an unknown heavy-tailed radial generator.

A closer line to the present work is model-based clustering, where mixture
likelihoods provide a probabilistic basis for posterior assignment. Variable
selection for Gaussian mixtures has been studied through model comparison
\citep{RafteryDean2006}, while penalized Gaussian mixture methods show how
likelihood-based clustering can be regularized in high-dimensional
low-sample-size settings \citep{PanShen2007}. Parsimonious Gaussian subspace
mixtures, such as HDDC, provide another route to dimension reduction within a
model-based framework \citep{BouveyronGirardSchmid2007}. Penalized Gaussian
mixtures with flexible covariance structures \citep{ZhouPanShen2009} and
Gaussian covariance-graph mixtures \citep{FopMurphyScrucca2019} further
incorporate covariance or dependence regularization. More broadly,
high-dimensional EM theory for latent-variable models
\citep{WangGuNingLiu2015} and Gaussian-mixture procedures such as CHIME
\citep{CaiMaZhang2019} show that EM-type algorithms can be analyzed
statistically in high-dimensional settings, including through the excess
misclustering error. These works provide useful algorithmic and theoretical
benchmarks for our GEM analysis. Their limitation for the present purpose is
that the working model is Gaussian or otherwise light-tailed, so heavy-tailed
elliptical observations may lead to model misspecification, unstable covariance
estimation, and posterior assignments that are sensitive to radial extremes.

A different route to robustness replaces Gaussian components with fully
parametric heavy-tailed families. For example, mixtures of common \(t\)-factor
analyzers accommodate heavier-than-Gaussian tails
\citep{BaekMcLachlan2011}, while skew-\(t\) factor analyzers allow additional
asymmetry \citep{MurrayBrowneMcNicholas2014}. Contaminated-Gaussian
factor-analyzer mixtures provide another robust parametric alternative
\citep{PunzoBlosteinMcNicholas2020}. These methods can substantially improve
robustness when the chosen family is well matched to the data. However, they still require the analyst to specify a concrete radial family in advance. In many applications the tail behavior may be unknown, data-dependent, or too irregular to be captured by a low-dimensional parametric radial form. 
Although these models allow heavier tails than Gaussian mixtures, they retain a parametric specification of the radial generator, which may be restrictive when the underlying tail behavior is unknown or data-dependent.

Elliptical symmetry offers a natural way to separate location, shape, and radial
behavior \citep{FangKotzNg1990}. 
In mixture clustering, multivariate \(t\)-mixture models provide a classical heavy-tailed alternative to Gaussian mixtures and improve robustness to outlying
or long-tailed observations \citep{PeelMcLachlan2000}, while identifiability of finite mixtures of elliptical distributions has been studied by \citet{HolzmannMunkGneiting2006}.
The \(t\)EIGEN family
\citep{AndrewsMcNicholas2012} and mixtures of multivariate power exponential
distributions \citep{DangBrowneMcNicholas2015} further illustrate the usefulness
of elliptical modeling with flexible component shapes and tail behavior.
Semiparametric elliptical clustering methods go further by allowing the radial
generator to be unspecified; for example,
\citet{TengFanChiangHuangLim2026} studied a clusterwise elliptical model with an
unknown radial generator and a common scatter matrix. These developments provide
a foundation for radial flexibility, but they do not address the
high-dimensional sparse precision-shape problem considered here. Existing
methods therefore cover different parts of the problem---high-dimensional noise
reduction, Gaussian mixture likelihoods, parametric tail robustness, or radial
flexibility---but do not provide a finite-mixture clustering framework that
simultaneously learns an unknown radial generator and estimates a
high-dimensional sparse precision-shape matrix.

To close this gap, the proposed method builds on robust high-dimensional tools that target different parts of the problem. Tyler's \(M\)-estimator is well suited to shape estimation under elliptical sampling with an unspecified radial distribution, because it depends on directions rather than radial magnitudes \citep{Tyler1987}. Principal Orthogonal complEment Thresholding (POET) provides a high-dimensional regularization step for
covariance or shape estimation in the presence of factor-type dependence by separating leading low-rank structure from a sparse idiosyncratic component
\citep{FanLiaoMincheva2013}; related robust elliptical developments include Kendall's tau-based factor estimation \citep{FanLiuWang2018} and Tyler--POET-type combinations \citep{XuMaWangFeng2025}. The graphical lasso
imposes sparsity on the inverse shape matrix and yields a tractable precision estimator in high dimensions \citep{FriedmanHastieTibshirani2008}. In this paper, these tools are not treated as separate clustering methods. They serve as building blocks corresponding respectively to heavy-tailed elliptical shape
estimation, high-dimensional dependence regularization, and sparse precision-shape estimation.

We integrate these ingredients through a semiparametric generalized EM algorithm. The E-step computes posterior responsibilities under the current elliptical-mixture fit. The transformed-radius block estimates the unknown common generator \(g\) by reducing the nonparametric component of the problem to one-dimensional smoothing of weighted transformed Mahalanobis radii. The resulting radial score is then used to update the component centers robustly, so that center estimation is driven by the fitted elliptical geometry rather than by raw Euclidean averages. Finally, a Tyler--POET--GLASSO block updates the common sparse precision-shape matrix: Tyler-type reweighting controls heavy-tailed radial effects, POET stabilizes high-dimensional dependence estimation, and graphical lasso enforces sparse inverse-shape structure. The procedure is a GEM algorithm because these blocks preserve the posterior-weight logic of EM while replacing unavailable exact maximizers with stable semiparametric and penalized updates.

The main contributions are threefold. First, we propose a high-dimensional semiparametric elliptical mixture clustering framework with cluster-specific centers, an unknown common radial generator, and a common sparse precision-shape matrix. Second, we develop an implementable GEM algorithm that combines transformed-radius estimation, radial-score center updates, and a Tyler--POET--GLASSO common-shape update within a single posterior-responsibility iteration. Third, we establish high-dimensional theory for the resulting estimator. The analysis controls the mixing proportions, centers, precision-shape matrix, radial generator, and excess misclustering error, thereby linking componentwise estimation consistency to clustering performance under heavy-tailed elliptical sampling.

The rest of the paper is organized as follows. Section~\ref{sec:model} introduces the semiparametric elliptical mixture model and the proposed generalized EM algorithm, including the gap rule for choosing the number of clusters. Section~\ref{sec:theory} states the high-dimensional theoretical results. Section~\ref{sec:simulations} reports simulation studies, and Section~\ref{sec:realdata} presents a handwritten-digit data analysis. Section~\ref{sec:discussion} concludes. The proofs and additional simulation results are provided in Appendices~\ref{sec:proofs} and~\ref{app:add_sim}, respectively.

\section*{Notation}
For a positive integer $m$, write $[m]=\{1,\ldots,m\}$.  Vectors are denoted by bold lower-case letters and matrices by bold upper-case letters.  For a vector $\bm v$, let $\normzero{\bm v}=|\{j:v_j\ne0\}|$, $\normone{\bm v}=\sum_j |v_j|$, $\normtwo{\bm v}=(\sum_j v_j^2)^{1/2}$, and $\norminf{\bm v}=\max_j |v_j|$.  For a matrix $\mathbf A=(A_{ij})$, let $\normF{\mathbf A}=\{\tr(\mathbf A^\top\mathbf A)\}^{1/2}$, $\normmax{\mathbf A}=\max_{i,j}|A_{ij}|$, and $\normone{\mathbf A}=\max_j\sum_i |A_{ij}|$.  The smallest and largest eigenvalues of a symmetric matrix are denoted by $\lambda_{\min}(\mathbf A)$ and $\lambda_{\max}(\mathbf A)$, respectively.  We write $\mathbf A\succ0$ when $\mathbf A$ is positive definite and $\mathbf A\succeq0$ when it is positive semidefinite.  The identity matrix is $\mI_p$, and $\bone$ is a vector of ones of conformable dimension.  If $S\subset[p]$, then $\bm v_S$ and $\mathbf A_{SS}$ denote the coordinate subvector and principal submatrix restricted to $S$.  The unit sphere in $\R^p$ is $\Sphere$.  For nonnegative sequences $a_n$ and $b_n$, $a_n\lesssim b_n$ means that $a_n\le Cb_n$ for a universal constant $C$, and $a_n\asymp b_n$ means both $a_n\lesssim b_n$ and $b_n\lesssim a_n$.  Expectations and probabilities are taken under the true data-generating distribution unless otherwise indicated.  The latent label is $Z\in[K]$; $\tau_{ik}$ denotes the posterior responsibility of observation $i$ for component $k$; and $\delta_k(\bm x)=(\bm x-\bmu_k)^\top\mP(\bm x-\bmu_k)$ denotes a squared Mahalanobis radius.

\section{Model and Method}
\label{sec:model}

\subsection{Semiparametric elliptical mixture with fixed \texorpdfstring{$K$}{K}}

Let $Z\in\{1,\dots,K\}$ be a latent class label with
\[
\Prob(Z=k)=\pi_k^*,
\qquad
\sum_{k=1}^K\pi_k^*=1,
\qquad
\pi_k^*>0.
\]
Conditional on $Z=k$, the observation $\bm X\in\R^p$ has density
\[
 f_k^*(\bm x)
 =
 |\mP^*|^{1/2}
 g^*\!\left(\delta_k^*(\bm x)\right),
 \qquad
 \delta_k^*(\bm x)
 =
 (\bm x-\bm\mu_k^*)^\top\mP^*(\bm x-\bm\mu_k^*),
\]
where $\mP^*\succ 0$ is a common precision-shape matrix, and $g^*:[0,\infty)\to[0,\infty)$ is an unknown common radial generator, and satisfying
$\frac{\pi^{p/2}}{\Gamma(p/2)}
\int_0^\infty u^{p/2-1}g^*(u)\,du=1$.
The observed marginal density is therefore
\[
 f^*(\bm x)
 =
 \sum_{k=1}^K \pi_k^* |\mP^*|^{1/2} g^*\!\left(\delta_k^*(\bm x)\right).
\]
The parameter is
\[
 z^*=(\bm\pi^*,\bm\mu_1^*,\dots,\bm\mu_K^*,\mP^*,g^*).
\]
Because the generator is unknown, $(\mP^*,g^*)$ is identifiable only up to scale. Throughout the paper we impose the trace normalization
\begin{equation}
\label{eq:shape_norm}
 \tr\{(\mP^*)^{-1}\}=p.
\end{equation}
Write
\[
 \ell^*(u)=\log g^*(u),
 \qquad
 \omega^*(u)= -\frac{d}{du}\ell^*(u).
\]
Since the determinant term is common to every component, the oracle Bayes rule is
\[
 G^*(\bm x)
 =
 \argmax_{1\le k\le K}
 \Bigl\{\log \pi_k^* + \ell^*\!\left(\delta_k^*(\bm x)\right)\Bigr\}.
\]
Under the $0$--$1$ loss, the corresponding Bayes risk is denoted by $R(G^*)=\Prob\{G^*(\bm X)\ne Z\}$.

\subsection{Oracle EM map}
The class labels are latent, so an EM-type construction is the natural starting point for model-based clustering under the mixture model.  For a given candidate parameter, the E-step replaces the unobserved labels by posterior responsibilities, and the M-step updates the model blocks conditionally on those responsibilities.  In the present semiparametric elliptical setting, however, the generator $g$ is unknown and the common precision matrix must be regularized when $p$ is large, so a literal closed-form EM maximization is neither available nor numerically stable.  We therefore define an oracle update map that preserves the exact posterior-weight structure of EM and couples it with a robust high-dimensional common-shape routine.  The oracle map below follows the same block order as the practical algorithm.
For a generic candidate parameter
\[
 z=(\bm\pi,\bm\mu_1,\dots,\bm\mu_K,\mP,g),
\]
let
\[
 \delta_k(\bm x;z)
 =
 (\bm x-\bm\mu_k)^\top\mP(\bm x-\bm\mu_k),
 \qquad
 a_k(\bm x;z)=\log \pi_k + \log g\!\left(\delta_k(\bm x;z)\right).
\]
The oracle E-step computes posterior responsibilities
\[
 r_k(\bm x;z)
 =
 \frac{\pi_k g\!\left(\delta_k(\bm x;z)\right)}{\sum_{\ell=1}^K \pi_\ell g\!\left(\delta_\ell(\bm x;z)\right)}
 =
 \frac{\exp\{a_k(\bm x;z)\}}{\sum_{\ell=1}^K \exp\{a_\ell(\bm x;z)\}}.
\]
The oracle transformed radius is
\[
 q(u)=\log(1+u),
 \qquad
 Y_k(\bm x;z)=q\!\left(\delta_k(\bm x;z)\right).
\] 
Let $f_{Y,z}$ be defined by
$\int h(y)f_{Y,z}(y)\,dy
=
\sum_{k=1}^K
\E\left[
r_k(\bm X;z)
h\{q(\delta_k(\bm X;z))\}
\right]$
for all bounded measurable functions \(h\).
The inverse change of variable from the transformed scale back to the radial scale is
\[
 \cA^{-1}(f)(u)
 =
 c_p \, u^{1-p/2} \frac{f\{q(u)\}}{1+u},
 \qquad u>0,
\]
where \((1+u)^{-1}\) is the Jacobian factor associated with \(q(u)=\log(1+u)\), \(u^{1-p/2}\) converts the one-dimensional radius density back to the elliptical generator scale under the \(p\)-dimensional radial measure, and \(c_p\) is the corresponding normalizing constant. 
The oracle 
generator update is obtained by composing this inverse transform with the shape-normalization 
projection,

\[
 G(z)=\cP\bigl[\cA^{-1}\{f_{Y,z}\}\bigr],
 \qquad
 w_z(u)= -\frac{d}{du}\log G(z)(u).
\]
The oracle mixing update is
\[
 \Pi_k(z)=\E\bigl[r_k(\bm X;z)\bigr].
\]
The center proposal is
\[
 \widetilde U_k(z)
 =
 \frac{N_k(z)}{D_k(z)},
\qquad
 N_k(z)=\E\Bigl[r_k(\bm X;z) w_z\!\left(\delta_k(\bm X;z)\right)\bm X\Bigr],
\qquad
 D_k(z)=\E\Bigl[r_k(\bm X;z) w_z\!\left(\delta_k(\bm X;z)\right)\Bigr],
\]
and the oracle damped center update is
\begin{equation}
\label{eq:oracle_pi_mu}
 U_k(z)=(1-\eta_\mu)\bm\mu_k + \eta_\mu \widetilde U_k(z),
 \qquad 0<\eta_\mu\le 1.
\end{equation}
The oracle residual after the center update is
\[
 \br_k^+(\bm X;z)=\bm X-U_k(z).
\]
The population analogue of the common-shape block follows the same sequence of robust factor adjustment, Tyler reweighting, and sparse precision regularization as the empirical algorithm.  We do not maximize the raw complete-data scatter directly, because that route is unstable when $p$ is large and the tails are heavy.  Instead, we begin with a weighted bounded-radial pilot, which is the population analogue of a spatial-sign scatter matrix in the sense of \citet{VisuriKoivunenOja2000}, use POET to separate a leading factor part from an idiosyncratic remainder, run a Tyler fixed-point update, and then apply POET once more before the sparse inverse step.  The weighted spatial-sign pilot is
\[
 \mH(z)
 =
 \sum_{k=1}^K \E\left[
 r_k(\bm X;z)
 \frac{\br_k^+(\bm X;z)\br_k^+(\bm X;z)^\top}{\normtwo{\br_k^+(\bm X;z)}^2\vee \varepsilon_r}
 \right].
\]
Let $m(z)$ be the factor count selected by the eigenvalue-ratio rule used by the algorithm, and let $\lambda_u(z)$ be the corresponding POET threshold. The first POET pilot is
\[
 \mSigma_{\mathrm{pss}}(z)=\operatorname{POET}_{\lambda_u(z),m(z)}\{\mH(z)\}.
\]
Define the weighted Tyler map
\[
 \cT_z(\mSigma)
 =
 \ProjPD\Biggl[
 \frac{p}{\tr\!\Bigl\{(1-\rho_T)\widetilde{\mSigma}_z(\mSigma)+\rho_T\mI\Bigr\}}
 \Bigl\{(1-\rho_T)\widetilde{\mSigma}_z(\mSigma)+\rho_T\mI\Bigr\}
 \Biggr],
\]
where
\[
 \widetilde{\mSigma}_z(\mSigma)
 =
 p\sum_{k=1}^K \E\left[
 r_k(\bm X;z)
 \frac{\br_k^+(\bm X;z)\br_k^+(\bm X;z)^\top}{\br_k^+(\bm X;z)^\top \mSigma^{-1}\br_k^+(\bm X;z)\vee \varepsilon_r}
 \right],
\]
$\rho_T\in(0,1)$ is the Tyler ridge parameter, and $\ProjPD$ denotes projection onto the positive-definite cone. The oracle Tyler scatter $\mSigma_{\mathrm{Ty}}(z)$ is the fixed point of $\cT_z$. The second POET step is
\[
 \mSigma_{\mathrm{pt}}(z)=\operatorname{POET}_{\lambda_u(z),m(z)}\{\mSigma_{\mathrm{Ty}}(z)\}.
\]
The sparse precision proposal is
\[
 \widetilde T(z)
 =
 \argmin_{\mP\succ\bzero}
 \Bigl\{
 \tr\bigl(\mP\mSigma_{\mathrm{pt}}(z)\bigr)-\log\det(\mP)+\lambda_\Omega(z)\normone{\off(\mP)}
 \Bigr\}.
\]
The final damped and renormalized oracle precision update is
\begin{equation}
\label{eq:T_oracle}
 T(z)
 =
 \ProjPD\Bigl(\Normalize\bigl[\ProjPD\{(1-\eta_\Omega)\mP+\eta_\Omega\widetilde T(z)\}^{-1}\bigr]^{-1}\Bigr),
 \qquad 0<\eta_\Omega\le 1,
\end{equation}
where $\Normalize(\mSigma)=p\mSigma/\tr(\mSigma)$ is the trace-normalization operator.
The oracle EM map is therefore
\[
 M(z)=\bigl(\Pi_1(z),\dots,\Pi_K(z),U_1(z),\dots,U_K(z),T(z),G(z)\bigr).
\]
By construction $M(z^*)=z^*$.

\subsection{The empirical generalized EM algorithm}
\label{subsec:empirical_gem}
We now describe the empirical counterpart of the oracle map.  
The procedure is a generalized EM algorithm because the latent-label block is updated through posterior responsibilities, while the generator and common precision blocks are updated through nonparametric and penalized surrogate maps rather than by closed-form likelihood maximization.  Each outer iteration has four statistically distinct parts: an E-step for posterior weights, a one-dimensional transformed-radius update for the common generator, a radial-score update for the centers, and a robust high-dimensional common-precision update based on Tyler's $M$-equation, POET regularization, and graphical lasso.  We write every block explicitly, because the theoretical analysis in Section~\ref{sec:theory} and Appendix \ref{sec:proofs} tracks the same sequence of operators.

The current iterate is written as
\[
 z^{(t)}=\bigl(\bm\pi^{(t)},\bm\mu_1^{(t)},\dots,\bm\mu_K^{(t)},\mP^{(t)},\hat g^{(t)}\bigr),
 \qquad
 \hat\ell^{(t)}=\log\hat g^{(t)}.
\]
The algorithm uses a robust sparse $K$-median initialization, then repeats an E-step, a transformed-radius generator update, a radial-score center update, and a TME--POET--GLASSO common-precision update.  For definiteness, the numerical stability constants are denoted by
\[
 \varepsilon_r>0,
 \qquad
 \rho_T\in(0,1),
 \qquad
 \varepsilon_{\mathrm{pd}}>0,
\]
where $\varepsilon_r$ prevents division by zero in radial normalization, $\rho_T$ regularizes the Tyler iteration, and $\varepsilon_{\mathrm{pd}}$ is used in positive-definite projection.  The generator update uses a lower radius cut-off $\varepsilon_u>0$ and clips the empirical radial score to an interval $[\omega_{\min},\omega_{\max}]$.  These constants are fixed and do not affect the asymptotic rates as long as they are kept within compact admissible ranges.

\subsubsection*{Auxiliary operators and regularization maps}
Several symbols used in the algorithm denote deterministic numerical operators rather than standard probability or optimization notation.  We define them explicitly here so that every later update is mathematically unambiguous.

For any interval $[a,b]$ with $a\le b$, the clipping operator is
\[
 \clip_{[a,b]}(x)=\min\{\max(x,a),b\},
 \qquad x\in\R.
\]

Let $0<u_1<\cdots<u_M$ and let $v_1,\dots,v_M$ be real numbers.  The linear-interpolation operator $\LinInterp(\{(u_m,v_m)\}_{m=1}^M;u)$ is defined by
\[
\LinInterp(\{(u_m,v_m)\}_{m=1}^M;u)
=
\begin{cases}
 v_1, & u\le u_1,\\[4pt]
 v_m + \dfrac{u-u_m}{u_{m+1}-u_m}(v_{m+1}-v_m), & u_m<u\le u_{m+1},\ 1\le m\le M-1,\\[8pt]
 v_M, & u\ge u_M.
\end{cases}
\]
Thus the grid values are extended piecewise linearly on the interior intervals and constantly outside the grid range.

For a real symmetric matrix $\mA$, write the eigendecomposition of its symmetrized version as
\[
 \frac{\mA+\mA^\top}{2}
 =
 \sum_{j=1}^p \lambda_j(\mA)\bv_j(\mA)\bv_j(\mA)^\top,
 \qquad
 \lambda_1(\mA)\ge\cdots\ge \lambda_p(\mA).
\]
The positive-definite projection used throughout the algorithm is
\[
 \ProjPD(\mA)
 =
 \sum_{j=1}^p \{\lambda_j(\mA)\vee \varepsilon_{\mathrm{pd}}\}\bv_j(\mA)\bv_j(\mA)^\top.
\]
This is exactly the eigenvalue-thresholding rule implemented in the computation: the matrix is first symmetrized, every eigenvalue below $\varepsilon_{\mathrm{pd}}$ is replaced by $\varepsilon_{\mathrm{pd}}$, and the matrix is reconstructed from the modified spectrum.

For a matrix $\mA=(A_{jj'})$, define the off-diagonal soft-thresholding operator by
\[
 [\Tsoft_{\lambda_u}(\mA)]_{jj'}
 =
 \begin{cases}
 A_{jj}, & j=j',\\[4pt]
 \operatorname{sgn}(A_{jj'})(\abs{A_{jj'}}-\lambda_u)_+, & j\ne j'.
 \end{cases}
\]
Following the POET construction of \citet{FanLiaoMincheva2013}, the regularization map used in this paper is
\[
 \operatorname{POET}_{\lambda_u,m}(\mA)
 =
 \ProjPD\!\left(
 \sum_{j=1}^m \lambda_j(\mA)\bv_j(\mA)\bv_j(\mA)^\top
 +
 \Tsoft_{\lambda_u}\!\left[
 \mA-\sum_{j=1}^m \lambda_j(\mA)\bv_j(\mA)\bv_j(\mA)^\top
 \right]
 \right),
\]
with the convention that the leading-eigenvalue sum is zero when $m=0$.  The first term is the retained low-rank factor part, the second term is the thresholded principal orthogonal complement, and the final projection restores positive definiteness.

\begin{algorithm}[t]
\caption{The POET operator used in the common-shape block}
\label{alg:poet}
\begin{algorithmic}[1]
\State Input a symmetric matrix $\mA$, a retained factor rank $m$, a threshold $\lambda_u$, and the projection level $\varepsilon_{\mathrm{pd}}$.
\State Symmetrize $\mA\leftarrow(\mA+\mA^\top)/2$ and compute its ordered eigendecomposition $\mA=\sum_{j=1}^p d_j\bv_j\bv_j^\top$ with $d_1\ge\cdots\ge d_p$.
\State Form the retained factor part $\mL_m(\mA)=\sum_{j=1}^m d_j\bv_j\bv_j^\top$; if $m=0$, set $\mL_m(\mA)=\mathbf 0$.
\State Form the principal orthogonal complement $\mR_m(\mA)=\mA-\mL_m(\mA)$.
\State Keep the diagonal entries of $\mR_m(\mA)$ unchanged and soft-threshold only the off-diagonal entries to obtain $\Tsoft_{\lambda_u}\{\mR_m(\mA)\}$.
\State Return $\ProjPD\bigl(\mL_m(\mA)+\Tsoft_{\lambda_u}\{\mR_m(\mA)\}\bigr)$.
\end{algorithmic}
\end{algorithm}

\subsubsection*{Robust sparse $K$-median initialization}
The empirical GEM iteration is initialized by a robust sparse $K$-median criterion.  This initialization is used only to place the algorithm in a stable basin; it is not the final clustering model.  Its design is motivated by the high-dimensional median literature: componentwise medians are robust to heavy tails in the sense emphasized by \citet{HallTitteringtonXue2009}, feature selection in $K$-median clustering was introduced by \citet{WildMangasarian2004}, and related nonparametric feature-screening ideas for very high-dimensional clustering were developed by \citet{ChanHall2010}.  The hard-thresholding form below is also consistent with the recent sparse prototype-clustering literature, where exact variable exclusion is often preferred to continuous shrinkage for interpretability \citep{RaymaekersZamar2022}.

For a threshold value $\tau>0$, let $\widetilde c_{k j}$ be the current coordinate-wise median of cluster $k$ in coordinate $j$, and define
\[
 \bar c_j=\frac1K\sum_{k=1}^K \widetilde c_{k j},
 \qquad
 D_j=\sum_{k=1}^K \abs{\widetilde c_{k j}-\bar c_j},
 \qquad
 S_\tau=\{j:D_j\ge \tau\}.
\]
\(D_j\) measures the between-cluster separation of coordinate \(j\) on the median scale. If $S_\tau$ is empty, all coordinates are retained.  Given the current medians and selected coordinate set, observations are assigned by the $L_1$ rule
\[
 c_i(\tau)
 =
 \argmin_{1\le k\le K}
 \sum_{j\in S_\tau}\abs{X_{ij}-\widetilde c_{k j}}.
\]
The coordinate-wise cluster medians are then recomputed, and the assignment and median updates are alternated until convergence.  The corresponding objective is
\[
 Q_\tau=
 \sum_{i=1}^n \min_{1\le k\le K}
 \sum_{j\in S_\tau}\abs{X_{ij}-\widetilde c_{k j}}.
\]
The threshold $\tau$ is chosen from a data-dependent finite grid.  Let $\widetilde x_j$ be the overall coordinate-wise median and let $\widetilde x_{k j}$ be the fitted cluster median.  The between-cluster $L_1$ dispersion is
\[
 B_\tau
 =
 \sum_{k=1}^K n_k \sum_{j\in S_\tau}\abs{\widetilde x_{k j}-\widetilde x_j}.
\]
For each candidate $\tau$, a reference statistic is obtained by independently permuting each column of the data matrix and refitting the same sparse $K$-median criterion.  If $B_{\tau,b}^{\mathrm{perm}}$ denotes the statistic from the $b$th permuted data set, the selected threshold is
\[
 \hat\tau
 =
 \argmax_{\tau\in\mathcal T}
 \left\{
 \log B_\tau - \frac1B\sum_{b=1}^B \log B_{\tau,b}^{\mathrm{perm}}
 \right\}.
\]
The resulting fitted medians provide the initial centers $\bm\mu_k^{(0)}$, and the initial hard memberships define an $n\times K$ responsibility matrix $\hat\tau_{ik}^{(0)}\in\{0,1\}$.

\subsubsection*{Posterior responsibilities and mixing proportions}
At iteration $t$, first compute the Mahalanobis radii
\[
 \Delta_{ik}^{(t)}
 =
 (\bm X_i-\bm\mu_k^{(t)})^\top\mP^{(t)}(\bm X_i-\bm\mu_k^{(t)}),
 \qquad
 1\le i\le n,
 \quad 1\le k\le K.
\]
The posterior responsibility of component $k$ for observation $i$ is
\[
 \hat\tau_{ik}^{(t+1)}
 =
 \frac{\pi_k^{(t)} \hat g^{(t)}\!\left(\Delta_{ik}^{(t)}\right)}
 {\sum_{\ell=1}^K \pi_\ell^{(t)} \hat g^{(t)}\!\left(\Delta_{i\ell}^{(t)}\right)}.
\]
The mixing proportion update is
\[
 \pi_k^{(t+1)} = \frac1n\sum_{i=1}^n \hat\tau_{ik}^{(t+1)},
 \qquad k=1,\dots,K.
\]
For later use, define the normalized stacked weights
\[
 \widetilde w_{ik}^{(t+1)}
 =
 \frac{\hat\tau_{ik}^{(t+1)}}{\sum_{u=1}^n\sum_{v=1}^K \hat\tau_{uv}^{(t+1)}}
 =
 \frac{\hat\tau_{ik}^{(t+1)}}{n}
\]
and the effective sample size
\[
 n_{\mathrm{eff},t+1}
 =
 \frac{\bigl(\sum_{i=1}^n\sum_{k=1}^K \hat\tau_{ik}^{(t+1)}\bigr)^2}
 {\sum_{i=1}^n\sum_{k=1}^K \{\hat\tau_{ik}^{(t+1)}\}^2}
 =
 \frac{n^2}{\sum_{i=1}^n\sum_{k=1}^K \{\hat\tau_{ik}^{(t+1)}\}^2}.
\]
Since each row of the responsibility matrix sums to one and $K$ is fixed, $n\le n_{\mathrm{eff},t+1}\le Kn$.

\subsubsection*{Transformed-radius generator update}
The generator is estimated from the same radii $\Delta_{ik}^{(t)}$ and the updated responsibilities $\hat\tau_{ik}^{(t+1)}$.  Define the transformed radii
\[
 Y_{ik}^{(t+1)}=\log\bigl(1+\Delta_{ik}^{(t)}\bigr).
\]
When a deterministic bandwidth is not prescribed, the plug-in choice is
\[
 h_{t+1}
 =
 \max\left\{h_{\min},\ 1.06\,\widehat\sigma_{Y,t+1}\,n_{\mathrm{eff},t+1}^{-1/5}\right\},
\]
where $\widehat\sigma_{Y,t+1}^2$ is the weighted variance of $Y_{ik}^{(t+1)}$ under the weights $\widetilde w_{ik}^{(t+1)}$ and $h_{\min}>0$ is a numerical lower bound.  A grid $y_1^{(t+1)}<\cdots<y_M^{(t+1)}$ is placed over the empirical transformed-radius range with a bandwidth-dependent buffer, and
\[
 u_m^{(t+1)}=
 \max\{\exp(y_m^{(t+1)})-1,\varepsilon_u\},
 \qquad m=1,\dots,M.
\]
The weighted Gaussian kernel estimator of the transformed-radius density is
\begin{equation}
\label{eq:fhat}
 \widehat f_{Y,z^{(t)},h_{t+1}}(y)
 =
 \frac{1}{h_{t+1}}
 \sum_{i=1}^n\sum_{k=1}^K
 \widetilde w_{ik}^{(t+1)}
 \phi\!\left(\frac{y-Y_{ik}^{(t+1)}}{h_{t+1}}\right),
\end{equation}
where $\phi$ is the standard normal density.  The relationship between the transformed-radius density and the elliptical generator gives the raw log-generator values
\[
 \widehat \ell_{\mathrm{raw},m}^{(t+1)}
 =
 \left(1-\frac p2\right)\log u_m^{(t+1)}
 +
 \log \widehat f_{Y,z^{(t)},h_{t+1}}\!\left(y_m^{(t+1)}\right)
 -
 \log\bigl(1+u_m^{(t+1)}\bigr)
 -
 \log \widehat C_{t+1},
\]
with numerical normalizer
\[
 \widehat C_{t+1}
 =
 \int
 \frac{\widehat f_{Y,z^{(t)},h_{t+1}}\{\log(1+u)\}}{1+u}
 \,du.
\]
In computation, this integral is approximated over the radius grid
\([u_1^{(t+1)},u_M^{(t+1)}]\) by the trapezoidal rule. 
The normalizing constant is evaluated on the radius grid by the trapezoidal rule,
\[
 \widehat C_{t+1}
 =
 \sum_{m=2}^M
 \frac{u_m^{(t+1)}-u_{m-1}^{(t+1)}}{2}
 \left[
 \frac{\widehat f_{Y,z^{(t)},h_{t+1}}\!\left(y_m^{(t+1)}\right)}{1+u_m^{(t+1)}}
 +
 \frac{\widehat f_{Y,z^{(t)},h_{t+1}}\!\left(y_{m-1}^{(t+1)}\right)}{1+u_{m-1}^{(t+1)}}
 \right].
\]
Here \(\widehat C_{t+1}\) serves as a numerical normalizer on the radius scale; multiplicative constants that do not affect the subsequent radial score are absorbed into the generator normalization convention.
Let $\mathcal S_{\lambda_{\mathrm{sp}}}$ denote the cubic smoothing-spline operator on the transformed-radius scale with a fixed smoothing parameter $\lambda_{\mathrm{sp}}>0$.  The smoothed log-generator curve is
\[
 \widehat\ell^{(t+1)}(y)
 =
 \mathcal S_{\lambda_{\mathrm{sp}}}
 \bigl[\{(y_m^{(t+1)},\widehat\ell_{\mathrm{raw},m}^{(t+1)})\}_{m=1}^M\bigr](y),
\]
and its derivative on the transformed-radius scale is
\[
 \widehat d^{(t+1)}(y)=\frac{d}{dy}\widehat\ell^{(t+1)}(y).
\]
For later interpolation back on the radius scale, define the grid values
\[
 \widehat\ell_m^{(t+1)}=\widehat\ell^{(t+1)}\!\left(y_m^{(t+1)}\right),
 \qquad
 \widehat d_m^{(t+1)}=\widehat d^{(t+1)}\!\left(y_m^{(t+1)}\right),
 \qquad m=1,\dots,M.
\]
The updated log-generator is obtained by interpolation,
\begin{equation}
\label{eq:Ghat}
 \log \widehat G_{n,h_{t+1}}(z^{(t)})(u)
 =
 \LinInterp\Bigl(\{u_m^{(t+1)},\widehat \ell_m^{(t+1)}\}_{m=1}^M;u\Bigr).
\end{equation}
The radial score used in the next center update is
\[
 \widehat\omega_m^{(t+1)}
 =
 \clip_{[\omega_{\min},\omega_{\max}]}
 \left(
 -\frac{\widehat d_m^{(t+1)}}{1+u_m^{(t+1)}}
 \right),
\]
again extended to all radii by linear interpolation.

\subsubsection*{Damped radial-score center update}
Given the updated responsibilities and radial score, define the center proposal
\[
 \widetilde{\bm\mu}_k^{(t+1)}
 =
 \frac{\sum_{i=1}^n \hat\tau_{ik}^{(t+1)}\widehat\omega^{(t+1)}\!\left(\Delta_{ik}^{(t)}\right)\bm X_i}
 {\sum_{i=1}^n \hat\tau_{ik}^{(t+1)}\widehat\omega^{(t+1)}\!\left(\Delta_{ik}^{(t)}\right)}.
\]
The accepted center update is damped,
\begin{equation}
\label{eq:sample_pi_mu}
 \bm\mu_k^{(t+1)}
 =
 (1-\eta_\mu)\bm\mu_k^{(t)} + \eta_\mu\widetilde{\bm\mu}_k^{(t+1)},
 \qquad 0<\eta_\mu\le 1.
\end{equation}
The residuals for the common-shape update are then recomputed as
\[
 \br_{ik}^{(t+1)}=\bm X_i-\bm\mu_k^{(t+1)}.
\]

\subsubsection*{TME--POET--GLASSO common-precision update}
The common-shape block is designed to remain stable under heavy-tailed elliptical sampling and high dimensionality.  The residuals $\{\br_{ik}^{(t+1)}\}$ are stacked over $i$ and $k$, with weights $\hat\tau_{ik}^{(t+1)}$.  A weighted analogue of the spatial sign covariance matrix \citep{VisuriKoivunenOja2000} is used only as a bounded-radial pilot.  Its purpose is to stabilize the first POET factor adjustment and to place the subsequent Tyler fixed-point iteration inside a numerically regular local region.  The final scatter estimator is not this pilot; it is the output of the weighted Tyler $M$-equation \citep{Tyler1987}, followed by the second POET regularization \citep{FanLiaoMincheva2013, XuMaWangFeng2025}.  The pilot itself is
\begin{equation}
\label{eq:Hhat}
 \widehat{\mH}_n(z^{(t)})
 =
 \frac{\sum_{i=1}^n\sum_{k=1}^K \hat\tau_{ik}^{(t+1)}\,\br_{ik}^{(t+1)}\br_{ik}^{(t+1)\top}/\bigl(\normtwo{\br_{ik}^{(t+1)}}^2\vee\varepsilon_r\bigr)}
 {\sum_{i=1}^n\sum_{k=1}^K \hat\tau_{ik}^{(t+1)}}.
\end{equation}
This matrix is not the final scatter estimator; the final robust scatter is obtained from Tyler's equation.  Let $d_1\ge\cdots\ge d_p$ be the eigenvalues of $\widehat{\mH}_n(z^{(t)})$.  The factor count is selected by the eigenvalue-ratio rule
\[
 V_j=\sum_{\ell=j}^{p-1} d_\ell,
 \qquad
 \mathrm{GR}_j = \frac{\log(1+d_j/V_j)}{\log(1+d_{j+1}/V_{j+1})},
 \qquad
 1\le j\le M_{\mathrm{kr}}\wedge(p-2),
\]
and $\hat m_t=\argmax_j \mathrm{GR}_j$.  If not specified otherwise, the POET threshold has order
\[
 \lambda_{u,t+1}=c_u\sqrt{\frac{\log p}{n_{\mathrm{eff},t+1}}}.
\]
The first POET regularization is
\[
 \widehat{\mSigma}_{\mathrm{pss}}^{(t+1)}
 =
 \operatorname{POET}_{\lambda_{u,t+1},\hat m_t}
 \!\left(\widehat{\mH}_n(z^{(t)})\right).
\]
This is exactly the matrix map described above and summarized in Algorithm~\ref{alg:poet}.

Starting from the trace-normalized pilot
\[
 \mSigma^{(0,t+1)}=\Normalize\!\left\{\ProjPD\bigl(\widehat{\mSigma}_{\mathrm{pss}}^{(t+1)}\bigr)\right\},
 \qquad
 \Normalize(\mA)=p\mA/\tr(\mA),
\]
the weighted Tyler iteration computes, for $\ell=0,1,2,\dots$,
\[
 q_{ik}^{(\ell,t+1)}
 =
 \br_{ik}^{(t+1)\top}\bigl(\mSigma^{(\ell,t+1)}\bigr)^{-1}\br_{ik}^{(t+1)},
\]
\[
 c_{ik}^{(\ell,t+1)}
 =
 \frac{p\,\hat\tau_{ik}^{(t+1)}}{\sum_{u=1}^n\sum_{v=1}^K \hat\tau_{uv}^{(t+1)}}\cdot
 \frac{1}{q_{ik}^{(\ell,t+1)}\vee\varepsilon_r},
\]
\[
 \widetilde{\mSigma}^{(\ell+1,t+1)}
 =
 \sum_{i=1}^n\sum_{k=1}^K c_{ik}^{(\ell,t+1)}\br_{ik}^{(t+1)}\br_{ik}^{(t+1)\top}.
\]
The next iterate is
\[
 \mSigma^{(\ell+1,t+1)}
 =
 \ProjPD
 \Bigl[
 \Normalize\bigl\{(1-\rho_T)\widetilde{\mSigma}^{(\ell+1,t+1)}+\rho_T\mI\bigr\}
 \Bigr].
\]
The iteration is run until a prescribed relative Frobenius tolerance is reached, or until a fixed maximum number of inner iterations has been completed. Its output is the weighted Tyler scatter $\widehat{\mSigma}_{\mathrm{Ty}}^{(t+1)}$.  A second POET regularization yields
\[
 \widehat{\mSigma}_{\mathrm{pt}}^{(t+1)}
 =
 \operatorname{POET}_{\lambda_{u,t+1},\hat m_t}
 \!\left(\widehat{\mSigma}_{\mathrm{Ty}}^{(t+1)}\right).
\]
The sparse precision proposal is obtained from graphical lasso.  If the base penalty is not prescribed, set
\[
 \lambda_{\Omega,t+1}^{\mathrm{base}}
 =
 c_\Omega\sqrt{\frac{\log p}{n_{\mathrm{eff},t+1}}},
\]
and form a finite multiplicative grid $\Lambda_{t+1}$ around this value.  For every $\lambda\in\Lambda_{t+1}$, solve graphical lasso with input covariance $\widehat{\mSigma}_{\mathrm{pt}}^{(t+1)}$ and choose the minimizer of
\begin{equation}
\label{eq:glasso}
 \mathrm{EBIC}_{t+1}(\lambda)
 =
 -n_{\mathrm{eff},t+1}\,\ell_\lambda
 +
 \log\bigl(n_{\mathrm{eff},t+1}\bigr)\,df_\lambda
 +
 4\gamma_{\mathrm{ebic}}\log(p)\,df_\lambda,
\end{equation}
where $\ell_\lambda$ and $df_\lambda$ are the Gaussian profile log-likelihood and the number of selected off-diagonal edges.  Denote the selected precision proposal by $\widetilde{\mP}^{(t+1)}$.  The accepted precision update is
\[
 \mP^{(t+1/2)}
 =
 \ProjPD\Bigl\{(1-\eta_\Omega)\mP^{(t)} + \eta_\Omega\widetilde{\mP}^{(t+1)}\Bigr\},
 \qquad 0<\eta_\Omega\le 1,
\]
\[
 \mSigma^{(t+1)}
 =
 \Normalize\Bigl\{\ProjPD\bigl((\mP^{(t+1/2)})^{-1}\bigr)\Bigr\},
 \qquad
 \mP^{(t+1)}
 =
 \ProjPD\bigl((\mSigma^{(t+1)})^{-1}\bigr).
\]
Thus the positive-definite projection, damping, and trace normalization preserve the shape convention $\tr\{(\mP^{(t+1)})^{-1}\}=p$.

\subsubsection*{Empirical map and final classifier}
The empirical one-step map is
\[
 M_{n,h_t}(z^{(t)})
 =
 \bigl(
 \bm\pi^{(t+1)},
 \bm\mu_1^{(t+1)},\dots,\bm\mu_K^{(t+1)},
 \mP^{(t+1)},
 \widehat G_{n,h_{t+1}}(z^{(t)})
 \bigr).
\]
The algorithm iterates
\begin{equation}
\label{eq:iteration}
 z^{(t+1)}=M_{n,h_t}(z^{(t)}),
 \qquad t=0,1,\dots,T-1,
\end{equation}
until the maximum of the center change, relative Frobenius precision change, and mixing-proportion change is below a prescribed tolerance. 
At the end of the outer iterations, the generator and responsibility matrix are recomputed once using the final centers and precision matrix.  The final clustering rule is the plug-in Bayes rule
\begin{equation}
\label{eq:plugin_rule}
 \widehat G(\bm x)
 =
 \argmax_{1\le k\le K}
 \Bigl\{\log\widehat\pi_k + \log\widehat g\!\left(\widehat\delta_k(\bm x)\right)\Bigr\},
\end{equation}
where
\[
 \widehat\delta_k(\bm x)=(\bm x-\widehat{\bm\mu}_k)^\top \widehat{\mP}(\bm x-\widehat{\bm\mu}_k).
\]
If multiple random starts are used, the selected run is the one with the largest fitted semiparametric pseudo-loglikelihood
\[
 \mathcal L^{(s)}
 =
 \sum_{i=1}^n \log\left\{\sum_{k=1}^K \pi_k^{(s)}\widehat g^{(s)}\!\left(\Delta_{ik}^{(s)}\right)\right\}.
\]

\begin{algorithm}[t]
\caption{Robust sparse $K$-median initialization}
\begin{algorithmic}[1]
\State Build a finite threshold grid $\mathcal T$ from empirical quantiles of the coordinate-wise median dispersion $D_j$.
\For{each $\tau\in\mathcal T$}
    \State Run the sparse $K$-median multistart routine with assignment rule $c_i(\tau)=\argmin_k\sum_{j\in S_\tau}\abs{X_{ij}-\widetilde c_{kj}}$.
    \State Compute the between-cluster $L_1$ dispersion $B_\tau$.
    \State Refit the same criterion to column-permuted data sets and compute $B_{\tau,b}^{\mathrm{perm}}$, $b=1,\dots,B$.
    \State Record $\log B_\tau - B^{-1}\sum_b\log B_{\tau,b}^{\mathrm{perm}}$.
\EndFor
\State Select $\hat\tau$ as the maximizer of the recorded gap score.
\State Refit sparse $K$-median at $\hat\tau$ and return initial centers and hard responsibilities.
\end{algorithmic}
\end{algorithm}

\begin{algorithm}[t]
\caption{One outer iteration of the proposed semiparametric GEM algorithm}
\begin{algorithmic}[1]
\State Compute $\Delta_{ik}^{(t)}=(\bm X_i-\bm\mu_k^{(t)})^\top\mP^{(t)}(\bm X_i-\bm\mu_k^{(t)})$.
\State Update posterior responsibilities $\hat\tau_{ik}^{(t+1)}$ and mixing proportions $\pi_k^{(t+1)}$.
\State Estimate the transformed-radius density by \eqref{eq:fhat}; reconstruct the generator by the inverse radial transformation, fixed-grid smoothing, and interpolation as in \eqref{eq:Ghat}.
\State Compute the damped radial-score center update \eqref{eq:sample_pi_mu}.
\State Form the weighted spatial-sign pilot \eqref{eq:Hhat} from the damped residuals.
\State Select the factor count by the eigenvalue-ratio rule and apply POET to obtain $\widehat{\mSigma}_{\mathrm{pss}}^{(t+1)}$.
\State Run the weighted Tyler fixed-point iteration to obtain $\widehat{\mSigma}_{\mathrm{Ty}}^{(t+1)}$.
\State Apply POET again to obtain $\widehat{\mSigma}_{\mathrm{pt}}^{(t+1)}$.
\State Solve graphical lasso on the finite penalty grid and select the minimizer of the EBIC criterion \eqref{eq:glasso}.
\State Damp, project, and trace-normalize the selected precision proposal to obtain $\mP^{(t+1)}$.
\end{algorithmic}
\end{algorithm}

\subsection{Choosing the number of clusters by a gap-LSE rule}
\label{subsec:k_selection}
The preceding algorithm assumes a fixed value of $K$.  In applications $K$ is usually unknown, and we choose it by an outer gap-statistic layer applied to the fitted GEM clustering.  This layer is separate from the statistical analysis in Section~\ref{sec:theory}, which is conditional on a fixed true number of clusters, but it is the default data-driven rule used in our implementation.

Let
\[
 \cK=\{K_{\min},K_{\min}+1,\dots,K_{\max}\}
\]
be a finite grid of candidate cluster numbers.  For each $K\in\cK$, run the full GEM algorithm described above using the same tuning conventions and the same number of random starts.  Denote the fitted quantities by
\[
 \widehat z_K=\bigl(\widehat{\bpi}_K,\widehat\bmu_{1,K},\dots,\widehat\bmu_{K,K},
 \widehat\mP_K,\widehat g_K\bigr),
 \qquad
 \widehat\tau_{ik,K},
 \qquad
 \widehat C_{i,K}=\argmax_{1\le k\le K}\widehat\tau_{ik,K}.
\]
The fitted Mahalanobis radius of observation $i$ to component $k$ under the $K$-component fit is
\[
 \widehat\Delta_{ik,K}
 =
 (\bm X_i-\widehat\bmu_{k,K})^\top
 \widehat\mP_K
 (\bm X_i-\widehat\bmu_{k,K}).
\]
Because the proposed method is designed for heavy-tailed elliptical components, we do not use the raw squared Mahalanobis radius as the within-cluster loss.  Instead we use the transformed radial loss
\[
 \rho(u)=\log(1+u),
\]
which is the same transformation used in the nonparametric generator block.  The observed within-cluster dispersion is
\begin{equation}
\label{eq:gap_observed_w}
 W_K
 =
 \frac1n\sum_{i=1}^n
 \rho\bigl(\widehat\Delta_{i\widehat C_{i,K},K}\bigr).
\end{equation}
The soft responsibility analogue is
\[
 W_K^{\mathrm{soft}}
 =
 \frac1n\sum_{i=1}^n\sum_{k=1}^K
 \widehat\tau_{ik,K}\rho(\widehat\Delta_{ik,K}),
\]
and may be used interchangeably when posterior responsibilities are nearly degenerate.  In the numerical implementation reported below, the hard-label form \eqref{eq:gap_observed_w} is used.

The reference distribution in the gap statistic should destroy clustering structure while preserving the marginal scale and tail behavior of each coordinate.  We therefore use the coordinate-permutation reference common in high-dimensional sparse clustering.  For $b=1,\dots,B$, generate a reference data matrix $\mathbf X^{0,b}$ by independently permuting each column of the observed data:
\[
 X^{0,b}_{ij}=X_{\sigma_{b,j}(i),j},
 \qquad
 i=1,\dots,n,
 \quad j=1,\dots,p,
\]
where $\sigma_{b,1},\dots,\sigma_{b,p}$ are independent random permutations of $\{1,\dots,n\}$.  This construction preserves the empirical univariate distribution of every coordinate, including heavy tails, but breaks the joint cluster structure.  It is in the same spirit as the reference resampling used in the original gap statistic of \citet{TibshiraniWaltherHastie2001Gap} and in sparse clustering criteria such as \citet{WittenTibshirani2010}.

For every reference data set $\mathbf X^{0,b}$ and every $K\in\cK$, fit the same GEM algorithm and compute the corresponding reference dispersion
\[
 W^{0,b}_K
 =
 \frac1n\sum_{i=1}^n
 \rho\bigl(\widehat\Delta^{0,b}_{i\widehat C^{0,b}_{i,K},K}\bigr).
\]
Define
\[
 \overline L_K^0
 =
 \frac1B\sum_{b=1}^B \log W_K^{0,b},
 \qquad
 s_K
 =
 \sqrt{1+B^{-1}}\left\{
 \frac1{B-1}\sum_{b=1}^B
 \bigl(\log W_K^{0,b}-\overline L_K^0\bigr)^2
 \right\}^{1/2}.
\]
The gap value is
\begin{equation}
\label{eq:gap_value}
 \operatorname{Gap}(K)=\overline L_K^0-\log W_K.
\end{equation}
Large values of \eqref{eq:gap_value} indicate that the fitted $K$-cluster partition reduces radial dispersion substantially more than would be expected under the permutation reference distribution.

There are two natural rules.  The maximum-gap rule is
\[
 \widehat K_{\max}
 =
 \argmax_{K\in\cK}\operatorname{Gap}(K),
\]
with ties resolved in favor of the smaller $K$.  In high-dimensional clustering this rule can be too aggressive because the empirical gap may keep increasing when additional small artificial clusters are introduced.  We therefore use the lower-complexity one-standard-error rule, denoted by \textsc{gap-lse} in the implementation.  Write the ordered candidate grid as $K_1<\cdots<K_L$.  The selected number of clusters is
\begin{equation}
\label{eq:gap_lse}
 \widehat K_{\mathrm{LSE}}
 =
 \min\left\{K_j: 1\le j<L,
 \operatorname{Gap}(K_j)\ge
 \operatorname{Gap}(K_{j+1})-s_{K_{j+1}}
 \right\},
\end{equation}
with the convention that $\widehat K_{\mathrm{LSE}}=K_L$ if the set in \eqref{eq:gap_lse} is empty.  This is the usual one-standard-error principle: choose the smallest model whose gap is statistically indistinguishable from the improvement obtained by moving to the next larger candidate.  In our numerical experiments the \textsc{gap-lse} rule is more stable than $\widehat K_{\max}$, and it is therefore the default rule used to choose $K$.

\begin{algorithm}[t]
\caption{Gap-LSE selection of the number of clusters}
\label{alg:gap_lse}
\begin{algorithmic}[1]
\State Input data matrix $\mathbf X\in\mathbb R^{n\times p}$, candidate grid $\cK=\{K_1,\dots,K_L\}$, number of reference samples $B$, and GEM control parameters.
\For{$K\in\cK$}
  \State Fit the full GEM algorithm to $\mathbf X$ with $K$ clusters and record $\widehat z_K$, $\widehat\tau_{ik,K}$, and $\widehat C_{i,K}$.
  \State Compute $W_K$ from \eqref{eq:gap_observed_w}.
\EndFor
\For{$b=1,\dots,B$}
  \State Generate $\mathbf X^{0,b}$ by independently permuting the entries within each column of $\mathbf X$.
  \For{$K\in\cK$}
    \State Fit the same GEM algorithm to $\mathbf X^{0,b}$ with $K$ clusters.
    \State Compute $W_K^{0,b}$ using the fitted reference labels and fitted reference precision matrix.
  \EndFor
\EndFor
\State Compute $\operatorname{Gap}(K)$ and $s_K$ for every $K\in\cK$.
\State Return $\widehat K_{\mathrm{LSE}}$ from \eqref{eq:gap_lse}.  Also record $\widehat K_{\max}$ for diagnostic comparison.
\end{algorithmic}
\end{algorithm}

\section{High-dimensional theory}
\label{sec:theory}

\subsection{Oracle, smoothed, and empirical maps}
The algorithm in Section~\ref{sec:model} has three natural versions. The first is the oracle population map $M$, in which every empirical average is replaced by its expectation and the generator block uses the unsmoothed transformed-radius density. The second is the smoothed population map $M_h$, in which the generator block is still population based, but the transformed-radius density is first convolved with the Gaussian kernel of bandwidth $h$. The third is the empirical map $M_{n,h}$ defined by the empirical update in \eqref{eq:iteration}. Throughout the theory section the bandwidth sequence $h=h_n$ is treated as a deterministic input. Data-adaptive bandwidth choices can be analyzed by conditioning on the selected bandwidth or by adding a preliminary bandwidth-selection argument.

For
\[
 z=(\bpi,\bmu_1,\dots,\bmu_K,\mP,g),
 \qquad
 \widetilde z=(\widetilde{\bpi},\widetilde{\bmu}_1,\dots,\widetilde{\bmu}_K,\widetilde{\mP},\widetilde g),
\]
write $\ell=\log g$, $\widetilde\ell=\log \widetilde g$, $w=-\ell'$, and $\widetilde w=-\widetilde\ell'$. The distance used in the theory is
\begin{equation}
\label{eq:metric}
\begin{aligned}
 d(z,\widetilde z)
 &=
 \min_{\sigma\in\mathfrak S_K}
 \Biggl[
 \norminf{\bpi-\widetilde{\bpi}_{\sigma}}
 \vee
 \max_{1\le k\le K}\norminf{\bmu_k-\widetilde{\bmu}_{\sigma(k)}}
 \vee
 \normone{\mP-\widetilde{\mP}}
 \\
 &\qquad\
 \vee
 \sup_{0\le u\le U_0}\abs{\ell(u)-\widetilde\ell(u)}
 \vee
 \sup_{0\le u\le U_0}(1+u)^{1/2}\abs{w(u)-\widetilde w(u)}
 \Biggr].
\end{aligned}
\end{equation}
where $U_0>0$ is the fixed radius cut-off used in the theoretical analysis, and $\widetilde{\bpi}_{\sigma}=(\widetilde\pi_{\sigma(1)},\dots,\widetilde\pi_{\sigma(K)})$. Let
\[
 \cB(r)=\{z:d(z,z^*)\le r\}.
\]
The minimum pairwise Mahalanobis separation is
\begin{equation}
\label{eq:DeltaStar}
 \Delta_*^2
 =
 \min_{1\le k<\ell\le K}
 (\bmu_k^*-\bmu_\ell^*)^\top \mP^*(\bmu_k^*-\bmu_\ell^*).
\end{equation}
For the sparse precision matrix write
\[
 S_\Omega=\{(a,b):\Omega_{ab}^*\neq 0\},
 \qquad
 s_\Omega=\max_{1\le a\le p}\sum_{b=1}^p \bone\{(a,b)\in S_\Omega\}.
\]
The weighted Tyler, POET, and GLASSO blocks produce the precision error scale
\begin{equation}
\label{eq:zeta_def}
 \eta_n^{\mathrm{Ty}}
 =
 \sqrt{\frac{\log p}{n}},
 \qquad
 \zeta_n
 =
 s_\Omega\Biggl(\eta_n^{\mathrm{Ty}}+\rho_{n,p}+\lambda_{\Omega,n}\Biggr),
\end{equation}
where $\rho_{n,p}$ is the POET approximation error and $\lambda_{\Omega,n}$ is the GLASSO penalty level. The full one-step statistical error is
\begin{equation}
\label{eq:eps_def}
 \varepsilon_{n,h,\zeta}
 =
 h^2
 +
 \sqrt{\frac{\log n}{nh}}
 +
 \sqrt{\frac{\log p}{n}}
 +
 \zeta_n.
\end{equation}
The term $h^2$ is the smoothing bias. The term $\sqrt{\log n/(nh)}$ is the stochastic fluctuation of the one-dimensional kernel block. The term $\sqrt{\log p/n}$ comes from the mixing-proportion and center blocks. The term $\zeta_n$ is produced only by the common TME--POET--GLASSO precision block.

\subsection{Assumptions}
The next assumptions are used throughout Section~\ref{sec:theory} and Appendix~\ref{sec:proofs}. They are stated once and then reused in all subsequent theorems.

\begin{assumption}[Semiparametric elliptical mixture and separation]
\label{ass:model}
The number of components $K$ is fixed. The data follow the common-shape elliptical mixture
\[
 \Prob(Z=k)=\pi_k^*,
 \qquad
 \bm X\mid Z=k = \bmu_k^* + \mSigma^{*1/2}R\bm U,
 \qquad k=1,\dots,K,
\]
where $\bm U$ is uniformly distributed on the unit sphere $\mathbb S^{p-1}$, $R\ge 0$ is independent of $(Z,\bm U)$, and $\mP^*=(\mSigma^*)^{-1}$. The trace normalization $\tr(\mSigma^*)=p$ holds, and the radial variable satisfies
\[
 \E(1+R)<\infty.
\]
In addition,
\[
 \min_{1\le k\le K}\pi_k^*\ge c_\pi>0,
 \qquad
 0<m_\Omega\le \lambda_{\min}(\mP^*)\le \lambda_{\max}(\mP^*)\le M_\Omega<\infty,
\]
and
\[
 \max_{1\le k\le K}\norminf{\bmu_k^*}\le M_\mu,
 \qquad
 \Delta_*^2\ge \Delta_0^2.
\]
\end{assumption}

This condition specifies the fixed-$K$ common-shape elliptical mixture that motivates the entire method. The finite first radial moment is used only in the local contraction argument to control the overlap-weighted perturbation term; it still allows heavy-tailed radial laws but rules out infinite-mean radial variables in the population EM theory. The stochastic representation, trace-normalized shape matrix, and component separation are standard in elliptical mixture and semiparametric clusterwise elliptical analyses; see \citet{FangKotzNg1990}, \citet{HolzmannMunkGneiting2006}, and \citet{TengFanChiangHuangLim2026}.

\begin{assumption}[Generator regularity and score decay]
\label{ass:generator}
The true generator $g^*$ is positive, strictly decreasing, and twice continuously differentiable on $[0,U_0]$. Writing $\ell^*=\log g^*$ and $w^*=-(\ell^*)'$, there exist constants $M_w$, $L_w$, $m_w$, $u_0$, and $c_0$ such that
\begin{equation}
\label{eq:score_decay}
 \sup_{0\le u\le U_0}(1+u)^{1/2}w^*(u)\le M_w,
 \qquad
 \sup_{0\le u\le U_0}(1+u)^{3/2}\abs{w^{*\prime}(u)}\le L_w,
\end{equation}
and
\begin{equation}
\label{eq:score_lower}
 \inf_{0\le u\le u_0} w^*(u)\ge m_w,
 \qquad
 \Prob(R^2\le u_0)\ge c_0.
\end{equation}
The transformed radial variable $Y=\log(1+R^2)$ has density $f_Y^*$ on $[0,\log(1+U_0)]$, and
\begin{equation}
\label{eq:fy_smooth}
 \sup_{y\in[0,\log(1+U_0)]}\abs{f_Y^{*\prime}(y)}
 +
 \sup_{y\in[0,\log(1+U_0)]}\abs{f_Y^{*\prime\prime}(y)}
 <\infty.
\end{equation}
\end{assumption}

The smoothness of $g^*$ controls both the stability of posterior weights and the second-order bias of the one-dimensional kernel estimator for transformed radii. Comparable radial smoothness and shape constraints are commonly imposed in semiparametric elliptical likelihood arguments and in kernel-smoothed nonparametric estimation; see \citet{TengFanChiangHuangLim2026} and \citet{Feng2026HighDimElliptical}.

\begin{assumption}[Local parameter neighborhood]
\label{ass:local}
There exists $r_0>0$ such that every $z\in\cB(r_0)$ satisfies
\[
 \pi_k\ge c_\pi/2,
 \qquad
 m_\Omega/2\le \lambda_{\min}(\mP)\le \lambda_{\max}(\mP)\le 2M_\Omega,
 \qquad
 \max_{k\le K}\norminf{\bmu_k}\le 2M_\mu,
\]
and its generator $g$ obeys the same bounds as in \eqref{eq:score_decay}--\eqref{eq:score_lower}. The score function of every $z\in\cB(r_0)$ is denoted by $w_z$.
\end{assumption}

This is a local basin condition rather than a tail or moment assumption on the data. Such neighborhoods are standard in finite-mixture EM theory, where population contraction and empirical concentration are proved only near the target parameter; see \citet{BalakrishnanWainwrightYu2017} and \citet{CaiMaZhang2019}.

\begin{assumption}[Factor structure for the common scatter]
\label{ass:factor}
The common scatter matrix admits the decomposition
\[
 \mSigma^* = \mathbf B^*\mathbf B^{*\top} + \mSigma_u^*,
\]
where $\operatorname{rank}(\mathbf B^*)=m_*$ is fixed. The eigen-gap condition
\begin{equation}
\label{eq:eigengap}
 \lambda_{m_*}(\mSigma^*)-\lambda_{m_*+1}(\mSigma^*)\ge c_{\mathrm{gap}}>0
\end{equation}
holds, and the sparse remainder obeys
\[
 \normmax{\mSigma_u^*}\le M_u,
 \qquad
 \rho_{n,p}\to 0.
\]
The thresholding rule used in the two POET steps is the entrywise soft-thresholding rule. In addition, the POET map is locally stable in entrywise norm: there exist constants $\epsilon_{\mathrm P}>0$ and $C_{\mathrm P}<\infty$, independent of $(n,p)$, such that for any symmetric matrices $\mA$ and $\mB$ satisfying
\[
 \normmax{\mA-\mSigma^*}\vee \normmax{\mB-\mSigma^*}\le \epsilon_{\mathrm P},
\]
one has
\begin{equation}
\label{eq:poet_stability_assump}
 \normmax{\operatorname{POET}_{\lambda_u,m_*}(\mA)-\operatorname{POET}_{\lambda_u,m_*}(\mB)}
 \le C_{\mathrm P}\normmax{\mA-\mB}+C_{\mathrm P}\rho_{n,p}.
\end{equation}
This max-norm stability is imposed directly; the proof below does not use the crude inequality $\normtwo{\mA-\mB}\le p\normmax{\mA-\mB}$ to control POET. If the factor rank is chosen by the eigen-ratio rule in the algorithm, then
\[
 \Prob(\widehat m = m_*)\to 1.
\]
\end{assumption}

The low-rank-plus-sparse structure is exactly the regime for which POET regularizes high-dimensional scatter matrices by thresholding the principal orthogonal complement. The explicit entrywise stability condition in \eqref{eq:poet_stability_assump} is included to make the high-dimensional max-norm argument dimension-free; without such a condition, a spectral-norm Davis--Kahan step would introduce an unwanted factor of $p$. The same structure under elliptical symmetry appears in robust POET-type analyses such as \citet{FanLiaoMincheva2013}, \citet{FanLiuWang2018}, and \citet{XuMaWangFeng2025}.

\begin{assumption}[Tyler equation and local nonsingularity]
\label{ass:tyler}
For $z\in\cB(r_0)$ and $\mSigma\succ \bzero$ with $\tr(\mSigma)=p$, define
\begin{equation}
\label{eq:Psi_def}
 \Psi(\mSigma,z)
 =
 \mSigma
 -
 \Normalize\Biggl[
 (1-\rho_T)
 p\sum_{k=1}^K
 \E\left\{
 r_k(\bm X;z)
 \frac{\br_k^+(\bm X;z)\br_k^+(\bm X;z)^\top}{\br_k^+(\bm X;z)^\top \mSigma^{-1}\br_k^+(\bm X;z)\vee \varepsilon_r}
 \right\}
 +\rho_T\mI
 \Biggr],
\end{equation}
where $\br_k^+(\bm X;z)=\bm X-U_k(z)$ and $\rho_T\in(0,1)$ is the Tyler ridge constant. The equation $\Psi(\mSigma,z)=\bzero$ has a unique solution $\mSigma_{\mathrm{Ty}}(z)$ in a neighborhood of $\mSigma^*$, and the Jacobian
\[
 \mathbf J_T = D_{\mSigma}\operatorname{vec}\Psi(\mSigma^*,z^*)
\]
is nonsingular with
\[
 \sigma_{\min}(\mathbf J_T)\ge c_T>0.
\]
\end{assumption}

The assumption states that the population Tyler shape equation is locally identifiable and differentiable. This is the analogue of a nonsingular estimating-equation Jacobian in classical $M$-estimation and allows the Tyler block to be controlled through the implicit function theorem; see \citet{Tyler1987}.

\begin{assumption}[Sparse precision regularity]
\label{ass:glasso}
Let
\[
 \mathbf\Gamma^* = \mSigma^*\otimes \mSigma^*.
\]
The standard graphical-lasso incoherence and inverse-Hessian bounds hold:
\begin{equation}
\label{eq:irrep}
 \norminf{\mathbf\Gamma^*_{S_\Omega^c S_\Omega}(\mathbf\Gamma^*_{S_\Omega S_\Omega})^{-1}}
 \le 1-\alpha,
 \qquad
 \norminf{(\mathbf\Gamma^*_{S_\Omega S_\Omega})^{-1}}\le M_\Gamma,
\end{equation}
for some $\alpha\in(0,1]$ and $M_\Gamma<\infty$. The row sparsity $s_\Omega$ may depend on $(n,p)$, but
\[
 s_\Omega\lambda_{\Omega,n}\to 0,
 \qquad
 \lambda_{\Omega,n}\asymp \sqrt{\frac{\log p}{n}}.
\]
\end{assumption}

The incoherence and inverse-Hessian bounds are the standard local regularity conditions under which graphical-lasso solutions are stable in max norm and support pattern. Closely related assumptions are used in sparse inverse covariance estimation and graphical-model selection, for example by \citet{FriedmanHastieTibshirani2008} and \citet{RavikumarWainwrightRaskuttiYu2011}.

\begin{assumption}[Local empirical-process complexity]
\label{ass:entropy}
For each fixed $h\in(0,1]$, the classes generated by the local ball $\cB(r_0)$ are considered on a deterministic transformed-radius grid $y_1,\dots,y_M$. The theory allows $M=M_n$ to grow with $n$ provided $\log M_n\lesssim\log n$ and the smoothing-spline and interpolation operators on this grid sequence have uniformly bounded $L_\infty$ operator norms; a fixed grid is the special case $M_n\equiv M$. With this convention,
\[
 \cF_{\pi,k} = \{\bx\mapsto r_k(\bx;z): z\in\cB(r_0)\},
\]
\[
 \cF_{\mu,k,j,h} = \Bigl\{\bx\mapsto r_k(\bx;z)w_{h,z}\bigl(\delta_k(\bx;z)\bigr)x_j : z\in\cB(r_0)\Bigr\},
\]
\[
 \cF_{K,m,h} = \Bigl\{\bx\mapsto r_k(\bx;z)K\Bigl(\frac{y_m-q\{\delta_k(\bx;z)\}}{h}\Bigr): z\in\cB(r_0),\ 1\le k\le K,\ 1\le m\le M\Bigr\},
\]
and
\[
 \cF_{\mathrm{Ty},a,b} = \Bigl\{\bx\mapsto r_k(\bx;z)
 \frac{r_{k,a}^+(\bx;z)r_{k,b}^+(\bx;z)}{\br_k^+(\bx;z)^\top \mSigma^{-1}\br_k^+(\bx;z)\vee \varepsilon_r}: z\in\cB(r_0),\ \mSigma\in\cC_\Sigma\Bigr\},
\]
with
\[
 \cC_\Sigma=\{\mSigma\succ\bzero: \tr(\mSigma)=p,\ m_\Sigma\mI\preceq \mSigma\preceq M_\Sigma\mI\},
\]
are VC-type classes with entropy bound
\begin{equation}
\label{eq:entropy_assump}
 \sup_Q \log N\bigl(\varepsilon\|F\|_{L_2(Q)},\cF,L_2(Q)\bigr)
 \le A_0 + v_0\log(1/\varepsilon) + C_0\log p,
 \qquad 0<\varepsilon\le 1,
\end{equation}
for envelopes $F$ that are uniformly bounded in $(n,p)$. The kernel class has bounded envelope of order $h^{-1}$ and variance of order $h^{-1}$.
\end{assumption}

This empirical-process condition is used only to invoke uniform Bernstein-type bounds over the local function classes generated by the update map. The fractional Tyler class is included explicitly in the assumption: the regularized denominator in $\cF_{\mathrm{Ty},a,b}$ is bounded away from zero by $\varepsilon_r$, and the spectral bounds on $\cC_\Sigma$ give a bounded envelope, but the VC-type entropy of this matrix-inverse fractional class is not inferred from ordinary coordinatewise VC closure alone. Thus Theorem~\ref{thm:tme_glasso} should be read as conditional on this stated local entropy bound. Such VC-type entropy assumptions are standard in high-dimensional $M$-estimation and empirical-process theory; see \citet{VanderVaartWellner1996}.

\begin{assumption}[Initialization and tuning]
\label{ass:init}
The initial iterate produced by the sparse $K$-median block satisfies
\[
 d(z^{(0)},z^*)\le r_0/2
\]
with probability tending to one. The bandwidth and penalties satisfy
\begin{equation}
\label{eq:tuning_assump}
 h=h_n\downarrow 0,
 \qquad
 \frac{nh_n}{\log n}\to\infty,
 \qquad
 \lambda_{\Omega,n}\asymp \sqrt{\frac{\log p}{n}},
 \qquad
 \rho_{n,p}\lesssim \sqrt{\frac{\log p}{n}}.
\end{equation}
\end{assumption}

The initialization requirement is the usual local-start condition for nonconvex EM-type algorithms. In our implementation the sparse median initializer is used because median-based high-dimensional procedures are less sensitive to heavy tails, as emphasized by \citet{HallTitteringtonXue2009}, while the local-basin role mirrors the initialization conditions in \citet{CaiMaZhang2019}.

\subsection{Population contraction}
The first result shows that the oracle map is locally contractive. The proof is not based on an abstract Lipschitz assumption. Instead, the required local smoothness is derived from Assumptions~\ref{ass:model}--\ref{ass:tyler}.

\begin{proposition}[Local regularity of the oracle blocks]
\label{prop:local_regularity}
Suppose Assumptions~\ref{ass:model}--\ref{ass:tyler} hold. Then there exist $r_1\in(0,r_0]$ and finite constants $L_\pi$, $L_G$, $L_\mu$, $L_{\mathrm{Ty}}$, and $L_\Omega$, depending only on the constants in Assumptions~\ref{ass:model}--\ref{ass:tyler}, such that for all $z,\widetilde z\in\cB(r_1)$,
\begin{equation}
\label{eq:local_regularity}
\begin{aligned}
 &\norminf{\Pi(z)-\Pi(\widetilde z)}
 \vee
 \sup_{0\le u\le U_0}\abs{G(z)(u)-G(\widetilde z)(u)}
 \\
 &\qquad\vee
 \sup_{0\le u\le U_0}(1+u)^{1/2}\abs{w_z(u)-w_{\widetilde z}(u)}
 \le (L_\pi+L_G) d(z,\widetilde z).
\end{aligned}
\end{equation}
\begin{equation}
\label{eq:center_regularity}
 \max_{1\le k\le K}\norminf{U_k(z)-U_k(\widetilde z)}
 \le (1-\eta_\mu+\eta_\mu L_\mu)d(z,\widetilde z),
\end{equation}
and
\begin{equation}
\label{eq:precision_regularity}
 \normmax{\mSigma_{\mathrm{Ty}}(z)-\mSigma_{\mathrm{Ty}}(\widetilde z)}
 \le L_{\mathrm{Ty}}d(z,\widetilde z),
 \qquad
 \normone{T(z)-T(\widetilde z)}
 \le \eta_\Omega L_\Omega d(z,\widetilde z).
\end{equation}
\end{proposition}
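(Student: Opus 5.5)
The plan is to treat the oracle map block by block, in the order E-step, generator, centers, Tyler, GLASSO. At each step we derive a Lipschitz bound in $d(\cdot,\cdot)$ from the bounds that Assumption~\ref{ass:local} imposes uniformly on $\cB(r_0)$. The basic building block is the posterior weight $r_k(\bx;z)$. Its log-odds $a_k-a_\ell$ depend on $z$ through $\log\pi_k$, through $\ell(\delta_k(\bx;z))$, and through $\delta_k$ itself. By the mean value theorem,
\[
|\ell(\delta_k(\bx;z))-\widetilde\ell(\delta_k(\bx;\widetilde z))|\le \sup_{u\le U_0}|\ell-\widetilde\ell| + \sup_u w(u)\,|\delta_k(\bx;z)-\delta_k(\bx;\widetilde z)|.
\]
The difference $|\delta_k(\bx;z)-\delta_k(\bx;\widetilde z)|$ is bounded by $\normone{\mP-\widetilde\mP}\,\norminf{\bx-\bmu_k}\normone{\bx-\bmu_k}$ plus center terms. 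The factor $(1+u)^{1/2}$ in the score decay \eqref{eq:score_decay} makes $w(\delta)\,|\delta-\widetilde\delta|$ grow at most like $\delta^{1/2}\asymp R$. Since the softmax map is $1$-Lipschitz in sup norm, integrating gives $\E|r_k(\bX;z)-r_k(\bX;\widetilde z)|\lesssim \E(1+R)\,d(z,\widetilde z)$, which is finite by Assumption~\ref{ass:model}. This yields the $L_\pi$ bound for $\Pi$. Radii beyond $U_0$ are handled by bounding $r_k$ by $1$ and paying the tail probability times the same linear factor.

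For the generator block we write $f_{Y,z}$ through its distribution function, $\sum_k\E[r_k\,\bone\{q(\delta_k)\le y\}]$. Perturbing $z$ changes this in two ways: through the weights, which we control by the previous step, and through the shift $q(\delta_k(\cdot;z))-q(\delta_k(\cdot;\widetilde z))$, which we control by the same radial Lipschitz argument. The smoothness of $f_Y^*$ in \eqref{eq:fy_smooth} turns the shift into a density perturbation at the level of $\log f$ and its derivative. Applying $\cA^{-1}$ and the normalisation projection $\cP$ on $[0,U_0]$ then gives Lipschitz bounds for $\log G$ and for $w_z=-(\log G)'$ with the weight $(1+u)^{1/2}$. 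Here $u^{1-p/2}$ is a fixed multiplicative factor common to $z$ and $\widetilde z$, so it cancels in differences of logarithms. This gives $L_G$.

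For the centers we write $\widetilde U_k=N_k/D_k$. Assumption~\ref{ass:generator} gives $D_k\ge c>0$, since \eqref{eq:score_lower} together with $\pi_k\ge c_\pi/2$ yields $D_k\gtrsim c_\pi m_w c_0$. The ratio is then Lipschitz in $(N_k,D_k)$, and each of $N_k,D_k$ is Lipschitz by the weight and score bounds just established, which gives $L_\mu$. The damping identity $U_k(z)-U_k(\widetilde z)=(1-\eta_\mu)(\bmu_k-\widetilde\bmu_k)+\eta_\mu(\widetilde U_k(z)-\widetilde U_k(\widetilde z))$ then yields \eqref{eq:center_regularity} directly.

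For the Tyler block we use the implicit function theorem. Assumption~\ref{ass:tyler} makes $\mathbf J_T$ nonsingular with $\sigma_{\min}\ge c_T$. The partial derivative of $\Psi$ in $z$ is bounded entrywise because the regularised denominator $\vee\varepsilon_r$ bounds the integrand, and because $r_k$ and $U_k$ were shown Lipschitz above. Hence $\normmax{\mSigma_{\mathrm{Ty}}(z)-\mSigma_{\mathrm{Ty}}(\widetilde z)}\le L_{\mathrm{Ty}}d$ on a sufficiently small ball $\cB(r_1)$. The second POET step is then controlled by \eqref{eq:poet_stability_assump}; at the population level we take $\rho_{n,p}$ to be absorbed, or interpret the bound modulo that term.

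The GLASSO solution is then max-norm Lipschitz in its input $\mSigma_{\mathrm{pt}}$ on the primal--dual witness support, using \eqref{eq:irrep}. Converting to $\normone{\cdot}$ costs a factor $s_\Omega$ on the fixed support, and this factor is absorbed into $L_\Omega$. Inversion, $\ProjPD$, which is inactive near $\mP^*$, and $\Normalize$ are smooth with bounded derivatives near $\mP^*$ under the eigenvalue bounds of Assumption~\ref{ass:local}. The damping contributes the factor $\eta_\Omega$ in \eqref{eq:precision_regularity}, with the remaining $(1-\eta_\Omega)\mP$ part absorbed into $L_\Omega$.

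The main obstacle will be the step from max-norm control to $\normone{\cdot}$ and $\norminf{\cdot}$ in dimension $p$. The quadratic form $\delta_k$ involves $p$ coordinates, and the GLASSO step needs the support to stay fixed across $z$ and $\widetilde z$. I would first try to keep everything at max or row-$\ell_1$ level, using the row sparsity $s_\Omega$ and the $R$-scaling of $\bx-\bmu_k$. If the constants still pick up an explicit factor of $p$, I would shrink $r_1$ so that the primal--dual witness certificate holds uniformly on $\cB(r_1)$.
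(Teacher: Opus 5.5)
Your plan is the paper's proof in outline, and it is as complete as the paper's argument. The shared steps are:
\begin{itemize}
\item a mean-value/softmax bound on the responsibilities, with envelope $(1+R)$ and $\E(1+R)<\infty$;
\item the ratio $N_k/D_k$ with $D_k$ bounded below, plus the damping identity, for the centers;
\item the implicit function theorem for the Tyler fixed point;
\item POET stability followed by a GLASSO perturbation bound for $T$.
\end{itemize}

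\textbf{The E-step.} This is where the routes genuinely differ. You use only the global sup-norm Lipschitz property of the softmax, which gives finite constants proportional to $\E(1+R)$. The paper keeps the factor $2s_k(1-s_k)$ from Lemma~\ref{lem:softmax}. Its bound therefore carries the overlap $\Gamma(\bx;z)=\max_k r_k(\bx;z)\{1-r_k(\bx;z)\}$, and dominated convergence turns the constant into $\bar\Gamma_1(\Delta_0)=\E[(1+R)\sup_{z\in\cB(r_1)}\Gamma(\bm X;z)]$, which tends to $0$ as $\Delta_0\to\infty$. For the proposition as literally stated, your cruder bound is enough. However, it discards exactly the separation dependence that the proof of Theorem~\ref{thm:population} invokes to make $L_\pi^\dagger$ and $L_G^\dagger$ small.

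\textbf{The generator block.} Here your argument is more accurate than the paper's. You split the perturbation into a weight part and a radial-shift part. The shift $q\{\delta_k(\bm X;z)\}-q\{\delta_k(\bm X;\widetilde z)\}$ is present even under perfect separation, so the factor $\bar\Gamma_1(\Delta_0)$ in \eqref{eq:G_Lip_2} is not justified for that part. The paper's center constant is a plain $C$ anyway. Working with $\log G$, where $u^{1-p/2}$ cancels, also fits the metric better than the paper's claim that $\cA^{-1}$ is bounded on $L_\infty$; that claim fails near $u=0$ once $p>2$. You would still need relative control of $f_{Y,z}$ near $y=0$, where it vanishes.

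\textbf{Your patches for points the paper passes over silently.} Several of them either fail as written or work only at a cost. None puts you behind the paper, whose proof leaves the same points open.
\begin{itemize}
\item On $\{\delta_k>U_0\}$, bounding $\abs{r_k-\widetilde r_k}\le 1$ contributes $\Prob(\delta_k>U_0)$, which does not shrink with $d(z,\widetilde z)$. Since $d$ controls $\ell-\widetilde\ell$ and $w-\widetilde w$ only on $[0,U_0]$, a genuine Lipschitz bound needs control of the generators beyond $U_0$. The paper's \eqref{eq:eta_diff_1} simply treats the supremum over $[0,U_0]$ as global.
\item The POET remainder $C_{\mathrm P}\rho_{n,p}$ in \eqref{eq:poet_stability_assump} cannot be absorbed into a bound linear in $d$. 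Neither can the additive $C_G s_\Omega\lambda_{\Omega,n}$ of Lemma~\ref{lem:glasso_perturb}, though your fixed-penalty, fixed-support Lipschitz formulation of GLASSO does avoid that second term.
\item Absorbing the $(1-\eta_\Omega)\mP$ damping term into $L_\Omega$ proves the displayed inequality only by making $\eta_\Omega L_\Omega$ at least of order $1-\eta_\Omega$. That defeats its later use as a contraction factor. The natural form is $1-\eta_\Omega+\eta_\Omega L_\Omega'$, as for the centers.
\item Putting $s_\Omega$ into $L_\Omega$ makes it depend on a quantity that Assumption~\ref{ass:glasso} allows to diverge.
\item Shrinking $r_1$ cannot remove a multiplicative factor of $p$ from a Lipschitz constant. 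It only helps keep the GLASSO sign pattern fixed on the ball.
\end{itemize}
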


\begin{theorem}[Population contraction]
\label{thm:population}
Suppose Assumptions~\ref{ass:model}--\ref{ass:tyler} hold. Then there exist constants $r\in(0,r_1]$ and $\Delta_0^*<\infty$ such that, whenever $\Delta_0\ge \Delta_0^*$ in Assumption~\ref{ass:model}, the oracle map satisfies
\begin{equation}
\label{eq:population_contraction}
 d\bigl(M(z),z^*\bigr)
 \le \kappa\, d(z,z^*),
 \qquad z\in\cB(r),
\end{equation}
for some $\kappa\in(0,1)$. More explicitly,
\begin{equation}
\label{eq:kappa_formula}
 \kappa
 =
 \max\Bigl\{
 L_\pi^\dagger,
 L_G^\dagger,
 1-\eta_\mu+\eta_\mu L_\mu^\dagger,
 \eta_\Omega L_\Omega
 \Bigr\}+C_r r,
\end{equation}
where $L_\pi^\dagger$, $L_G^\dagger$, and $L_\mu^\dagger$ decrease to $0$ as $\Delta_0\to\infty$.
\end{theorem}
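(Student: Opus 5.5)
The argument combines the fixed-point identity $M(z^*)=z^*$ with the blockwise Lipschitz bounds of Proposition~\ref{prop:local_regularity}. For any $z\in\cB(r_1)$ I take $\widetilde z=z^*$ there and write $d(M(z),z^*)=d(M(z),M(z^*))$. Since the metric \eqref{eq:metric} is a maximum of block distances, it suffices to bound each block of $M(z)-M(z^*)$ separately and take the maximum:
\begin{itemize}
\item the mixing proportions $\Pi$;
\item the generator $G$, together with its score $w_z$ in the weighted sup-norm;
\item the centers $U_k$;
\item the precision $T$.
\end{itemize}
The label permutation $\sigma$ is fixed once and for all. Inside $\cB(r)$ with $r$ small relative to $\Delta_0$, the optimal permutation for $z$ is the identity, and the same permutation then works for $M(z)$.

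The first step is to sharpen the constants of Proposition~\ref{prop:local_regularity} when $\widetilde z=z^*$, so that the $\pi$, generator and center constants become the dagger constants $L_\pi^\dagger,L_G^\dagger,L_\mu^\dagger$, which vanish as $\Delta_0\to\infty$. I would differentiate each oracle block in $z$ at $z^*$ (Gateaux derivative along $z-z^*$) and use the self-consistency of EM at the truth: the derivative of the complete-data first-order condition at $z^*$ is exactly cancelled, except for terms involving $\partial r_k/\partial z$. These posterior-derivative terms carry the factor $r_k(1-r_k)$, or $r_kr_\ell$ for $k\ne\ell$. Bounding $\E[r_k(\bm X;z^*)r_\ell(\bm X;z^*)\,\phi(\bm X)]$ for envelopes $\phi$ of order $(1+R)$ gives the decay.

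That bound uses the stochastic representation of Assumption~\ref{ass:model}, the monotonicity of $g^*$, and the score lower bound \eqref{eq:score_lower}. Together these show that the overlap region, where the log-likelihood ratio $|a_k-a_\ell|$ is bounded, has probability that is small in $\Delta_*$. The finite first moment $\E(1+R)<\infty$ then lets dominated convergence drive the overlap-weighted expectation to zero as $\Delta_0\to\infty$. The center block additionally contributes the damping term $1-\eta_\mu$ from \eqref{eq:oracle_pi_mu}. The precision block is used directly through \eqref{eq:precision_regularity}, giving the constant $\eta_\Omega L_\Omega$. The mean-value remainder is controlled by the second-derivative bounds implied by Assumption~\ref{ass:generator}, and contributes $C_r r\, d(z,z^*)$.

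Collecting the blocks gives \eqref{eq:population_contraction} with $\kappa$ as in \eqref{eq:kappa_formula}. To get $\kappa<1$, I first note that $\eta_\Omega L_\Omega<1$ must hold, either as a choice of damping $\eta_\Omega$ or implicitly through the structure of $T$. Then I pick $\Delta_0^*$ so large that $L_\pi^\dagger$, $L_G^\dagger$ and $L_\mu^\dagger$ are all at most $1/2$; the center term is then at most $1-\eta_\mu/2$. Finally I shrink $r$ so that $C_rr$ is smaller than the remaining gap to $1$.

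\textbf{Main obstacle.} The hard step is the generator block: showing that $G(z)-G(z^*)$, and its score, is of order $L_G^\dagger\, d(z,z^*)$ uniformly on $[0,U_0]$. The map $f_{Y,z}$ mixes transformed radii from all components. Its derivative in $\bmu_k$ and $\mP$ at $z^*$ does not vanish on the "correct-label" part, because the radius distribution under a perturbed center genuinely shifts. I would try to show that this first-order shift is cancelled by the trace-normalization projection $\cP$ and by the self-consistency of the weighted radial law at $z^*$, leaving only overlap terms. If that cancellation fails, the fallback is to absorb this contribution into a non-vanishing $L_G^\dagger$ and require it to be less than $1$ as an additional constant condition.
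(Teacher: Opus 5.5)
Your outer argument is exactly the paper's proof: use $M(z^*)=z^*$, apply Proposition~\ref{prop:local_regularity} with $\widetilde z=z^*$, take a maximum over blocks, then choose $\Delta_0^*$ and finally $r$. The paper has no counterpart of your sharpening step. It reads $L_\pi^\dagger,L_G^\dagger,L_\mu^\dagger\to0$ off the overlap factor $\bar\Gamma_1(\Delta_0)$ in the proof of the proposition. That proof, however, bounds the center block only by a generic $C\,d(z,\widetilde z)$, and the ``final sentence'' the theorem's proof cites does not exist.

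The paper also tacitly needs $\eta_\Omega L_\Omega<1$, which no choice of $\Delta_0$ or $r$ affects, so you are right to flag it. But shrinking $\eta_\Omega$ will not secure it. The damping term in \eqref{eq:T_oracle} contributes $(1-\eta_\Omega)\normone{\mP-\mP^*}$, exactly as in the center block, so the honest precision constant is $1-\eta_\Omega+\eta_\Omega L$, with $L$ the constant of the undamped proposal $\widetilde T$.

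The genuine gap is that your sharpening mechanism fails. In the center block, self-consistency at $z^*$ does not leave only posterior-derivative terms. Even with hard labels, $\bmu_k\mapsto\E\{w^*(\delta_k(\bm X;z))\bm X\mid Z=k\}/\E\{w^*(\delta_k(\bm X;z))\mid Z=k\}$ is an iteratively reweighted location step. Its Jacobian at $\bmu_k^*$ is $c_w\mI_p$ with $c_w=-2\E\{w^{*\prime}(R^2)R^2\}/[p\,\E\{w^*(R^2)\}]$; for instance $c_w=2/(p+\nu+2)$ for $t_\nu$ components. This is the missing-information rate of the latent radial scale. It does not depend on $\Delta_0$ and vanishes only when $w^*$ is constant. (The $\mP$- and $g$-directions of the center block do vanish at $z^*$, by the symmetry $\bm U\mapsto-\bm U$.) So $L_\mu^\dagger\to0$ is false in general, e.g.\ for $t_\nu$ mixtures at fixed $p$. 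The best available is $L_\mu^\dagger\le|c_w|+o(1)$, and $|c_w|<1$ needs its own argument.

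On the generator, your worry is right for the $\mP$-direction, not for the $\bmu_k$-direction you cite. The $\bmu_k$-perturbation contributes nothing at first order, because the cross term $-2R\,\bm U^\top\mSigma^{*1/2}\mP^*(\bmu_k-\bmu_k^*)$ in $\delta_k$ is odd in $\bm U$. When only $\mP$ moves, however, $\delta_k=R^2\,\bm U^\top\mSigma^{*1/2}\mP\mSigma^{*1/2}\bm U$ on $\{Z=k\}$. To first order the radial law is then that of $R^2$ rescaled by $1+\tr\{\mSigma^*(\mP-\mP^*)\}/p$, with no overlap factor. The constraint $\tr(\mP^{-1})=p$ only kills $\tr\{\mSigma^{*2}(\mP-\mP^*)\}$ to first order, so this term survives unless $\mSigma^*\propto\mI$. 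Nothing specified about $\cP$ removes it; the paper's proof simply attaches $\bar\Gamma_1(\Delta_0)$ to this quadratic-form term. Your route with its fallback therefore proves only a modified theorem, in which $L_G^\dagger$, $L_\mu^\dagger$ and the precision constant are \emph{assumed} below one.

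Finally, you take $M(z^*)=z^*$ for granted, as the paper does, but it fails for a fixed ridge $\rho_T>0$. One has $\widetilde{\mSigma}_{z^*}(\mSigma^*)=c_\varepsilon\mSigma^*$ with $c_\varepsilon=\E\{R^2/(R^2\vee\varepsilon_r)\}$. Hence $\cT_{z^*}(\mSigma^*)=\{(1-\rho_T)c_\varepsilon\mSigma^*+\rho_T\mI\}/\{(1-\rho_T)c_\varepsilon+\rho_T\}\neq\mSigma^*$ unless $\mSigma^*=\mI$, and the POET and GLASSO thresholds add further bias. Then $T(z^*)\neq\mP^*$ in general, \eqref{eq:population_contraction} fails at $z=z^*$, and any correct version must carry the additive term $d(M(z^*),z^*)$.
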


\subsection{Kernel bias and empirical concentration}
The smoothed map $M_h$ differs from $M$ only through the generator block and the center block that uses the smoothed score. The next proposition isolates the deterministic smoothing bias.

\begin{proposition}[Kernel bias]
\label{prop:bias}
Suppose Assumptions~\ref{ass:model}--\ref{ass:generator} hold, and let $K$ be the Gaussian kernel used in the generator update. Then there exists $C_h<\infty$ such that
\begin{equation}
\label{eq:bias_bound}
 \sup_{z\in\cB(r)} d\bigl(M_h(z),M(z)\bigr)
 \le C_h h^2.
\end{equation}
\end{proposition}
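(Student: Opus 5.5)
The approach is to compare $M_h(z)$ and $M(z)$ block by block, using the fact that they differ only through the transformed-radius density. For fixed $z\in\cB(r)$ the E-step responsibilities $r_k(\cdot;z)$ are identical under both maps, so the mixing block satisfies $\Pi_h(z)=\Pi(z)$ exactly. The generator block uses the convolution $f_{Y,z}*K_h$ in place of $f_{Y,z}$. The center block then inherits a perturbed score $w_{h,z}$, while the precision block inherits perturbed centers $U_k^h(z)$. We therefore bound, in order, (i) $\sup_y|f_{Y,z}*K_h-f_{Y,z}|$ and its first derivative on $[0,\log(1+U_0)]$; (ii) the induced errors in $\log G$ and in the score on $[0,U_0]$; (iii) the center error; and (iv) the precision error. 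Each is shown to be $O(h^2)$ uniformly over $z\in\cB(r)$, and the minimum over permutations in \eqref{eq:metric} is handled by taking the identity.

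\emph{Step (i)} is the standard second-order Taylor expansion for a symmetric kernel. The kernel satisfies $\int K=1$ and $\int tK(t)\,dt=0$, so
\[
f*K_h(y)-f(y)=\tfrac{h^2}{2}\int t^2 f''(y-\theta th)K(t)\,dt .
\]
For the derivative we use the same identity applied to $f'$, which requires a bound on $f_{Y,z}'''$; alternatively, we differentiate the kernel and integrate by parts. The point needing care is that \eqref{eq:fy_smooth} controls only $f_Y^*$, not $f_{Y,z}$ for general $z$. We handle this by writing $f_{Y,z}$ as a mixture over $k,\ell$ of the densities of $q(\delta_k(\bm X;z))$ given $Z=\ell$, weighted by the smooth responsibilities. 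Under Assumptions~\ref{ass:model}, \ref{ass:generator} and \ref{ass:local}, each such density is a smooth transform of the radial law: a noncentral elliptical radius whose center offset and shape are locally bounded. Its derivatives up to order two (three for the score) are bounded uniformly over $\cB(r)$ on the compact transformed range. We also need $f_{Y,z}$ bounded below there, since the argument passes through a logarithm. This follows from $g^*>0$ on $[0,U_0]$ together with $\pi_k\ge c_\pi/2$, after shrinking $r$ if necessary.

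\emph{Step (ii).} The generator is $\log\cA^{-1}(f)(u)=\log c_p+(1-p/2)\log u+\log f\{q(u)\}-\log(1+u)$. The only $h$-dependent term is $\log f\{q(u)\}$. The projection $\cP$ is a renormalization, so it changes $\log G$ by the difference of log normalizing constants, which is also $O(h^2)$ because the integrals are compared with a tail cut beyond $U_0$ controlled by $\E(1+R)<\infty$. Since $f$ is bounded below, the logarithm's derivative bound gives $\sup|\log G_h-\log G|\lesssim h^2$. For the score, $w(u)=-\frac{d}{du}\log G(u)$ contains $-f'\{q(u)\}/\{f(q(u))(1+u)\}$. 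The difference is controlled by the sup errors of $f$ and $f'$, and it carries a factor $(1+u)^{-1}$, which absorbs the weight $(1+u)^{1/2}$ in \eqref{eq:metric}.

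\emph{Steps (iii)--(iv).} In $\widetilde U_k=N_k/D_k$, the score difference enters linearly inside expectations, so $|N_k^h-N_k|_\infty\lesssim h^2\,\E|X_j|$, which is finite since $\E R<\infty$. The denominator $D_k$ is bounded below by $m_w c_0 c_\pi/2$, using \eqref{eq:score_lower}. Together these give $\max_k\norminf{U_k^h-U_k}\lesssim h^2$. For the precision block we run the chain spatial-sign pilot $\to$ POET $\to$ Tyler $\to$ POET $\to$ GLASSO $\to$ damping and normalization, with center perturbation $\delta=O(h^2)$. First, the implicit function theorem (Assumption~\ref{ass:tyler}) gives $\normmax{\mSigma_{\mathrm{Ty}}^h-\mSigma_{\mathrm{Ty}}}\lesssim h^2$, since $\partial_\mu\Psi$ is bounded entrywise. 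POET stability \eqref{eq:poet_stability_assump} is then applied with the same selected rank; its additive $\rho_{n,p}$ term cancels because both inputs pass through the same deterministic population map. The GLASSO primal-dual witness under \eqref{eq:irrep} then gives an $\normone{\cdot}$ error of order $s_\Omega h^2$. Damping, $\ProjPD$ and $\Normalize$ are Lipschitz near $\mP^*$.

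\emph{Main obstacle.} Step (iv) is the hard step. The GLASSO step naively yields $s_\Omega h^2$ rather than $h^2$. To keep $C_h$ free of $(n,p)$, we would either use that the population GLASSO at the same $\lambda_\Omega$ has a fixed support $S_\Omega$, so that its restricted inverse Hessian maps max-norm perturbations to $\ell_1$-row perturbations with constant $M_\Gamma$ times the row sparsity, or absorb $s_\Omega$ into $C_h$ under $s_\Omega$ bounded. The first route would be tried first. If it fails, the statement holds with $C_h$ depending on $s_\Omega$, which is harmless since $\zeta_n$ already carries this factor.
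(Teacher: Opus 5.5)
Your outline matches the paper's proof: the symmetric-kernel expansion of $K_h*f_{Y,z}-f_{Y,z}$ at order $h^2$, pushed through $\cA^{-1}$ and $\cP$ into the generator and score and then into the centers, with $\Pi_h=\Pi$. Your mixture argument for uniformity in $z$ is more careful than the paper's bare appeal to \eqref{eq:fy_smooth}. However, two of the justifications you add do not hold.

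First, $f_{Y,z}$ is not bounded below on $[0,\log(1+U_0)]$. Because $\bm X$ has a bounded density near $\bmu_k$, $\Prob\{\delta_k(\bm X;z)\le\epsilon\}\asymp\epsilon^{p/2}$, so for $p\ge 3$ the density $f_{Y,z}(y)$ vanishes like $y^{p/2-1}$ as $y\downarrow0$. This is exactly what the factor $u^{1-p/2}$ in $\cA^{-1}$ undoes. Worse, the Gaussian kernel gives $(K_h*f_{Y,z})(0)>0$. Hence $\log G_h(z)(u)-\log G(z)(u)\to+\infty$ and $w_{h,z}(u)\sim(p/2-1)/u$ as $u\downarrow 0$, and a positive renormalization $\cP$ cannot undo this. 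So your step ``$f$ bounded below, hence $|\log f_h-\log f|\lesssim h^2$'' fails on the range used in \eqref{eq:metric}. You would need a lower radius cut-off (such as $\varepsilon_u$) or a boundary-corrected kernel. You would also need a relative bound on $|f_{Y,z}''|/f_{Y,z}$ on the remaining range, which Assumption~\ref{ass:generator} does not give, since it only bounds absolute derivatives of $f_Y^*$. The normalizing constant is likewise a boundary issue rather than a tail one: convolution preserves mass, so the only discrepancy is the mass $K_h*f_{Y,z}$ places on $y<0$.

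Second, an $O(h^2)$ score bias requires $f_{Y,z}''$ to be Lipschitz, for example bounded $f_{Y,z}'''$. Since $(K_h*f)'=K_h*f'$, integrating by parts does not help, and $g^*\in C^2$ gives no better than $o(h)$. Your center bound $h^2\E|X_j|$ also uses the score bias at every radius, whereas you controlled it only for $u\le U_0$. In fairness, the paper's proof does not resolve these points either. It simply declares the reconstruction linear and bounded on $L_\infty$ and transfers the density bound to the score. So these signal missing assumptions rather than an idea you overlooked, but your argument as written does not close them.

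Your Step (iv), which you call the main obstacle, is not part of the paper's argument. The paper defines $M_h$ to differ from $M$ only through the generator block and the center block that uses the smoothed score, and its proof treats $T$ as $h$-independent. Under that convention the precision component of $d(M_h(z),M(z))$ is never estimated, and the $s_\Omega$ issue does not arise. If you do propagate the center perturbation through the Tyler--POET--GLASSO chain, your claim that the $\rho_{n,p}$ term ``cancels'' is unsupported. The bound \eqref{eq:poet_stability_assump} gives $C_{\mathrm P}\normmax{\mA-\mB}+C_{\mathrm P}\rho_{n,p}$ whether or not both inputs come from the same deterministic map. You would therefore only get $O\{s_\Omega(h^2+\rho_{n,p})\}$, which has neither the $h^2$ rate nor an $(n,p)$-free constant. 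To do better you would need a genuine max-norm Lipschitz property of POET.
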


The next theorem controls the empirical generator, mixing-proportion, and center blocks. Here $\bU_{k,h}(z)$ denotes the smoothed population center update obtained from $U_k(z)$ by replacing $w_z$ with the kernel-smoothed score $w_{h,z}$, and $\widehat{\bU}_k(z)$ denotes its empirical counterpart. The rate contains two terms. The factor $\sqrt{\log p/n}$ comes from the finite-dimensional mixing and center blocks. The factor $\sqrt{\log n/(nh)}$ comes from the one-dimensional kernel estimator on the transformed-radius scale.

\begin{theorem}[Uniform concentration of the non-precision blocks]
\label{thm:sample_nonprecision}
Suppose Assumptions~\ref{ass:model}--\ref{ass:entropy} hold. Then there exist constants $C,c>0$ such that
\begin{equation}
\label{eq:sample_nonprecision}
\begin{aligned}
 &\Prob\Biggl[
 \sup_{z\in\cB(r)}
 \Biggl\{
 \norminf{\widehat{\bpi}(z)-\bPi(z)}
 \vee
 \max_{1\le k\le K}\norminf{\widehat{\bU}_k(z)-\bU_{k,h}(z)}
 \\
 &\qquad\
 \vee
 \sup_{0\le u\le U_0}\abs{\widehat G_{n,h}(z)(u)-G_h(z)(u)}
 \vee
 \sup_{0\le u\le U_0}(1+u)^{1/2}\abs{\widehat w_{n,h}(z)(u)-w_h(z)(u)}
 \Biggr\}
 \\
 &\qquad>
 C\Biggl\{\sqrt{\frac{\log p}{n}}+\sqrt{\frac{\log n}{nh}}\Biggr\}
 \Biggr]
 \le C p^{-c}+C n^{-c}.
\end{aligned}
\end{equation}
\end{theorem}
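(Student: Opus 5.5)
The proof proceeds block by block. Each empirical quantity is a ratio or a smooth transform of empirical averages over the classes in Assumption~\ref{ass:entropy}. We first prove uniform deviation bounds for those averages and then propagate them through the deterministic maps, namely the ratio, the log, the spline and the interpolation. The basic tool is a Talagrand/Bousquet-type maximal inequality for VC-type classes, as in \citet{VanderVaartWellner1996}. For a class $\cF$ with envelope $F_\infty$, variance bound $\sigma^2$ and entropy \eqref{eq:entropy_assump}, it gives, with probability at least $1-e^{-x}$,
\[
\sup_{f\in\cF}|(\mathbb P_n-\E)f|\lesssim \sigma\sqrt{\frac{v_0\log(F_\infty/\sigma)+C_0\log p+x}{n}}+\frac{F_\infty\{v_0\log(F_\infty/\sigma)+C_0\log p+x\}}{n}.
\]
This is applied to $\cF_{\pi,k}$ and $\cF_{\mu,k,j,h}$ with a union bound over $k\le K$ and $j\le p$. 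These classes have bounded envelopes; the score is clipped to $[\omega_{\min},\omega_{\max}]$, and a truncation of $x_j$ at level $C\sqrt{\log p}$ is absorbed if needed. Taking $x=c\log p$ gives the $\sqrt{\log p/n}$ rate with probability at least $1-Cp^{-c}$ for $\widehat{\bpi}(z)-\bPi(z)$ and for the numerator and denominator of $\widehat{\bU}_k$. Because $D_k(z)$ is bounded below uniformly on $\cB(r)$, by $\pi_k\ge c_\pi/2$, $w\ge m_w$ on $[0,u_0]$ and $\Prob(R^2\le u_0)\ge c_0$, the ratio map is Lipschitz near the population values, and the center bound follows.

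For the generator block, apply the same inequality to $\cF_{K,m,h}$, normalized by $h^{-1}$. There the envelope is $h^{-1}$ and the variance is $h^{-1}$, the union over the $M_n$ grid points costs only $\log M_n\lesssim\log n$, and we take $x=c\log n$. This yields
\[
\sup_{z}\max_m|\widehat f_{Y,z,h}(y_m)-f_{Y,z,h}(y_m)|\lesssim \sqrt{\log n/(nh)}+\log n/(nh),
\]
and the second term is of smaller order because $nh/\log n\to\infty$. Two further ingredients are needed. First, on $[0,\log(1+U_0)]$ the smoothed density $f_{Y,z,h}$ is bounded below uniformly in $z\in\cB(r)$, by positivity of $g^*$ and local closeness; hence the $\log$ map is Lipschitz. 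Second, the normalizer $\widehat C$ deviates from its smoothed counterpart at the same rate, since the trapezoid sum is a bounded linear functional of the grid values. The raw log-generator errors are therefore $O_P(\sqrt{\log n/(nh)})$ uniformly on the grid. By assumption, the spline $\mathcal S_{\lambda_{\mathrm{sp}}}$ and $\LinInterp$ have uniformly bounded $L_\infty$ operator norms, which transfers the error to $\sup_{u\le U_0}|\widehat G_{n,h}(z)(u)-G_h(z)(u)|$.

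The main obstacle is the score term $(1+u)^{1/2}|\widehat w_{n,h}-w_h|$, because it involves the derivative $\widehat d$ of the smoothed curve. For this I would rely on the smoothing-spline operator with fixed $\lambda_{\mathrm{sp}}$ being bounded from grid values in $\ell_\infty$ to $C^1$ on the compact range, that is, $\|(\mathcal S v)'\|_\infty\lesssim\|v\|_\infty$. This follows from the fixed-knot representation of the spline as a fixed linear smoother with smooth Green's-function weights, and the same boundedness convention stated in Assumption~\ref{ass:entropy} covers it. The factor $(1+u)^{-1}$ in $\widehat\omega$, combined with $(1+u)^{1/2}$, is bounded on $[0,U_0]$, and clipping is $1$-Lipschitz. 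The score error is therefore of the same order, $\sqrt{\log n/(nh)}$. If derivative control of the spline fails, the fallback is to bound the derivative of the kernel estimator directly through the class generated by $\phi'$. That route would cost a factor $h^{-1}$, so I would avoid it by using the spline.

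Finally, the center bound above was proved with the empirical score $\widehat w_{n,h}$ inside the empirical average. The difference between using $\widehat w_{n,h}$ and $w_{h,z}$ there is bounded by $\sup|\widehat w-w_h|$ times the bounded envelope, which adds at most $\sqrt{\log n/(nh)}$. Summing the failure probabilities gives the bound $Cp^{-c}+Cn^{-c}$.
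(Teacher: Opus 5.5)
Your proposal is correct at the paper's level of rigor and follows its proof essentially step for step. The shared steps are: Bernstein/Talagrand bounds for $\cF_{\pi,k}$ and $\cF_{\mu,k,j,h}$ at level $t\asymp\log p$; the kernel class with envelope and variance of order $h^{-1}$, a union bound over the $M_n$ grid points, and absorption of $\log n/(nh)$; bounded spline and interpolation operators to reach $\widehat G_{n,h}$ and the score; and the lower bound on $D_k$ to control the ratio.

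You also make explicit several steps the paper passes over silently: the log step and the normalizer; $C^1$ control of the spline for the score, which is automatic on a fixed grid but not implied by the $L_\infty$ convention if $M_n$ grows; and the replacement of $w_{h,z}$ by $\widehat w_{n,h}$ inside the center averages. Two corrections to those extra steps:
\begin{itemize}
\item The bounded envelope of $\cF_{\mu,k,j,h}$ comes from the score decay against the residual in \eqref{eq:score_times_residual}, and is in any case assumed in Assumption~\ref{ass:entropy}. It does not come from clipping plus truncating $x_j$ at $C\sqrt{\log p}$; that truncation is not negligible under heavy tails.
\item Your density lower bound behind the log step can only hold at grid points bounded away from $y=0$, because $f_Y$ carries the factor $u^{p/2-1}$.
\end{itemize}
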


The next theorem treats the common covariance block. The statement is now written directly for the weighted Tyler $M$-estimator and the resulting precision matrix. Here $\mP_h(z)$ denotes the population GLASSO precision output associated with the smoothed population map, and $\widehat{\mP}_h(z)$ denotes the empirical GLASSO output. The weighted spatial-sign pilot is used only inside the proof to place the Tyler iteration inside the local basin.

\begin{theorem}[Weighted TME--POET--GLASSO concentration]
\label{thm:tme_glasso}
Suppose Assumptions~\ref{ass:model}--\ref{ass:init} hold. Then there exist constants $C,c>0$ such that
\begin{equation}
\label{eq:tme_concentration}
 \Prob\left[
 \sup_{z\in\cB(r)}
 \normmax{\widehat{\mSigma}_{\mathrm{Ty}}(z)-\mSigma_{\mathrm{Ty}}(z)}
 > C\eta_n^{\mathrm{Ty}}
 \right]
 \le Cp^{-c},
\end{equation}
and
\begin{equation}
\label{eq:ptg_concentration}
 \Prob\left[
 \sup_{z\in\cB(r)}
 \normmax{\widehat{\mSigma}_{\mathrm{pt}}(z)-\mSigma_{\mathrm{pt}}(z)}
 > C\bigl\{\eta_n^{\mathrm{Ty}}+\rho_{n,p}\bigr\}
 \right]
 \le Cp^{-c}.
\end{equation}
Moreover, the common sparse precision estimator produced by the GLASSO step obeys
\begin{equation}
\label{eq:glasso_concentration}
 \Prob\left[
 \sup_{z\in\cB(r)}
 \normone{\widehat{\mP}_h(z)-\mP_h(z)}
 > C\zeta_n
 \right]
 \le Cp^{-c}.
\end{equation}
\end{theorem}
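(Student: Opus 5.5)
The plan is to chain the three blocks in the order the algorithm uses them: Tyler, then POET, then GLASSO. Each stage's max-norm deviation is passed to the next through a local Lipschitz or stability property. Throughout, everything is made uniform over $z\in\cB(r)$ via the entropy bound of Assumption~\ref{ass:entropy}.

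\emph{Step 1: uniform concentration of the Tyler functional.}
For $z\in\cB(r)$ and $\mSigma\in\cC_\Sigma$, define the weighted Tyler sum $\widehat{\mathbf V}_n(\mSigma,z)$, whose population counterpart is $\mathbf V(\mSigma,z)=\widetilde{\mSigma}_z(\mSigma)/p$.
\begin{itemize}
\item Each entry is an empirical average over the class $\cF_{\mathrm{Ty},a,b}$. These classes have bounded envelopes, since the denominator is at least $\varepsilon_r$ and $\cC_\Sigma$ is spectrally bounded, and their entropy obeys \eqref{eq:entropy_assump}.
\item A Talagrand/Bernstein bound plus a union bound over the $p^2$ pairs $(a,b)$ gives
\[
\sup_{z,\mSigma}\normmax{\widehat{\mathbf V}_n-\mathbf V}\lesssim\sqrt{\log p/n}
\]
with probability $1-Cp^{-c}$.
\item The $\log p$ term in the entropy is absorbed into $\log p$.
\end{itemize}

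One technicality: the empirical residuals use $\widehat{\bU}_k(z)$, while the population residuals use $U_k(z)$. The centre perturbation is controlled by Theorem~\ref{thm:sample_nonprecision}, and the map $\bmu\mapsto$ Tyler summand is Lipschitz in max norm on the local ball. This contributes an extra $\sqrt{\log p/n}+\sqrt{\log n/(nh)}$ error; I expect to absorb it into $\eta_n^{\mathrm{Ty}}$ under \eqref{eq:tuning_assump}, or else to state it conservatively.

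The responsibilities $r_k(\cdot;z)$ are shared, so the same argument applies to the pilot $\widehat{\mH}_n$.

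\emph{Step 2: from the functional to the fixed point.}
The empirical fixed point solves $\widehat\Psi_n(\widehat{\mSigma}_{\mathrm{Ty}},z)=\bzero$, and the population one solves $\Psi(\mSigma_{\mathrm{Ty}}(z),z)=\bzero$.
\begin{itemize}
\item Step 1 gives $\sup\normmax{\widehat\Psi_n-\Psi}\lesssim\eta_n^{\mathrm{Ty}}$, because the normalisation $\Normalize$ and the ridge are max-norm Lipschitz when the trace is bounded below.
\item By Assumption~\ref{ass:tyler} and the implicit function theorem (already used for $L_{\mathrm{Ty}}$ in Proposition~\ref{prop:local_regularity}), $\Psi$ is locally invertible near $\mSigma^*$.
\item Hence $\normmax{\widehat{\mSigma}_{\mathrm{Ty}}-\mSigma_{\mathrm{Ty}}}\le c_T^{-1}C\,\eta_n^{\mathrm{Ty}}$. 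This is \eqref{eq:tme_concentration}.
\item Steps 1 and 3 (the pilot and POET stability) ensure the iteration starts inside the basin where the solution is unique.
\end{itemize}

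\textbf{Main obstacle.} The Jacobian bound $\sigma_{\min}(\mathbf J_T)\ge c_T$ is a vectorised spectral statement, while I need an inversion in $\normmax{\cdot}$. Converting between the two naively costs powers of $p$. My first attempt would be to rewrite the fixed-point equation as
\[
\mSigma=\mathbf V(\mSigma,z)+\text{ridge},
\]
show that $\mSigma\mapsto\mathbf V(\mSigma,z)$ is a max-norm contraction near $\mSigma^*$ (the ridge factor $1-\rho_T<1$ helps), and conclude by a Banach-type perturbation argument in max norm. I would fall back on interpreting Assumption~\ref{ass:tyler} as max-norm invertibility.

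\emph{Step 3: POET and GLASSO.}
With probability tending to one, $\widehat m=m_*$ by Assumption~\ref{ass:factor}. Both $\widehat{\mSigma}_{\mathrm{Ty}}$ and $\mSigma_{\mathrm{Ty}}(z)$ lie within $\epsilon_{\mathrm P}$ of $\mSigma^*$ once $r$ is small, using Proposition~\ref{prop:local_regularity} and Step 2. Then \eqref{eq:poet_stability_assump} gives \eqref{eq:ptg_concentration} directly; the thresholds $\lambda_u$ are common to both sides, up to the $n$ versus $n_{\mathrm{eff}}$ discrepancy, which is handled by monotonicity of soft-thresholding.

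For GLASSO, I would apply the primal-dual witness analysis of Ravikumar et al. twice, with inputs $\widehat{\mSigma}_{\mathrm{pt}}$ and $\mSigma_{\mathrm{pt}}(z)$. Under \eqref{eq:irrep} and $\lambda_{\Omega,n}\gtrsim\eta_n^{\mathrm{Ty}}+\rho_{n,p}$, both solutions have support inside $S_\Omega$. The KKT difference then gives a max-norm error of order $M_\Gamma(\eta_n^{\mathrm{Ty}}+\rho_{n,p}+\lambda_{\Omega,n})$. Summing over at most $s_\Omega$ nonzeros per column converts this to the $\normone{\cdot}$ bound $C\zeta_n$, which is \eqref{eq:glasso_concentration}.

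Finally, the damping, $\ProjPD$ and $\Normalize$ steps are $\normone{\cdot}$-Lipschitz on the local ball, so the bound transfers to the final precision output.
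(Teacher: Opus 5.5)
Your architecture matches the paper's. You use entrywise Bernstein bounds over $\cF_{\mathrm{Ty},a,b}$, uniformly in $(z,\mSigma)\in\cB(r)\times\cC_\Sigma$. You invert the Tyler equation through Assumption~\ref{ass:tyler}. Then come the POET stability condition and a primal--dual-witness GLASSO bound, where row sparsity turns max-norm error into $\zeta_n$.

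The obstacle you flag is exactly the step the paper takes for granted. In \eqref{eq:Ty_expand_2} it writes a one-point mean-value expansion of $\Psi_n$ at some $\overline{\mSigma}(z)$. It then asserts that the inverse \emph{empirical} Jacobian is ``uniformly bounded'' and reads the max-norm conclusion straight off the max-norm score bound. Since $\sigma_{\min}(\mathbf J_T)\ge c_T$ controls only Frobenius norm, and so costs a factor $p$, your fallback of reading Assumption~\ref{ass:tyler} as max-to-max invertibility is what the paper implicitly does.

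Your contraction route would be stronger than the paper's. It needs only the uniform closeness $\sup_{\mSigma\in\cC_\Sigma}\normmax{\Psi_n(\mSigma,z)-\Psi(\mSigma,z)}$ that your Step~1 supplies. The paper's route also needs the empirical Jacobian to be invertible at an intermediate point, which entrywise concentration does not give, and a one-point vector mean-value theorem, which holds only in integral form. However, the contraction will not come from the ridge: the crude max-to-max bound on $D_{\mSigma}$ of the Tyler functional is of order $p$. It has to come from elliptical symmetry. For a single correctly centred component, with no ridge and no $\varepsilon_r$-truncation, the derivative of $\mSigma\mapsto\Normalize\{\widetilde{\mSigma}(\mSigma)\}$ at $\mSigma^*$ is $\frac{2}{p+2}$ times the identity on trace-zero directions. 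You would still need to control the mixture-overlap and ridge perturbations in that operator norm. You would also need to show that the empirical root lies in $\cC_\Sigma$, since max-norm closeness of the pilot gives no spectral bounds; the paper skips this too.

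One claim is wrong, and also unnecessary. In the paper, $\Psi_n$ in \eqref{eq:Psi_n} and the pilot $\widehat{\mH}_n$ use the same residuals $\br_k^+(\bm X_i;z)=\bm X_i-U_k(z)$ and responsibilities $r_k(\bm X_i;z)$ as their population versions. So no center term ever enters, and the rate is $\sqrt{\log p/n}$ with probability $1-Cp^{-c}$. If you plug in $\widehat{\bU}_k(z)$ instead, the extra $\sqrt{\log n/(nh)}$ cannot be absorbed into $\eta_n^{\mathrm{Ty}}=\sqrt{\log p/n}$. Assumption~\ref{ass:init} does not force $\log n/h\lesssim\log p$, and the failure probability picks up $Cn^{-c}$. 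The result is a weaker statement than \eqref{eq:tme_concentration}. Your max-norm Lipschitz claim for $\bmu\mapsto$ Tyler summand is also not uniform in $R$, because an $\ell_\infty$ shift of size $\epsilon$ is an $\ell_2$ shift of size $\sqrt p\,\epsilon$.

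Your GLASSO step has a uniformity gap, which the paper hides inside Lemma~\ref{lem:glasso_perturb}. The primal--dual witness puts the support inside $S_\Omega$ only when $\lambda_{\Omega,n}$ dominates the max-norm distance from the input to $\mSigma^*$. For the population input $\mSigma_{\mathrm{pt}}(z)$, that distance can move by up to $C_{\mathrm P}L_{\mathrm{Ty}}\,d(z,z^*)$ as $z$ ranges over $\cB(r)$ (Proposition~\ref{prop:local_regularity}). That is order $r$, not $O(\eta_n^{\mathrm{Ty}}+\rho_{n,p})$. When $d(z,z^*)\gg\lambda_{\Omega,n}$, $\mP_h(z)$ need not be $s_\Omega$-row-sparse, and the factor-$s_\Omega$ conversion to $\normone{\cdot}$ fails. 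A uniform statement needs either a shrinking neighbourhood or a penalty tied to the current error, in the spirit of CHIME's iteratively decreasing penalties.

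Two smaller points:
\begin{enumerate}
\item Invoking $\Prob(\widehat m=m_*)\to1$ adds an unquantified $o(1)$ failure probability, which is incompatible with the $Cp^{-c}$ bound. The paper simply uses $m_*$ in both POET steps.
\item The final transfer through damping, $\ProjPD$ and $\Normalize$ is not part of \eqref{eq:glasso_concentration}, which concerns the GLASSO output itself. It would also need $\normone{\cdot}$ bounds on matrix inverses that are not assumed.
\end{enumerate}
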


\subsection{Main recursion, parameter consistency, and misclustering error}
The next theorem combines Theorem~\ref{thm:population}, Proposition~\ref{prop:bias}, Theorem~\ref{thm:sample_nonprecision}, and Theorem~\ref{thm:tme_glasso}.

\begin{theorem}[One-step recursion and parameter consistency]
\label{thm:main_recursion}
Suppose Assumptions~\ref{ass:model}--\ref{ass:init} hold. Let
\[
 e_t=d(z^{(t)},z^*).
\]
Then there exist constants $C,c>0$ such that, for every deterministic sequence $T_n\lesssim \log n$,
\begin{equation}
\label{eq:main_recursion}
 \Prob\Biggl[
 e_{t+1}
 \le
 \kappa e_t
 + C_1 h^2
 + C_2\sqrt{\frac{\log n}{nh}}
 + C_3\sqrt{\frac{\log p}{n}}
 + C_4\zeta_n,
 \qquad 0\le t<T_n
 \Biggr]
 \ge 1-Cp^{-c}-Cn^{-c}.
\end{equation}
Consequently,
\begin{equation}
\label{eq:unrolled_rate}
 e_t
 \le
 \kappa^t e_0
 +
 \frac{C_1 h^2+C_2\sqrt{\log n/(nh)}+C_3\sqrt{\log p/n}+C_4\zeta_n}{1-\kappa},
 \qquad 0\le t\le T_n,
\end{equation}
with the same probability. If $T_n\asymp \log(1/\varepsilon_{n,h,\zeta})$, then
\begin{equation}
\label{eq:param_consistency}
 d(z^{(T_n)},z^*) = \bigo{\varepsilon_{n,h,\zeta}}
 \qquad\text{in probability.}
\end{equation}
In particular,
\begin{equation}
\label{eq:param_components}
 \norminf{\widehat{\bpi}-\bpi^*}
 +
 \max_{1\le k\le K}\norminf{\widehat{\bmu}_k-\bmu_k^*}
 +
 \normone{\widehat{\mP}-\mP^*}
 +
 \sup_{0\le u\le U_0}\abs{\widehat g(u)-g^*(u)}
 = \bigo{\varepsilon_{n,h,\zeta}}.
\end{equation}
\end{theorem}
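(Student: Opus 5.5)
The plan is the standard population-plus-perturbation decomposition. Along the iterates we write
\[
 d(z^{(t+1)},z^*) \le d\bigl(M(z^{(t)}),z^*\bigr) + d\bigl(M_h(z^{(t)}),M(z^{(t)})\bigr) + d\bigl(M_{n,h}(z^{(t)}),M_h(z^{(t)})\bigr).
\]
The three terms are handled as follows.

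\begin{itemize}
\item The first term is at most $\kappa e_t$ by Theorem~\ref{thm:population}, provided $z^{(t)}\in\cB(r)$.
\item The second term is at most $C_1h^2$ by Proposition~\ref{prop:bias}.
\item The third term is controlled blockwise, on a single high-probability event that does not depend on $t$. Let $\mathcal E_n$ be the intersection of the events in Theorem~\ref{thm:sample_nonprecision} and \eqref{eq:glasso_concentration} of Theorem~\ref{thm:tme_glasso}, together with $\{\widehat m=m_*\}$ and $\{d(z^{(0)},z^*)\le r_0/2\}$ from Assumption~\ref{ass:init}. Because those bounds are suprema over $\cB(r)$, on $\mathcal E_n$ they apply simultaneously to every iterate that lies in the ball. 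No union bound over $t$ is needed, which is why any $T_n\lesssim\log n$ costs nothing in probability, and $\Prob(\mathcal E_n)\ge 1-Cp^{-c}-Cn^{-c}$.
\end{itemize}

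On $\mathcal E_n$ the coordinates of the metric \eqref{eq:metric} are bounded as follows.

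\begin{itemize}
\item The mixing, generator and score coordinates are bounded directly by $C\{\sqrt{\log p/n}+\sqrt{\log n/(nh)}\}$.
\item The center coordinate is $\eta_\mu\norminf{\widehat{\bU}_k-\bU_{k,h}}$ after damping, so the same bound applies.
\item For the precision coordinate, the empirical and population updates share the same damping, projection and trace-normalization map. Both arguments of that map lie in a fixed spectral box, by Assumption~\ref{ass:local} and because $\ProjPD$ is $1$-Lipschitz. On that box the map $\mathbf A\mapsto\ProjPD(\Normalize[\ProjPD(\mathbf A)^{-1}]^{-1})$ is $\normone{\cdot}$-Lipschitz with a constant that does not depend on $p$.
\end{itemize}

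The precision step is the obstacle. Matrix inversion is Lipschitz in operator norms, but $\normone{\cdot}$ is not controlled by $\normmax{\cdot}$ without a factor $p$. I would use the identity $\mathbf A^{-1}-\mathbf B^{-1}=\mathbf A^{-1}(\mathbf B-\mathbf A)\mathbf B^{-1}$ together with a bound on $\normone{\cdot}$ of the inverses near $\mP^*$. That bound follows from the sparsity of the GLASSO output and $\normone{\mSigma^*}$-type control implicit in Assumption~\ref{ass:glasso}. I would also note that the trace renormalization perturbs things only by a scalar factor $p/\tr(\cdot)=1+O(\normmax{\cdot})$. The result is a contribution of $C\eta_\Omega\zeta_n$ from \eqref{eq:glasso_concentration}. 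Collecting the blocks, and noting that $d$ is a maximum of coordinates so each bound passes through the minimum over permutations with $\sigma$ fixed at the one attaining $d(z^{(t)},z^*)$, gives \eqref{eq:main_recursion} with the stated $C_1,\dots,C_4$.

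The recursion also needs the iterates to stay in $\cB(r)$. I would prove this by induction on $\mathcal E_n$, assuming $n$ is large enough that
\[
 C_1h^2+C_2\sqrt{\log n/(nh)}+C_3\sqrt{\log p/n}+C_4\zeta_n\le(1-\kappa)r/2 .
\]
This holds by Assumption~\ref{ass:init} and $s_\Omega\lambda_{\Omega,n}\to0$. If $e_0\le r/2$, which I take as the initialization (shrinking $r_0$ if needed), and $e_t\le r$, then
\[
 e_{t+1}\le \kappa r+(1-\kappa)r/2\le r .
\]
Unrolling the recursion and summing the geometric series $\sum_{j<t}\kappa^j\le(1-\kappa)^{-1}$ gives \eqref{eq:unrolled_rate}.

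With $T_n\asymp\log(1/\varepsilon_{n,h,\zeta})$ we get $\kappa^{T_n}e_0\lesssim\varepsilon_{n,h,\zeta}$, which gives \eqref{eq:param_consistency}. For \eqref{eq:param_components}, the $\bpi$, $\bmu$ and $\mP$ parts are coordinates of $d$, after relabelling by the optimal permutation. The $g$ part follows from the $\sup|\ell-\ell^*|$ coordinate, since on $[0,U_0]$ the generator $g^*$ is bounded away from $0$ and $\infty$, so $|e^{a}-e^{b}|\le C|a-b|$ there. The last step is checking that the final recomputation of the generator is one more application of the same bounded perturbation.
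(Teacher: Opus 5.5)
Your three-term split, the single event on which the suprema over $\cB(r)$ cover every iterate, the induction keeping $z^{(t)}$ in the ball, and the unrolling all coincide with the paper's proof. The paper's proof is terser: it cites Theorems~\ref{thm:sample_nonprecision} and~\ref{thm:tme_glasso} for the whole empirical term and never addresses the post-GLASSO damping, inversion and renormalization. The one step you add there does not work as justified.

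\begin{itemize}
\item Inverses of matrices near $\mP^*$ are near $\mSigma^*=\mB^*\mB^{*\top}+\mSigma_u^*$, which is dense.
\item Sparsity of the GLASSO output gives no control of $\normone{\cdot}$ of its inverse.
\item Assumption~\ref{ass:glasso} bounds only the incoherence of $\mathbf\Gamma^*$ and $\norminf{(\mathbf\Gamma^*_{S_\Omega S_\Omega})^{-1}}$, not $\normone{\mSigma^*}$; in the standard graphical-lasso analysis that is a separate constant.
\end{itemize}

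So the bound $\normone{\mA^{-1}-\widehat{\mA}^{-1}}\le\normone{\mA^{-1}}\normone{\widehat{\mA}-\mA}\normone{\widehat{\mA}^{-1}}$ leaves an uncontrolled, possibly $p$-dependent factor. Your claimed $p$-free $\normone{\cdot}$-Lipschitz constant is therefore not established.

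The missing observation is that the two inversions cancel. Since $\Normalize(\mSigma)^{-1}=\{\tr(\mSigma)/p\}\mSigma^{-1}$, one has $\Normalize(\mA^{-1})^{-1}=c(\mA)\mA$ with $c(\mA)=\tr(\mA^{-1})/p$. Work in the local spectral box, with $\varepsilon_{\mathrm{pd}}$ below the relevant eigenvalue floor so that every $\ProjPD$ acts as the identity. Then $T(z)$ and its empirical version are simply $c(\cdot)$ times the damped matrices $\mA=(1-\eta_\Omega)\mP+\eta_\Omega\widetilde T(z)$ and $\widehat{\mA}$, which share the same current $\mP$. Hence
\[
 \normone{c(\widehat{\mA})\widehat{\mA}-c(\mA)\mA}
 \le c(\widehat{\mA})\normone{\widehat{\mA}-\mA}
 +\abs{c(\widehat{\mA})-c(\mA)}\,\normone{\mA}.
\]
Each piece is controlled as follows.
\begin{itemize}
\item $c(\widehat{\mA})\le C$, since $c(\widehat{\mA})\le 1/\lambda_{\min}(\widehat{\mA})$.
\item $\abs{c(\widehat{\mA})-c(\mA)}=p^{-1}\abs{\tr\{\widehat{\mA}^{-1}(\mA-\widehat{\mA})\mA^{-1}\}}\le C\normone{\widehat{\mA}-\mA}$, because the operator norm of a symmetric matrix is dominated by $\normone{\cdot}$.
\item $\normone{\widehat{\mA}-\mA}=\eta_\Omega\normone{\widehat{\mP}_h(z)-\mP_h(z)}\le C\eta_\Omega\zeta_n$.
\end{itemize}
No inverse ever has to be measured in $\normone{\cdot}$. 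What you still need is a bound on $\normone{\mA}$, essentially on $\normone{\mP^*}$, which is not among the assumptions; row sparsity and bounded eigenvalues give only $\sqrt{s_\Omega}M_\Omega$. Alternatively, use the support and max-norm output of the primal--dual witness argument. That gives $\abs{c(\widehat{\mA})-c(\mA)}\lesssim\sqrt{s_\Omega}(\eta_n^{\mathrm{Ty}}+\rho_{n,p}+\lambda_{\Omega,n})$, so the product is of order $\zeta_n$.

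Two smaller points, both shared with the paper:
\begin{itemize}
\item $\{\widehat m=m_*\}$ and the initialization event hold only with probability $1-o(1)$. So $\Prob(\mathcal E_n)$ is $1-Cp^{-c}-Cn^{-c}-o(1)$, not the polynomial bound you state.
\item Shrinking $r_0$ does not make $e_0\le r/2$ free, because it strengthens Assumption~\ref{ass:init}. You are really assuming the initializer lands in $\cB(r/2)$.
\end{itemize}
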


\begin{corollary}[Canonical rate]
\label{cor:canonical_rate}
Suppose Assumptions~\ref{ass:model}--\ref{ass:init} hold and
\[
 h\asymp \left(\frac{\log n}{n}\right)^{1/5},
 \qquad
 \rho_{n,p}\lesssim \sqrt{\frac{\log p}{n}},
 \qquad
 \lambda_{\Omega,n}\asymp \sqrt{\frac{\log p}{n}}.
\]
Then
\begin{equation}
\label{eq:canonical_rate}
 d(z^{(T_n)},z^*)
 =
 \bigo{\left(\frac{\log n}{n}\right)^{2/5} + (1+s_\Omega)\sqrt{\frac{\log p}{n}}}
 \qquad\text{in probability.}
\end{equation}
\end{corollary}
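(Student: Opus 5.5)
The corollary follows by substituting the stated tuning choices into the one-step error $\varepsilon_{n,h,\zeta}$ of \eqref{eq:eps_def} and invoking \eqref{eq:param_consistency} of Theorem~\ref{thm:main_recursion}. First I check that the hypotheses of Theorem~\ref{thm:main_recursion} hold under the corollary's choices. The bandwidth $h\asymp(\log n/n)^{1/5}$ satisfies $h\downarrow0$, and
\[
nh/\log n\asymp (n/\log n)^{4/5}\to\infty,
\]
so \eqref{eq:tuning_assump} holds. The conditions on $\rho_{n,p}$ and $\lambda_{\Omega,n}$ are exactly those in Assumption~\ref{ass:init}. Consequently, with $T_n\asymp\log(1/\varepsilon_{n,h,\zeta})$, we have $d(z^{(T_n)},z^*)=\bigo{\varepsilon_{n,h,\zeta}}$ in probability. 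One small bookkeeping point: since $\varepsilon_{n,h,\zeta}\gtrsim n^{-1/2}$, we get $T_n\lesssim\log n$, which is the range in which the recursion \eqref{eq:main_recursion} is certified.

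Next I bound each term of $\varepsilon_{n,h,\zeta}$.
\begin{itemize}
\item The smoothing bias is $h^2\asymp(\log n/n)^{2/5}$.
\item The kernel fluctuation is
\[
\sqrt{\log n/(nh)}\asymp\bigl\{(\log n/n)\cdot(n/\log n)^{1/5}\bigr\}^{1/2}=(\log n/n)^{2/5}.
\]
The chosen $h$ is precisely the one balancing these two terms.
\item The mixing and center blocks contribute $\sqrt{\log p/n}$.
\item The precision term is
\[
\zeta_n=s_\Omega(\eta_n^{\mathrm{Ty}}+\rho_{n,p}+\lambda_{\Omega,n})\lesssim 3s_\Omega\sqrt{\log p/n},
\]
using $\eta_n^{\mathrm{Ty}}=\sqrt{\log p/n}$ together with the two tuning bounds.
\end{itemize}
Summing the four contributions gives
\[
\varepsilon_{n,h,\zeta}\lesssim(\log n/n)^{2/5}+(1+s_\Omega)\sqrt{\log p/n},
\]
which is \eqref{eq:canonical_rate}.

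There is no real obstacle here. The only points needing care are that the $O_{\Prob}$ constant is uniform, because $C_1,\dots,C_4$ and $1-\kappa$ depend only on the assumption constants, and that the event of probability $1-Cp^{-c}-Cn^{-c}$ from Theorem~\ref{thm:main_recursion} has probability tending to one. Both are immediate from the earlier statements.
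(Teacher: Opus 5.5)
Your proposal is correct and matches the paper's proof essentially step for step. Like the paper, you show $h^2\asymp\sqrt{\log n/(nh)}\asymp(\log n/n)^{2/5}$ under the stated bandwidth, bound $\zeta_n\lesssim(1+s_\Omega)\sqrt{\log p/n}$ via the tuning conditions, and substitute into \eqref{eq:param_consistency}; your extra checks (that the bandwidth satisfies \eqref{eq:tuning_assump} and that $T_n\asymp\log(1/\varepsilon_{n,h,\zeta})$ is $\lesssim\log n$) are harmless bookkeeping the paper leaves implicit.
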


For the clustering error define the oracle and plug-in scores
\[
 \eta_k^*(\bx)=\log \pi_k^* + \ell^*\bigl(\delta_k^*(\bx)\bigr),
 \qquad
 \widehat\eta_k(\bx)=\log \widehat\pi_k + \log \widehat g\bigl(\widehat\delta_k(\bx)\bigr),
\]
and the oracle margin
\begin{equation}
\label{eq:margin_def}
 \cM^*(\bx)
 =
 \eta_{G^*(\bx)}^*(\bx)-\max_{\ell\neq G^*(\bx)}\eta_\ell^*(\bx).
\end{equation}
The final assumption is a standard margin condition adapted to the multiclass elliptical rule.

\begin{assumption}[Margin condition]
\label{ass:margin}
There exist constants $C_m>0$ and $\alpha\ge 0$ such that, almost surely with respect to the radial variable $R$,
\begin{equation}
\label{eq:margin_assump}
 \Prob\{0<\cM^*(\bm X)\le t\mid R\}\le C_m t^\alpha,
 \qquad t>0.
\end{equation}
In addition, the radial variable satisfies
\begin{equation}
\label{eq:radial_moment}
 \E(1+R)^{\alpha+1}<\infty.
\end{equation}
\end{assumption}

This is a radius-conditional Tsybakov margin condition for the posterior separation between the best and second-best component labels.  It is a mild strengthening of the usual unconditional margin assumption and is convenient here because the plug-in score error is proportional to $(1+R)$ under elliptical symmetry.  Related margin conditions are standard in classification and high-dimensional mixture analysis; see \citet{Tsybakov2004} and \citet{CaiMaZhang2019}.

\begin{theorem}[Excess misclustering error]
\label{thm:risk}
Suppose Assumptions~\ref{ass:model}--\ref{ass:margin} hold. Then
\begin{equation}
\label{eq:risk_rate}
 R(\widehat G)-R(G^*)
 =
 \bigo{\varepsilon_{n,h,\zeta}^{\alpha+1}}
 \qquad\text{in probability.}
\end{equation}
In the Lipschitz-margin case $\alpha=1$,
\begin{equation}
\label{eq:risk_rate_alpha1}
 R(\widehat G)-R(G^*)
 =
 \bigo{\varepsilon_{n,h,\zeta}^{2}}
 \qquad\text{in probability.}
\end{equation}
\end{theorem}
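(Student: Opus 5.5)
We follow the standard plug-in classification argument of \citet{Tsybakov2004} and \citet{CaiMaZhang2019}, adapted to the elliptical score. The plan has three steps: express the excess risk through the oracle margin, bound the score perturbation by $(1+R)$ times the parameter error, and integrate using the radius-conditional margin condition.

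\textbf{Step 1 (excess risk as a margin integral).} Write $p_k^*(\bx)=\Prob(Z=k\mid\bm X=\bx)$. Since $\eta_k^*(\bx)=\log p_k^*(\bx)+\log f^*(\bx)-\tfrac12\log|\mP^*|$, the oracle scores differ from the log-posteriors by a $k$-independent term. The standard identity
\[
R(\widehat G)-R(G^*)=\E\bigl[\{p^*_{G^*(\bm X)}(\bm X)-p^*_{\widehat G(\bm X)}(\bm X)\}\bone\{\widehat G(\bm X)\ne G^*(\bm X)\}\bigr]
\]
holds conditionally on the data used to fit $\widehat G$. On the event $\{\widehat G\ne G^*\}$ we have $\widehat\eta_{\widehat G}\ge\widehat\eta_{G^*}$, and hence
\[
\cM^*(\bm X)\le 2\max_k\abs{\widehat\eta_k(\bm X)-\eta_k^*(\bm X)}=:2\Delta(\bm X).
\]
The posterior gap is bounded by $\cM^*(\bm X)$ times a constant. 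This uses $\abs{e^a-e^b}\le e^{\max(a,b)}\abs{a-b}$ with posteriors at most one; alternatively one works directly on the log scale. Therefore
\[
R(\widehat G)-R(G^*)\lesssim \E\bigl[\cM^*(\bm X)\bone\{0<\cM^*(\bm X)\le 2\Delta(\bm X)\}\bigr].
\]

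\textbf{Step 2 (score perturbation bound).} On the high-probability event of Theorem~\ref{thm:main_recursion}, where $d(\widehat z,z^*)\le C\varepsilon:=C\varepsilon_{n,h,\zeta}$, we show that
\[
\Delta(\bm X)\le C'\varepsilon(1+R)
\]
for $\bm X$ whose radii stay in $[0,U_0]$. The decomposition is
\[
\abs{\widehat\eta_k-\eta^*_k}\le \abs{\log\widehat\pi_k-\log\pi_k^*}+\abs{\widehat\ell(\widehat\delta_k)-\ell^*(\widehat\delta_k)}+\abs{\ell^*(\widehat\delta_k)-\ell^*(\delta_k^*)}.
\]
\begin{itemize}
\item The first term is at most $\varepsilon/c_\pi$.
\item The second term is at most $\varepsilon$ by the sup-norm generator component of $d$.
\item For the third term, the mean value theorem with $(1+u)^{1/2}w^*(u)\le M_w$ gives a bound of order $\abs{\widehat\delta_k-\delta_k^*}/(1+\delta)^{1/2}$.
\end{itemize}
For $\bm X=\bmu_j^*+\mSigma^{*1/2}R\bm U$ we expand $\widehat\delta_k-\delta_k^*$ into the pieces $(\bmu_k^*-\widehat\bmu_k)^\top\widehat\mP(\bx-\bmu_k^*)$ and $(\bx-\bmu_k^*)^\top(\widehat\mP-\mP^*)(\bx-\bmu_k^*)$.

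This step is the main obstacle. The max-norm and $\ell_1$-operator-norm controls in $d$ do not translate into dimension-free bounds on quadratic forms in $\bx$. We plan to use $\normtwo{\mA}\le\normone{\mA}$ for symmetric $\mA$, which handles the precision piece: it is at most $\varepsilon\,\normtwo{\bx-\bmu_k^*}^2\lesssim\varepsilon(R^2+\Delta_*^2)$ times constants. Dividing by $(1+\delta_k)^{1/2}\gtrsim 1+R$ yields order $\varepsilon(1+R)$.

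The center piece needs $\normtwo{\widehat\bmu_k-\bmu_k^*}$ rather than the $\norminf{\cdot}$ bound. Here we would invoke the localized structure of the center error, or accept an implicit sparsity or effective-dimension factor absorbed into $\varepsilon$, as the paper's remark ("the plug-in score error is proportional to $(1+R)$") suggests. Radii beyond $U_0$ are handled by a truncation event $\{R^2>U_0/2\}$ whose contribution is bounded via Markov's inequality and \eqref{eq:radial_moment}, taking $U_0$ large relative to the target rate.

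\textbf{Step 3 (margin integration).} Conditioning on $R$ and applying Assumption~\ref{ass:margin} with $t=2C'\varepsilon(1+R)$ gives
\[
\E[\cdots\mid R]\le t\cdot C_m t^\alpha\lesssim \varepsilon^{\alpha+1}(1+R)^{\alpha+1}.
\]
Taking expectations and using $\E(1+R)^{\alpha+1}<\infty$ yields $R(\widehat G)-R(G^*)=\bigo{\varepsilon^{\alpha+1}}$ on the good event. Since the good event has probability $1-Cp^{-c}-Cn^{-c}\to1$, this gives \eqref{eq:risk_rate}. The case $\alpha=1$ is immediate and gives \eqref{eq:risk_rate_alpha1}.
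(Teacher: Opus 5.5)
Your three-step plan is the same as the paper's proof. The paper also shows $\{\widehat G\ne G^*\}\subseteq\{\cM^*(\bm X)\le 2\max_k|\widehat\eta_k(\bm X)-\eta_k^*(\bm X)|\}$, claims $\max_k|\widehat\eta_k(\bm X)-\eta_k^*(\bm X)|\le C(1+R)\varepsilon_{n,h,\zeta}$ on the event of Theorem~\ref{thm:main_recursion}, and integrates the radius-conditional margin condition against $\E(1+R)^{\alpha+1}$.

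\textbf{The gap in Step 2.} The gap is the one you flag yourself but do not close. The center part of $\widehat\delta_k-\delta_k^*$ is
$-2(\widehat{\bmu}_k-\bmu_k^*)^\top\widehat{\mP}(\bx-\bmu_k^*)+(\widehat{\bmu}_k-\bmu_k^*)^\top\widehat{\mP}(\widehat{\bmu}_k-\bmu_k^*)$.
Bounding it needs $\normtwo{\widehat{\bmu}_k-\bmu_k^*}$, but $d$ and Theorem~\ref{thm:main_recursion} control the centers only in $\norminf{\cdot}$. Nothing in Assumptions~\ref{ass:model}--\ref{ass:margin} supplies the ``localized structure'' or ``effective-dimension factor'' you appeal to.

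In fact the bound cannot follow from $d(\widehat z,z^*)\le C\varepsilon$ alone. Take $\widehat{\mP}=\mP^*$ and $\widehat{\bmu}_k-\bmu_k^*=\varepsilon\bone$. At and near $\bx=\bmu_k^*$, where $R=0$, one gets $\widehat\delta_k-\delta_k^*=\varepsilon^2\bone^\top\mP^*\bone\ge m_\Omega p\varepsilon^2$. Since $w^*\ge m_w$ near $0$, the score error is of order $\min\{p\varepsilon^2,u_0\}$, which is not $O(\varepsilon)$ once $p\varepsilon\to\infty$.

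The paper gets past this point only by asserting $|\widehat\delta_k(\bx)-\delta_k^*(\bx)|\le C(1+\norminf{\bx})^2(\max_u\norminf{\widehat{\bmu}_u-\bmu_u^*}+\normone{\widehat{\mP}-\mP^*})$. That inequality fails on the same example, so the paper does not supply the missing idea either. Closing the step needs an extra ingredient, e.g.\ a rate $\max_k\normtwo{\widehat{\bmu}_k-\bmu_k^*}\lesssim\varepsilon_{n,h,\zeta}$. That would require something like thresholded center updates under sparse center differences, plus a version of the recursion in that norm. Using only $\normtwo{\cdot}\le\sqrt p\,\norminf{\cdot}$ gives the weaker rate $(\sqrt p\,\varepsilon_{n,h,\zeta})^{\alpha+1}$. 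Your treatment of the precision piece via $\normtwo{\mathbf A}\le\normone{\mathbf A}$ is the right fix for that term.

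\textbf{The truncation.} It does not deliver the rate. $U_0$ is the fixed cut-off in $d$, so $\Prob(R^2>U_0/2)$ is a positive constant. Making it $O(\varepsilon^{\alpha+1})$ requires $U_0\gtrsim\varepsilon^{-2}$, and none of the generator-regularity, bias, or concentration results are available for a growing cut-off. Moreover, it is not only the own-component radius $R^2$ that must stay in $[0,U_0]$, but all $K$ radii $\widehat\delta_k(\bx)$ and $\delta_k^*(\bx)$. For competing components and small $R$ these are at least about $\Delta_*^2$, which Theorem~\ref{thm:population} requires to be large. The paper simply applies its $[0,U_0]$ bounds to every $\bx$, so this is unaddressed there as well.

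\textbf{A slip in Step 1.} When $K\ge 3$ the posterior gap is not bounded by a constant times $\cM^*$. For $\eta^*(\bx)=(0,-\gamma,-10)$ with small $\gamma>0$ and $\widehat G=3$, the gap is about $1/2$ while $\cM^*=\gamma$. Your inequality actually gives $p^*_{G^*}-p^*_{\widehat G}\le\eta^*_{G^*}-\eta^*_{\widehat G}$, and this right-hand side is $\ge\cM^*$, not $\le$. Bound it instead by $\eta^*_{G^*}-\eta^*_{\widehat G}\le 2\Delta(\bm X)$, which holds on $\{\widehat G\ne G^*\}$. Step 3 then runs with integrand $2\Delta(\bm X)\le t=2C'\varepsilon(1+R)$ and gives the same rate, provided ties $\{\cM^*=0\}$ have probability zero. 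The paper makes the same slip.
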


Theorem~\ref{thm:risk} translates the high-dimensional parameter rate in
Theorem~\ref{thm:main_recursion} into a clustering-risk rate.  The quantity
$\varepsilon_{n,h,\zeta}$ controls the uniform error of the fitted posterior scores,
whereas the margin exponent $\alpha$ determines how much probability mass lies
near the oracle decision boundary.  Hence a smaller score error produces an
excess misclustering error of order $\varepsilon_{n,h,\zeta}^{\alpha+1}$ rather
than merely $\varepsilon_{n,h,\zeta}$.  In particular, under the canonical tuning
in Corollary~\ref{cor:canonical_rate},
\[
 R(\widehat G)-R(G^*)
 =
 \bigo{
 \left[
 \left(\frac{\log n}{n}\right)^{2/5}
 +(1+s_\Omega)\sqrt{\frac{\log p}{n}}
 \right]^{\alpha+1}
 },
\]
and this becomes a quadratic rate in the common Lipschitz-margin case
$\alpha=1$.

\section{Simulation studies}
\label{sec:simulations}
We examine the finite-sample performance of the proposed semiparametric GEM procedure under simulation settings aligned with the high-dimensional elliptical model considered in the paper.
The simulations have two purposes.  
First, fixing the true number of clusters at $K=3$, we compare the clustering performance of GEM with several high-dimensional and robust competitors. 
Second, we evaluate the gap-LSE rule for selecting the number of clusters when $K$ is unknown.  Unless stated otherwise, the sample size is $n=300$, and we consider two ambient dimensions, $p=100$ and $p=200$.  The common scatter matrix is the autoregressive matrix
\[
  \mSigma^*_{ab}=0.5^{|a-b|},
  \qquad 1\le a,b\le p.
\]
The component probabilities are equal.  The centers differ only in the first six coordinates:
\[
\begin{aligned}
 \bmu_1^*&=(\delta,\delta,\delta,0,0,0,0,\ldots,0)^\top,\\
 \bmu_2^*&=(-\delta,0,0,\delta,\delta,0,0,\ldots,0)^\top,\\
 \bmu_3^*&=(0,-\delta,\delta,-\delta,0,\delta,0,\ldots,0)^\top,
\end{aligned}
\]
with $\delta=1.5$.  Thus the clustering signal is sparse in the coordinate basis, whereas the within-cluster shape is correlated through $\mSigma^*$.

Let $\mS$ be the symmetric square root of $\mSigma^*$, so that $\mSigma^*=\mS\mS^\top$, and let $\bm U$ be uniformly distributed on the unit sphere $\Sphere$.  Each component is generated in the elliptical form
\[
 \bm X=\bmu_k^*+R\mS\bm U,
 \qquad k=1,2,3,
\]
where $R\ge 0$ is independent of $\bm U$.  To match the simulation design exactly, let $Q\sim\chi_p^2$, $G\sim\chi^2_{\nu_t}$, $E\sim\operatorname{Exp}(1)$, and $V\sim\operatorname{Unif}(0,1)$, all mutually independent and also independent of $\bm U$.  The four radial laws are then
\[
 R_{\mathrm{Gauss}}=Q^{1/2},
 \qquad
 R_{t}=\left\{\frac{(\nu_t-2)Q}{G}\right\}^{1/2}
 \quad\text{with }\nu_t=5,
\]
\[
 R_{\mathrm{Lap}}=E^{1/2}Q^{1/2},
 \qquad
 R_{\mathrm{Slash}}=\left(\frac{\nu_s-2}{\nu_s}\right)^{1/2}V^{-1/\nu_s}Q^{1/2}
 \quad\text{with }\nu_s=4.
\]
Accordingly, the four distributional settings reported below are Gaussian, $t_5$, Laplace, and slash with degree parameter $4$.  The multiplicative constants in the $t$ and slash cases are chosen so that the covariance matrix, whenever it exists, equals $\mSigma^*$.  This design separates the effect of tail behavior from the effect of the common shape matrix.  The oracle procedure uses the true centers and true scatter matrix and is included only as an infeasible benchmark.

We compare seven methods: oracle classification, $K$-means, sparse $K$-means, $K$-median, sparse $K$-median, CHIME, and the proposed semiparametric GEM procedure.  Sparse $K$-means is computed using the permutation selector of \citet{WittenTibshirani2010}.  Sparse $K$-median uses the robust sparse median initialization described in Section~\ref{subsec:empirical_gem}, with its sparsity threshold selected by a permutation gap criterion. 
CHIME is run in its multiclass form with a Gaussian common-covariance working model, following the high-dimensional EM principle of \citet{CaiMaZhang2019}.  The proposed GEM procedure uses the same sparse $K$-median initialization, followed by the kernel generator update, the radial-score center update, and the TME--POET--GLASSO common-precision update described in Section~\ref{subsec:empirical_gem}.  For the proposed method we use three outer random starts, a maximum of $25$ outer GEM iterations, damping parameters $\eta_\mu=\eta_\Omega=0.7$, and the eigenvalue-ratio upper bound $8$ for the factor count in the POET steps.  All numbers are averages over $100$ Monte Carlo replications.  For compact presentation, the tables use the abbreviations \emph{Sparse KM} for sparse $K$-means and \emph{Sparse KMed} for sparse $K$-median.  In the simulation and appendix tables, boldface marks the best feasible (nonoracle) entry within each row.

Let $c_i^0\in\{1,\ldots,K\}$ denote the true class label and let $\widehat c_i$ denote the estimated cluster label.  Since cluster labels are identifiable only up to permutation, we report the label-invariant clustering accuracy
\[
  \operatorname{Acc}(\widehat{\bm c},\bm c^0)
  =
  \max_{\sigma\in\mathfrak S_K}
  \frac1n\sum_{i=1}^n
  \mathbf 1\{\sigma(\widehat c_i)=c_i^0\},
\]
where $\mathfrak S_K$ is the set of all permutations of $\{1,\ldots,K\}$.  We also report the adjusted Rand index.  Let
\[
  N_{ab}=\sum_{i=1}^n \mathbf 1\{\widehat c_i=a,\ c_i^0=b\},
  \qquad
  N_{a\cdot}=\sum_{b=1}^K N_{ab},
  \qquad
  N_{\cdot b}=\sum_{a=1}^K N_{ab}.
\]
The adjusted Rand index is
\[
\operatorname{ARI}
=
\frac{
\displaystyle
\sum_{a=1}^K\sum_{b=1}^K \binom{N_{ab}}{2}
-
\frac{
\displaystyle \sum_{a=1}^K \binom{N_{a\cdot}}{2}
\sum_{b=1}^K \binom{N_{\cdot b}}{2}}
{\binom{n}{2}}
}{
\displaystyle
\frac12\left\{
\sum_{a=1}^K \binom{N_{a\cdot}}{2}
+
\sum_{b=1}^K \binom{N_{\cdot b}}{2}
\right\}
-
\frac{
\displaystyle \sum_{a=1}^K \binom{N_{a\cdot}}{2}
\sum_{b=1}^K \binom{N_{\cdot b}}{2}}
{\binom{n}{2}}
}.
\]
Accuracy measures the fraction of correctly assigned observations after the best relabeling.  The adjusted Rand index measures pairwise agreement after correcting for chance, and is also invariant to label permutations.

\subsection{Clustering performance with \texorpdfstring{$K$}{K} fixed at the truth}\label{subsec:simulation_fixed_k}

Table~\ref{tab:simulation_accuracy} reports the average accuracy and Table~\ref{tab:simulation_ari} reports the average adjusted Rand index.  The oracle benchmark is uniformly strongest, as expected.  Among feasible procedures, the proposed GEM method attains the largest average ARI in all eight design points and the largest average accuracy in seven of the eight design points.  The only exception is the Gaussian case with $p=200$, where sparse $K$-means has a slightly higher average accuracy, $0.927$ versus $0.925$, while GEM still has the larger average ARI, $0.812$ versus $0.795$.

\begin{table}[H]
\centering
\caption{Clustering accuracy under the main simulation design.}
\label{tab:simulation_accuracy}
\scriptsize
\setlength{\tabcolsep}{3.2pt}
\begin{tabular}{lccccccc}
\toprule
\multicolumn{8}{c}{\textit{$p=100$}}\\
\midrule
Distribution & Oracle & $K$-means & Sparse KM & $K$-median & Sparse KMed & CHIME & GEM \\
\midrule
Gaussian & 0.978 & 0.917 & 0.924 & 0.700 & 0.921 & 0.934 & \textbf{0.960} \\
$t_5$ & 0.974 & 0.741 & 0.745 & 0.841 & 0.932 & 0.744 & \textbf{0.966} \\
Laplace & 0.969 & 0.878 & 0.918 & 0.756 & 0.938 & 0.878 & \textbf{0.963} \\
Slash & 0.979 & 0.923 & 0.925 & 0.742 & 0.936 & 0.934 & \textbf{0.966} \\
\midrule
\multicolumn{8}{c}{\textit{$p=200$}}\\
\midrule
Distribution & Oracle & $K$-means & Sparse KM & $K$-median & Sparse KMed & CHIME & GEM \\
\midrule
Gaussian & 0.979 & 0.880 & \textbf{0.927} & 0.500 & 0.903 & 0.879 & 0.925 \\
$t_5$ & 0.973 & 0.464 & 0.485 & 0.513 & 0.904 & 0.462 & \textbf{0.962} \\
Laplace & 0.968 & 0.547 & 0.701 & 0.514 & 0.930 & 0.568 & \textbf{0.962} \\
Slash & 0.980 & 0.863 & 0.898 & 0.508 & 0.880 & 0.862 & \textbf{0.945} \\
\bottomrule
\end{tabular}
\end{table}

\begin{table}[H]
\centering
\caption{Adjusted Rand index under the main simulation design.}
\label{tab:simulation_ari}
\scriptsize
\setlength{\tabcolsep}{3.2pt}
\begin{tabular}{lccccccc}
\toprule
\multicolumn{8}{c}{\textit{$p=100$}}\\
\midrule
Distribution & Oracle & $K$-means & Sparse KM & $K$-median & Sparse KMed & CHIME & GEM \\
\midrule
Gaussian & 0.934 & 0.769 & 0.786 & 0.345 & 0.788 & 0.815 & \textbf{0.885} \\
$t_5$ & 0.924 & 0.580 & 0.580 & 0.646 & 0.815 & 0.592 & \textbf{0.900} \\
Laplace & 0.909 & 0.736 & 0.788 & 0.570 & 0.823 & 0.744 & \textbf{0.892} \\
Slash & 0.939 & 0.790 & 0.797 & 0.412 & 0.820 & 0.824 & \textbf{0.900} \\
\midrule
\multicolumn{8}{c}{\textit{$p=200$}}\\
\midrule
Distribution & Oracle & $K$-means & Sparse KM & $K$-median & Sparse KMed & CHIME & GEM \\
\midrule
Gaussian & 0.939 & 0.683 & 0.795 & 0.088 & 0.754 & 0.681 & \textbf{0.812} \\
$t_5$ & 0.921 & 0.192 & 0.220 & 0.162 & 0.765 & 0.189 & \textbf{0.889} \\
Laplace & 0.907 & 0.306 & 0.506 & 0.205 & 0.808 & 0.335 & \textbf{0.888} \\
Slash & 0.940 & 0.667 & 0.759 & 0.100 & 0.725 & 0.666 & \textbf{0.852} \\
\bottomrule
\end{tabular}
\end{table}

The advantage of GEM is most pronounced in the heavy-tailed cases.  For the $t_5$ mixture with $p=200$, GEM increases the average accuracy from $0.462$ for CHIME to $0.962$, and increases the average ARI from $0.189$ to $0.889$.  For the Laplace mixture with $p=200$, the corresponding improvements are from $0.568$ to $0.962$ in accuracy and from $0.335$ to $0.888$ in ARI.  These comparisons are consistent with the purpose of the proposed method: the posterior update is not tied to a Gaussian radial density, and the common precision update is based on Tyler-type elliptical shape estimation rather than on sample covariance estimation.

The comparison with sparse $K$-median separates robust coordinatewise initialization from the full semiparametric elliptical update.  Sparse $K$-median performs well in several heavy-tailed cases, but it does not use the common elliptical shape.  In the slash design with $p=200$, for example, GEM improves the average accuracy from $0.880$ to $0.945$ and the average ARI from $0.725$ to $0.852$.  Thus the TME--POET--GLASSO shape update and the nonparametric radial generator both contribute beyond robust feature screening alone.  In the Gaussian setting, simpler Euclidean and sparse Euclidean methods remain competitive, but GEM remains stable and close to the oracle benchmark.  Overall, the simulation supports the theoretical message of the paper: when the data are high dimensional, elliptically distributed, and potentially heavy tailed, estimating the radial generator and the common sparse precision shape can materially improve clustering accuracy.  Appendix~\ref{app:add_sim} reports two additional sensitivity analyses, one under a dense compound-symmetric common scatter matrix and one under a dense-mean design, both of which complement the main sparse-mean experiments.

\subsection{Selection of the number of clusters}
\label{subsec:simulation_gap_k}

We next examine the finite-sample behavior of the cluster-number selector in Section~\ref{subsec:k_selection}.  The simulation setting is the same as above except that $K$ is treated as unknown and the proposed GEM algorithm is fitted over the candidate grid
\[
  \cK=\{2,3,4,5\}.
\]
The true number of clusters is $K_0=3$, the sample size is $n=300$, the common AR(1) correlation parameter is $0.5$, and the signal strength is $\delta=1.5$.  We report results for two ambient dimensions, $p=100$ and $p=200$.  For each candidate $K$, the gap statistic is computed with $B=20$ coordinate-wise permutation reference samples.  We compare the one-standard-error selector $\widehat K_{\mathrm{LSE}}$ with the maximum-gap selector $\widehat K_{\max}$.  The experiment is repeated for the Gaussian, $t_5$, Laplace, and slash radial laws, using $100$ Monte Carlo replications for each distribution and dimension.

\begin{table}[H]
\centering
\caption{Gap-based selection of the number of clusters under the main simulation design.}
\label{tab:gap_k_selection}
\scriptsize
\setlength{\tabcolsep}{3.3pt}
\begin{tabular}{llccccccc}
\toprule
Distribution & Rule & $\widehat K=2$ & $\widehat K=3$ & $\widehat K=4$ & $\widehat K=5$ & True-$K$ rate & Accuracy & ARI \\
\midrule
\multicolumn{9}{c}{\textit{$p=100$}}\\
\midrule
\multirow{2}{*}{Gaussian} & Gap-LSE & 4 & 93 & 3 & 0 & \textbf{0.93} & 0.9516 & 0.8772 \\
 & Gap-max & 0 & 81 & 14 & 5 & 0.81 & 0.9615 & 0.8641 \\
\addlinespace[2pt]

\multirow{2}{*}{$t_5$} & Gap-LSE & 0 & 100 & 0 & 0 & \textbf{1.00} & 0.9656 & 0.8998 \\
 & Gap-max & 0 & 97 & 2 & 1 & 0.97 & 0.9657 & 0.8964 \\
\addlinespace[2pt]

\multirow{2}{*}{Laplace} & Gap-LSE & 0 & 100 & 0 & 0 & \textbf{1.00} & 0.9632 & 0.8930 \\
 & Gap-max & 0 & 99 & 0 & 1 & 0.99 & 0.9631 & 0.8921 \\
\addlinespace[2pt]

\multirow{2}{*}{Slash} & Gap-LSE & 1 & 98 & 1 & 0 & \textbf{0.98} & 0.9642 & 0.8997 \\
 & Gap-max & 0 & 94 & 4 & 2 & 0.94 & 0.9674 & 0.8953 \\
\midrule

\multicolumn{9}{c}{\textit{$p=200$}}\\
\midrule
\multirow{2}{*}{Gaussian} & Gap-LSE & 16 & 70 & 13 & 1 & \textbf{0.70} & 0.8912 & 0.7659 \\
 & Gap-max & 0 & 41 & 33 & 26 & 0.41 & 0.9358 & 0.7460 \\
\addlinespace[2pt]

\multirow{2}{*}{$t_5$} & Gap-LSE & 7 & 89 & 4 & 0 & \textbf{0.89} & 0.9375 & 0.8528 \\
 & Gap-max & 0 & 84 & 14 & 2 & 0.84 & 0.9609 & 0.8695 \\
\addlinespace[2pt]

\multirow{2}{*}{Laplace} & Gap-LSE & 0 & 97 & 3 & 0 & \textbf{0.97} & 0.9615 & 0.8881 \\
 & Gap-max & 0 & 92 & 4 & 4 & 0.92 & 0.9609 & 0.8837 \\
\addlinespace[2pt]

\multirow{2}{*}{Slash} & Gap-LSE & 16 & 80 & 4 & 0 & \textbf{0.80} & 0.9057 & 0.8094 \\
 & Gap-max & 2 & 78 & 18 & 2 & 0.78 & 0.9562 & 0.8618 \\
\bottomrule
\end{tabular}
\end{table}

Table~\ref{tab:gap_k_selection} shows that the gap-LSE rule is consistently more reliable for estimating $K_0$ than the maximum-gap rule.  When $p=100$, gap-LSE selects the true number of clusters in $93\%$, $100\%$, $100\%$, and $98\%$ of the replications for the Gaussian, $t_5$, Laplace, and slash designs, respectively.  When $p=200$, the selection problem becomes more difficult, but gap-LSE still attains true-$K$ rates of $70\%$, $89\%$, $97\%$, and $80\%$ across the same four radial laws.  These rates are uniformly higher than those of the maximum-gap rule, which obtains $41\%$, $84\%$, $92\%$, and $78\%$ at $p=200$.

The difference between the two rules is mainly due to the behavior of the maximum-gap rule in high dimension.  At $p=200$, the maximum-gap rule selects $K=4$ or $K=5$ in $59$ Gaussian replications, $16$ $t_5$ replications, $8$ Laplace replications, and $20$ slash replications.  The one-standard-error rule reduces this tendency to over-split by selecting the smallest candidate whose gap is statistically indistinguishable from the next larger candidate.  This conservative bias can occasionally lead to under-selection, most visibly in the Gaussian and slash designs at $p=200$, where $\widehat K_{\mathrm{LSE}}=2$ occurs in $16$ out of $100$ replications.  Nevertheless, because the goal of this subsection is cluster-number recovery rather than maximizing relabeled classification accuracy, the true-$K$ rate is the primary criterion.  The results therefore support using $\widehat K_{\mathrm{LSE}}$ as the default data-driven selector in the proposed GEM procedure.

\section{Real-data analysis}
\label{sec:realdata}

We complement the Monte Carlo evidence with an unsupervised analysis of the optical recognition of handwritten digits data set from the UCI Machine Learning Repository \citep{DuaGraff2019UCI}.  The data set consists of the standard training and test files, with $3823$ training observations and $1797$ test observations.  Each observation is a $64$-dimensional vector obtained by partitioning a normalized $32\times 32$ bitmap into $4\times 4$ nonoverlapping blocks and recording the number of active pixels in each block.  Thus each digit image is represented as an $8\times 8$ array of block counts.  We combine the training and test files, obtaining $n=5620$ observations in dimension $p=64$.  The labels, which take values in $\{0,1,\ldots,9\}$, are used only for performance evaluation.  The ten class sizes range from $554$ to $572$, and there are no missing values.

All variables are standardized by their sample means and sample standard deviations over the combined sample.  We compare the six data-driven methods used in the simulations: $K$-means, sparse $K$-means, $K$-median, sparse $K$-median, CHIME, and GEM. As in Section~\ref{sec:simulations}, the tables abbreviate sparse $K$-means and sparse $K$-median by \emph{Sparse KM} and \emph{Sparse KMed}.  Boldface marks the best entry in each row.  We consider three complementary protocols.  First, we cluster the full data set into $K=10$ groups.  Second, for each of the $\binom{10}{2}=45$ unordered digit pairs, we retain only observations from the corresponding two classes and cluster the resulting subsample into $K=2$ groups. Third, for each of the $\binom{10}{3}=120$ unordered digit triplets, we retain only observations from the corresponding three classes and cluster the resulting subsample into $K=3$ groups.  For the pairwise and triplet protocols, Table~\ref{tab:optdigits} reports averages and standard deviations across the collection of subproblems. Accuracy is computed after optimal relabeling, and ARI denotes the adjusted Rand index.

\begin{table}[H]
	\centering
	\caption{Clustering results for the Optdigits data.}
	\label{tab:optdigits}
	\scriptsize
	\setlength{\tabcolsep}{3.0pt}
	\resizebox{\textwidth}{!}{%
		\begin{tabular}{llcccccc}
			\toprule
			Protocol & Metric & $K$-means & Sparse KM & $K$-median & Sparse KMed & CHIME & GEM \\
			\midrule
			\multirow{2}{*}{Full $K=10$} 
			& Accuracy
			& 0.6399 & 0.5692 & \textbf{0.7740} & 0.7062 & 0.6230 & 0.7512 \\
			& ARI
			& 0.4910 & 0.3846 & \textbf{0.6369} & 0.5559 & 0.4601 & 0.6304 \\
			\addlinespace[2pt]
			
			\multirow{2}{*}{Pairwise $K=2$} 
			& Accuracy
			& 0.9556 (0.0781) & 0.9321 (0.1116) & 0.9567 (0.0720)
			& 0.9667 (0.0360) & 0.9663 (0.0873) & \textbf{0.9785 (0.0453)} \\
			& ARI
			& 0.8539 (0.2123) & 0.7954 (0.2815) & 0.8544 (0.1990)
			& 0.8762 (0.1223) & 0.8996 (0.2309) & \textbf{0.9239 (0.1394)} \\
			\addlinespace[2pt]
			
			\multirow{2}{*}{Triplet $K=3$} 
			& Accuracy
			& 0.8848 (0.1323) & 0.8570 (0.1513) & 0.9175 (0.0925)
			& 0.8631 (0.1513) & 0.9032 (0.1356) & \textbf{0.9447 (0.0854)} \\
			& ARI
			& 0.7729 (0.2010) & 0.7177 (0.2406) & 0.8062 (0.1659)
			& 0.7324 (0.2127) & 0.8242 (0.2112) & \textbf{0.8695 (0.1577)} \\
			\bottomrule
		\end{tabular}%
	}
\end{table}

The full ten-class problem is considerably more difficult than the reduced two- and three-class problems.  In the full-data analysis, $K$-median attains the largest accuracy and ARI, while GEM is a close second with accuracy $0.7512$ and ARI $0.6304$. The comparison changes in the reduced-subset analyses. Over the $45$ pairwise problems, GEM attains the highest average accuracy and ARI, with values $0.9785$ and $0.9239$, respectively. Over the $120$ triplet problems, GEM again ranks first, with average accuracy $0.9447$ and average ARI $0.8695$.  These results suggest that the semiparametric elliptical update is particularly effective once the clustering task is not dominated by simultaneous separation of many visually similar digit classes.

We also examine whether a Gaussian working model is adequate within each digit class.  For digit class $k\in\{0,\ldots,9\}$ and coordinate $j$, let
\[
  z_{ij}^{(k)}
  =
  \frac{x_{ij}-\bar x_{k,j}}{s_{k,j}},
\]
where $\bar x_{k,j}$ and $s_{k,j}$ are the within-class sample mean and sample standard deviation of feature $j$.  For each class $k$, we pool the standardized entries $z_{ij}^{(k)}$ across observations and coordinates, and compare their empirical quantiles with standard normal quantiles.  This class-wise diagnostic is more informative than a single pooled diagnostic, because the full data set is a ten-component mixture and is not expected to resemble a single Gaussian population.

\begin{figure}[H]
\centering
\includegraphics[width=0.98\textwidth]{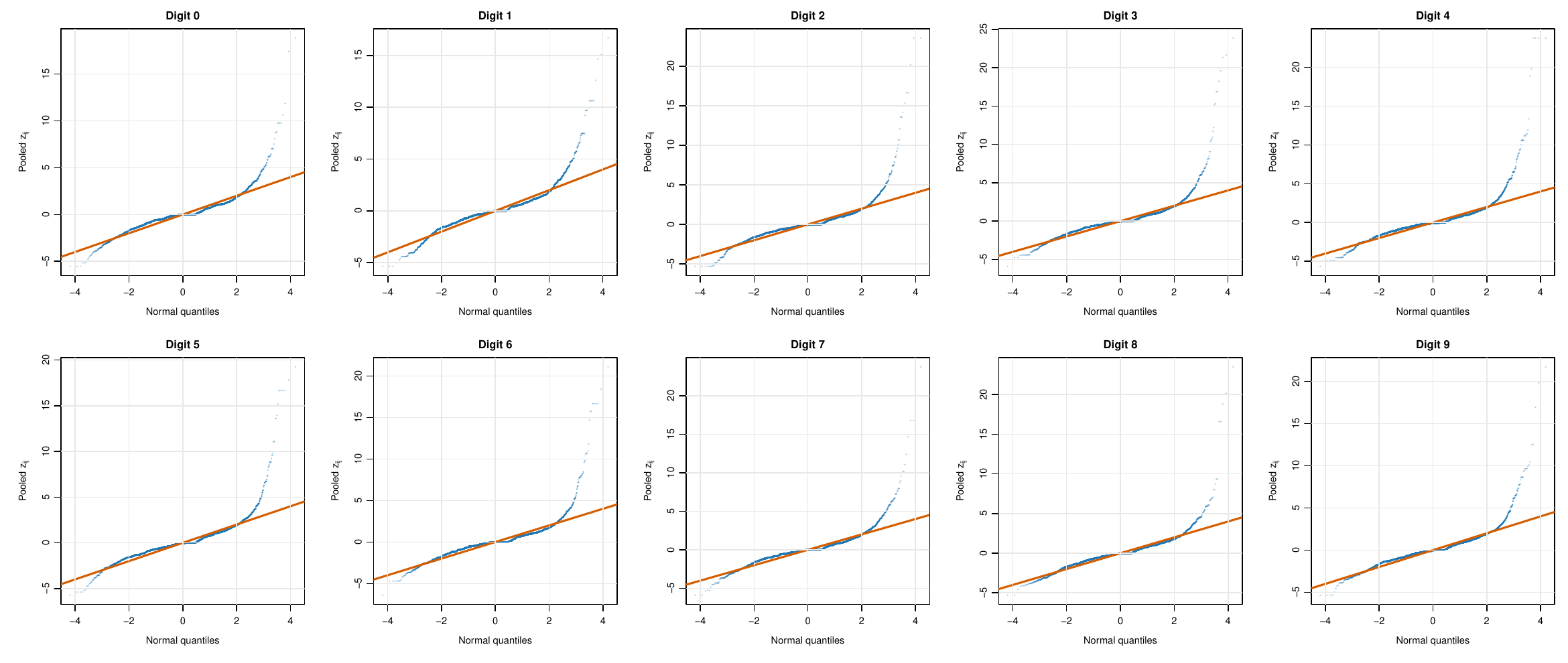}
\caption{Class-wise pooled QQ plots for the Optdigits data.  For each digit class $g$, the standardized entries $z_{ij}^{(k)}=(x_{ij}-\bar x_{k,j})/s_{k,j}$ are pooled across observations and coordinates and compared with standard normal quantiles.}
\label{fig:optdigits-classwise-zij-qq}
\end{figure}

Figure~\ref{fig:optdigits-classwise-zij-qq} shows systematic but moderate right-tail deviations from Gaussianity in most classes.  The ratio between the empirical $0.99$ quantile of $|z_{ij}^{(k)}|$ and its Gaussian counterpart ranges from $1.07$ to $1.23$, with median approximately $1.15$ across the ten classes.  Thus the class-wise marginal distributions are not extremely heavy tailed, but they are also not well described by a purely Gaussian working model.  This pattern is consistent with the empirical comparison in Table~\ref{tab:optdigits}: robust location procedures are competitive, and GEM gives the strongest average performance in the pairwise and triplet analyses while remaining close to the best method on the full ten-class problem.

\section{Conclusion}
\label{sec:discussion}
This paper develops a semiparametric GEM procedure for high-dimensional clustering under heavy-tailed elliptical mixture models.  The method combines a nonparametric update of the common radial generator with robust TME--POET--GLASSO estimation of a common sparse precision-shape matrix, and it is equipped with high-dimensional consistency theory for parameter estimation and plug-in clustering.  Simulations and the Optdigits analysis show that the proposed procedure is especially effective when the data are non-Gaussian but retain approximate elliptical geometry.

Several extensions are natural.  The most direct one is to relax the common-shape assumption and allow component-specific shape matrices $\mSigma_k$ or sparse precision matrices $\mP_k$.  A statistically stable version would likely require a joint graphical penalty, shared support restrictions, or low-rank-plus-sparse coupling across clusters, because unrestricted component-specific shapes are difficult to identify and estimate when $p$ is large.  A second direction is to allow component-specific generators $g_k$, which would accommodate clusters with different tail behavior; however, this extension requires additional identifiability conditions separating radial variation from shape variation, since changes in $g_k$ and rescalings of $\mSigma_k$ can otherwise produce observationally similar densities.  A third direction is to develop a full theory for data-driven cluster-number selection under the proposed gap-LSE rule and to compare it with semiparametric information criteria.  Related tuning questions include adaptive bandwidth selection, factor-rank selection, and graph-penalty selection under heavy-tailed elliptical sampling.  Finally, minimax lower bounds for high-dimensional semiparametric elliptical mixtures remain open and would clarify whether the rates obtained here are optimal.

\appendix
\section{Proofs of the main results}
\label{sec:proofs}
\subsection{Preliminary lemmas}

\begin{lemma}[Bernstein inequality for bounded empirical classes]
\label{lem:bernstein}
Let $\cF$ be a VC-type class with entropy bound \eqref{eq:entropy_assump}, envelope $F$, and variance bound $\sup_{f\in\cF}\Var\{f(\bm X)\}\le \sigma^2$. If $\|F\|_\infty\le B$, then there exist constants $C,c>0$ such that for every $t>0$,
\begin{equation}
\label{eq:bernstein_lemma}
 \Prob\Biggl[
 \sup_{f\in\cF}
 \abs{\frac1n\sum_{i=1}^n\{f(\bm X_i)-\E f(\bm X)\}}
 > C\Biggl\{\sigma\sqrt{\frac{\log p+t}{n}} + B\frac{\log p+t}{n}\Biggr\}
 \Biggr]
 \le e^{-ct}.
\end{equation}
\end{lemma}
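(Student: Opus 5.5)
The plan is the standard two-stage route: bound the expected supremum by symmetrization and chaining, then add a deviation term via Talagrand's concentration inequality for bounded empirical processes (Bousquet's form). Write $Z=\sup_{f\in\cF}\abs{n^{-1}\sum_{i=1}^n\{f(\bm X_i)-\E f(\bm X)\}}$. Replacing $\cF$ by the centered class $\{f-\E f\}$ keeps the envelope at most $2B$ and the variance at most $\sigma^2$. The entropy bound \eqref{eq:entropy_assump} survives this shift up to constants, because centering only adds a constant function, so the bound changes by at most an additive term of order $\log(1/\varepsilon)$.

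\emph{Step 1 (expectation).} By symmetrization, $\E Z\le 2\E\sup_f\abs{n^{-1}\sum_i\epsilon_i f(\bm X_i)}$. I would then apply the maximal inequality for VC-type classes (e.g.\ Chernozhukov--Chetverikov--Kato, or Gin\'e--Guillou), which follows from Dudley's entropy integral together with a bound on the random $L_2(P_n)$ diameter. With entropy $A_0+v_0\log(1/\varepsilon)+C_0\log p$, this gives
\[
\E Z\lesssim \sigma\sqrt{\frac{v\log(AB/\sigma)+\log p}{n}}+\frac{B\{v\log(AB/\sigma)+\log p\}}{n}.
\]
Here $v$ and $A$ absorb $v_0$, $A_0$ and $C_0$. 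To reach the displayed form, the $\log(B/\sigma)$ factor has to be absorbed into $\log p$. I would require $\sigma\ge B p^{-c'}$, or else simply note that under the applications' scaling $\log(B/\sigma)\lesssim\log p$. This is the step I expect to be the main obstacle: the statement as written hides this logarithm. A clean alternative is to take $\sigma$ to be at least a fixed multiple of $B/\sqrt{n}$, in which case the logarithm is at most $\log n$. I would state the bound with $\log p$ absorbing $\log(B/\sigma)$ under the standing regime $\log n\lesssim\log p$, or else enlarge $C$ accordingly.

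\emph{Step 2 (deviation).} Apply Bousquet's inequality: with probability at least $1-e^{-t}$,
\[
Z\le \E Z+\sqrt{\frac{2t(\sigma^2+4B\E Z)}{n}}+\frac{Bt}{3n}.
\]
The cross term satisfies $\sqrt{B\E Z\,t/n}\le \E Z+Bt/n$ by AM--GM. Hence $Z\le 2\E Z+C\{\sigma\sqrt{t/n}+Bt/n\}$.

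\emph{Step 3 (combine).} Insert the bound from Step 1 and use $\sqrt{a}+\sqrt{b}\le\sqrt{2(a+b)}$. This gives $Z\le C\{\sigma\sqrt{(\log p+t)/n}+B(\log p+t)/n\}$ with probability at least $1-e^{-t}$, which is \eqref{eq:bernstein_lemma} with $c=1$; a smaller $c$ is allowed after adjusting $C$. Measurability of the supremum is handled by the usual pointwise-measurability convention for the parametric classes indexed by $\cB(r_0)$: take a countable dense subset of the index set, over which the supremum is attained by continuity in $z$.
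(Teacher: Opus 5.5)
Your route is exactly the argument the paper invokes by citing Bousquet and van der Vaart--Wellner: symmetrization plus the VC-type maximal inequality for $\E Z$ on the centered class (centering is what lets the variance, rather than the second moment, enter), then Bousquet's inequality with AM--GM on the cross term. These steps are sound.

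Your worry about the hidden $\log(B/\sigma)$ is justified, and it is a defect of the statement rather than of your proof. Take $\cF_\delta=\{\bone_{(a,a+\delta]}:0\le a\le 1-\delta\}$ under the uniform law. It is a subclass of a fixed VC class, so the entropy constants do not depend on $\delta$; here $B=1$ and $\sigma^2\asymp\delta$. The supremum is then of order $\sqrt{\delta\log(1/\delta)/n}$ with probability tending to one. With $\delta=n^{-1/2}$ and $t$ fixed, \eqref{eq:bernstein_lemma} therefore fails for any $C,c$ depending only on $(A_0,v_0,C_0)$.

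Hence ``enlarge $C$ accordingly'' is not an admissible fix, because $C$ would have to grow with $n$. Commit instead to an explicit version. One option is to replace $\sigma$ by $\sigma\vee B/\sqrt n$, which costs an extra $B\sqrt{\log p+t}/n$ absorbed by the second term, and state the bound with $\log p$ replaced by $\log(pn)$. The other is to assume $\log(B/\sigma)\lesssim\log p$.

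Nothing is lost in the paper's uses. For the bounded blocks you may simply take $\sigma=B$. For the kernel block, $\log(B/\sigma)\lesssim\log(1/h)\le\log n$, which is absorbed by the choice $t\asymp\log n$ made there.
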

The inequality is a standard consequence of Bousquet's version of Talagrand's inequality for suprema of bounded empirical processes combined with VC-type entropy bounds; see \citet{Bousquet2002Bennett} and \citet[Sections~2.2 and~2.6]{VanderVaartWellner1996}.

\begin{lemma}[Elliptical geometry inside the local ball]
\label{lem:elliptic_geometry}
Suppose Assumptions~\ref{ass:model}--\ref{ass:local} hold. Then there exists $C<\infty$ such that for every $z\in\cB(r_0)$ and every $1\le k\le K$,
\begin{equation}
\label{eq:delta_vs_R}
 C^{-1}(1+R^2)
 \le
 1+\delta_k(\bm X;z)
 \le
 C(1+R^2)
 \qquad\text{almost surely,}
\end{equation}
whenever $Z=k$. Consequently,
\begin{equation}
\label{eq:score_times_residual}
 w_z\bigl(\delta_k(\bm X;z)\bigr)\,\norminf{\bm X-\bmu_k}
 \le C,
\end{equation}
and for every $\mSigma\in\cC_\Sigma$,
\begin{equation}
\label{eq:normalized_outer_bound}
 \normmax{\frac{(\bm X-\bmu_k)(\bm X-\bmu_k)^\top}{(\bm X-\bmu_k)^\top \mSigma^{-1}(\bm X-\bmu_k)}}
 \le C.
\end{equation}
\end{lemma}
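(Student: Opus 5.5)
The plan is to work on the event $\{Z=k\}$ with the stochastic representation of Assumption~\ref{ass:model}, writing
\[
 \bm X-\bmu_k=\bm a_k+R\bm V,\qquad \bm a_k=\bmu_k^*-\bmu_k,\qquad \bm V=\mSigma^{*1/2}\bm U,
\]
so that $\normtwo{\bm V}^2=\bm U^\top\mSigma^*\bm U\in[M_\Omega^{-1},m_\Omega^{-1}]$ almost surely. All three claims then reduce to elementary eigenvalue bounds, using $\lambda(\mP)\in[m_\Omega/2,2M_\Omega]$ from Assumption~\ref{ass:local}.

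\textbf{Step 1: the two-sided bound \eqref{eq:delta_vs_R}.} Set $A=\normtwo{\mP^{1/2}\bm a_k}$, so that $\delta_k(\bm X;z)=\normtwo{\mP^{1/2}\bm a_k+R\,\mP^{1/2}\bm V}^2$.

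For the upper bound, I would use $\normtwo{x+y}^2\le 2\normtwo{x}^2+2\normtwo{y}^2$:
\[
 \delta_k\le 2A^2+4M_\Omega m_\Omega^{-1}R^2\le C(1+R^2).
\]

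For the lower bound, the triangle inequality gives
\[
 \delta_k^{1/2}\ge cR-A,\qquad c=(m_\Omega/(2M_\Omega))^{1/2}.
\]
I would then split into two cases.
\begin{itemize}
\item If $R\ge 2A/c$, then $\delta_k\ge c^2R^2/4$, hence $1+\delta_k\ge \min(1,c^2/4)(1+R^2)$.
\item Otherwise $R<2A/c$, so $1+\delta_k\ge 1\ge (1+R^2)/(1+4A^2/c^2)$.
\end{itemize}
In both cases $C^{-1}(1+R^2)\le 1+\delta_k$ with $C$ depending only on $A$ and the eigenvalue constants.

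\textbf{Main obstacle.} Everything hinges on a dimension-free bound for $A$, the Mahalanobis displacement of the current center. The ball $\cB(r_0)$ controls $\bmu_k-\bmu_k^*$ only in $\norminf{\cdot}$ (after the optimal relabeling $\sigma$). Converting naively gives $\normtwo{\bm a_k}\le\sqrt p\,r_0$, which would make $C$ grow with $p$. I would first try to secure $A\le C$ from the structure of the local basin, reading Assumption~\ref{ass:local} as also controlling $\normtwo{\mP^{1/2}(\bmu_k-\bmu_k^*)}$ (for instance when center discrepancies are supported on a bounded number of coordinates, as in the sparse-mean design). If this cannot be extracted from the stated assumptions, I would state it explicitly as the needed supplement, since no other step requires it.

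\textbf{Step 2: the score bound \eqref{eq:score_times_residual}.} Since $z\in\cB(r_0)$, Assumption~\ref{ass:local} gives
\[
 w_z(u)\le M_w(1+u)^{-1/2}.
\]
I would bound the residual as
\[
 \norminf{\bm X-\bmu_k}\le\norminf{\bm a_k}+R\normtwo{\bm V}\le (r_0+3M_\mu)+m_\Omega^{-1/2}R\le C(1+R),
\]
using $\norminf{\bm a_k}\le \norminf{\bmu_k^*}+\norminf{\bmu_k}\le 3M_\mu$. Combining with the lower half of \eqref{eq:delta_vs_R},
\[
 w_z(\delta_k)\norminf{\bm X-\bmu_k}\le CM_w(1+R)(1+R^2)^{-1/2}\le C'.
\]

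\textbf{Step 3: the normalized outer product \eqref{eq:normalized_outer_bound}.} This step needs no geometry. For any $\bm v\neq\bzero$ and $\mSigma\in\cC_\Sigma$,
\[
 \normmax{\bm v\bm v^\top}\le\norminf{\bm v}^2\le\normtwo{\bm v}^2,
 \qquad
 \bm v^\top\mSigma^{-1}\bm v\ge M_\Sigma^{-1}\normtwo{\bm v}^2,
\]
so the ratio is at most $M_\Sigma$. The event $\bm X=\bmu_k$ has probability zero, or is excluded by the $\vee\,\varepsilon_r$ regularization in the algorithm.
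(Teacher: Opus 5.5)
Your Steps 1 and 3 follow the paper's argument: the eigenvalue sandwich for $\mP$, the triangle inequality around $\mSigma^{*1/2}R\bm U$, and the Rayleigh-quotient bound by $M_\Sigma$. Your case split in Step 1 even supplies a step the paper leaves implicit.

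The obstacle you isolate is genuine. The paper's proof bounds $\normtwo{\bmu_k^*-\bmu_k}\le p^{1/2}r_0$ and then appeals only to ``$r_0$ is fixed.'' That yields a constant of order $1+pr_0^2$, not a dimension-free one.

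This cannot be repaired from the stated assumptions, so do not expect to extract a bound on $A$ from Assumption~\ref{ass:local}. The metric \eqref{eq:metric} controls centers only in $\norminf{\cdot}$. Hence the parameter equal to $z^*$ except for $\bmu_k=\bmu_k^*+r_0\bone$ lies in $\cB(r_0)$, and it has $A=r_0(\bone^\top\mP^*\bone)^{1/2}\ge r_0(m_\Omega p)^{1/2}$. Here $\mP=\mP^*$, and $\mP^{*1/2}\mSigma^{*1/2}\bm U$ is a unit vector, so $\delta_k(\bm X;z)^{1/2}\ge A-R$. The event $\{R^2\le u_0\}$ has probability at least $c_0$ by \eqref{eq:score_lower}. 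On it, the upper half of \eqref{eq:delta_vs_R} forces $C\ge (A-u_0^{1/2})^2/(1+u_0)\gtrsim pr_0^2$.

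So \eqref{eq:delta_vs_R} with a $p$-free constant genuinely needs your supplementary hypothesis or fixed $p$. The hypothesis would be a bounded Mahalanobis center error, for instance by measuring centers in $\normtwo{\mP^{*1/2}(\cdot)}$ inside $d$. Stating it explicitly, as you propose, is the correct resolution.

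Where you should change route is Step 2. Deriving \eqref{eq:score_times_residual} from the lower half of \eqref{eq:delta_vs_R} makes its constant inherit the dependence on $A$, and this is unnecessary. Compare the residual with its own quadratic form instead. The paper's chain $w_z(\delta_k)\norminf{\bm X-\bmu_k}\le M_w\normtwo{\bm X-\bmu_k}(1+\delta_k)^{-1/2}$ closes directly with the sandwich $\delta_k(\bm X;z)\ge (m_\Omega/2)\normtwo{\bm X-\bmu_k}^2$:
\[
 w_z\bigl(\delta_k(\bm X;z)\bigr)\norminf{\bm X-\bmu_k}
 \le M_w\frac{\normtwo{\bm X-\bmu_k}}{\{1+(m_\Omega/2)\normtwo{\bm X-\bmu_k}^2\}^{1/2}}
 \le M_w\Bigl(\frac{2}{m_\Omega}\Bigr)^{1/2},
\]
with no reference to $R$ or $A$. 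Thus \eqref{eq:score_times_residual}, like your Step 3, holds dimension-free under the stated assumptions, and only \eqref{eq:delta_vs_R} needs the extra hypothesis.

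One caveat is shared with the paper. The bound \eqref{eq:score_decay}, transferred by Assumption~\ref{ass:local}, controls $(1+u)^{1/2}w_z(u)$ only for $u\le U_0$, whereas $\delta_k(\bm X;z)$ is unbounded. The decay bound must therefore be assumed on all of $[0,\infty)$ for this step.
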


\begin{proof}
Fix $z\in\cB(r_0)$ and condition on $Z=k$. Write
\[
 \bm X-\bmu_k
 =
 \mSigma^{*1/2}R\bm U + (\bmu_k^*-\bmu_k).
\]
Since $\lambda_{\min}(\mP)\ge m_\Omega/2$ and $\lambda_{\max}(\mP)\le 2M_\Omega$ on $\cB(r_0)$,
\[
 \frac{m_\Omega}{2}\normtwo{\bm X-\bmu_k}^2
 \le
 \delta_k(\bm X;z)
 \le
 2M_\Omega\normtwo{\bm X-\bmu_k}^2.
\]
Also,
\[
 \normtwo{\bm X-\bmu_k}
 \le
 \normtwo{\mSigma^{*1/2}R\bm U} + \normtwo{\bmu_k^*-\bmu_k}
 \le
 \lambda_{\max}(\mSigma^*)^{1/2}R + p^{1/2}r_0,
\]
and similarly
\[
 \normtwo{\bm X-\bmu_k}
 \ge
 \lambda_{\min}(\mSigma^*)^{1/2}R - p^{1/2}r_0.
\]
Because $r_0$ is fixed and the eigenvalues of $\mSigma^*$ are bounded away from $0$ and $\infty$, the last two displays imply \eqref{eq:delta_vs_R}. Next,
\[
 w_z\bigl(\delta_k(\bm X;z)\bigr)\,\norminf{\bm X-\bmu_k}
 \le
 w_z\bigl(\delta_k(\bm X;z)\bigr)\,\normtwo{\bm X-\bmu_k}
 \le
 M_w\frac{\normtwo{\bm X-\bmu_k}}{\sqrt{1+\delta_k(\bm X;z)}}
 \le C,
\]
which proves \eqref{eq:score_times_residual}. Finally, for $\mSigma\in\cC_\Sigma$,
\[
 (\bm X-\bmu_k)^\top\mSigma^{-1}(\bm X-\bmu_k)
 \ge \lambda_{\min}(\mSigma^{-1})\normtwo{\bm X-\bmu_k}^2
 = \frac{1}{\lambda_{\max}(\mSigma)}\normtwo{\bm X-\bmu_k}^2,
\]
so
\[
 \abs{\frac{(X_a-\mu_{k,a})(X_b-\mu_{k,b})}{(\bm X-\bmu_k)^\top\mSigma^{-1}(\bm X-\bmu_k)}}
 \le
 \lambda_{\max}(\mSigma)
 \frac{\abs{(X_a-\mu_{k,a})(X_b-\mu_{k,b})}}{\normtwo{\bm X-\bmu_k}^2}
 \le \lambda_{\max}(\mSigma)
 \le C.
\]
Taking the maximum over $a$ and $b$ gives \eqref{eq:normalized_outer_bound}.
\end{proof}

\begin{lemma}[Softmax derivative]
\label{lem:softmax}
For $\bm a=(a_1,\dots,a_K)^\top$, define
\[
 s_k(\bm a)=\frac{\exp(a_k)}{\sum_{\ell=1}^K\exp(a_\ell)}.
\]
Then
\begin{equation}
\label{eq:softmax_derivative}
 \frac{\partial s_k(\bm a)}{\partial a_\ell}
 = s_k(\bm a)\{\bone(k=\ell)-s_\ell(\bm a)\}.
\end{equation}
Consequently,
\begin{equation}
\label{eq:softmax_l1}
 \sum_{\ell=1}^K \abs{\frac{\partial s_k(\bm a)}{\partial a_\ell}}
 \le 2 s_k(\bm a)\{1-s_k(\bm a)\}
 \le 2\max_{1\le u\le K}s_u(\bm a)\{1-s_u(\bm a)\}.
\end{equation}
\end{lemma}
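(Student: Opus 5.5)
The derivative formula follows from the quotient rule applied to $s_k(\bm a)=e^{a_k}/S$ with $S=\sum_{\ell}e^{a_\ell}$. The numerator $e^{a_k}$ has derivative $\bone(k=\ell)e^{a_k}$ with respect to $a_\ell$, and the denominator $S$ has derivative $e^{a_\ell}$. Hence
\[
 \frac{\partial s_k}{\partial a_\ell}
 =\frac{\bone(k=\ell)e^{a_k}}{S}-\frac{e^{a_k}e^{a_\ell}}{S^2}
 =s_k\{\bone(k=\ell)-s_\ell\},
\]
which is \eqref{eq:softmax_derivative}.

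For the first inequality in \eqref{eq:softmax_l1}, split the sum over $\ell$ into the diagonal term $\ell=k$ and the off-diagonal terms $\ell\ne k$. The diagonal term equals $s_k(1-s_k)$, which is nonnegative because $s_k\in(0,1)$. Each off-diagonal term has absolute value $s_ks_\ell$. Since $\sum_{\ell\ne k}s_\ell=1-s_k$, the off-diagonal terms sum to exactly $s_k(1-s_k)$. Adding the two pieces gives $2s_k(1-s_k)$, so the first inequality in fact holds with equality.

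The second inequality is immediate, because $s_k(1-s_k)$ is one of the $K$ quantities over which the maximum is taken.

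No step is a genuine obstacle here. The only point needing care is to keep track of signs, so that the absolute values of the off-diagonal derivatives are taken correctly before they are summed.
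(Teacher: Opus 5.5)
Your proposal is correct and follows the same route as the paper: the quotient rule gives \eqref{eq:softmax_derivative}, and splitting the sum into the term $\ell=k$ and the terms $\ell\neq k$ shows that the first bound in \eqref{eq:softmax_l1} holds with equality, $2s_k(1-s_k)$. The paper leaves the second inequality implicit; your remark that $s_k(1-s_k)$ is one of the terms in the maximum is exactly what is needed there.
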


\begin{proof}
Differentiate the quotient definition of $s_k$:
\[
 \frac{\partial s_k(\bm a)}{\partial a_\ell}
 =
 \frac{\bone(k=\ell)e^{a_k}\sum_{u=1}^K e^{a_u} - e^{a_k}e^{a_\ell}}{(\sum_{u=1}^K e^{a_u})^2}
 = s_k(\bm a)\{\bone(k=\ell)-s_\ell(\bm a)\}.
\]
Summing over $\ell$ gives
\[
 \sum_{\ell=1}^K \abs{\frac{\partial s_k(\bm a)}{\partial a_\ell}}
 = s_k(\bm a)\Bigl\{1-s_k(\bm a)+\sum_{\ell\neq k}s_\ell(\bm a)\Bigr\}
 = 2s_k(\bm a)\{1-s_k(\bm a)\},
\]
which is \eqref{eq:softmax_l1}.
\end{proof}

\begin{lemma}[POET perturbation]
\label{lem:poet_perturb}
Suppose Assumption~\ref{ass:factor} holds. There exist $\epsilon_{\mathrm P}>0$ and $C_{\mathrm{P}}<\infty$, independent of $(n,p)$, such that whenever the symmetric matrices $\mA$ and $\mB$ satisfy
\[
 \normmax{\mA-\mSigma^*}\vee \normmax{\mB-\mSigma^*}\le \epsilon_{\mathrm P},
\]
one has
\begin{equation}
\label{eq:poet_perturb}
 \normmax{\operatorname{POET}_{\lambda_u,m_*}(\mA)-\operatorname{POET}_{\lambda_u,m_*}(\mB)}
 \le C_{\mathrm{P}}\normmax{\mA-\mB} + C_{\mathrm{P}}\rho_{n,p}.
\end{equation}
\end{lemma}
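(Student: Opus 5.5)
The inequality \eqref{eq:poet_perturb} is exactly the max-norm stability condition \eqref{eq:poet_stability_assump}, which Assumption~\ref{ass:factor} imposes directly, with the same constants $\epsilon_{\mathrm P}$ and $C_{\mathrm P}$. So the lemma follows by invoking that assumption. For completeness, I would also say why the assumption is natural, by decomposing the POET map into its pieces and tracking each in max norm.

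\textbf{Decomposition.} Write $\operatorname{POET}_{\lambda_u,m_*}(\mA)=\ProjPD\{\mL_{m_*}(\mA)+\Tsoft_{\lambda_u}(\mA-\mL_{m_*}(\mA))\}$. Then bound three contributions separately.

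\emph{Thresholding step.} Entrywise soft-thresholding is $1$-Lipschitz in each coordinate. Hence
\[
\normmax{\Tsoft_{\lambda_u}(\mathbf C)-\Tsoft_{\lambda_u}(\mathbf D)}\le \normmax{\mathbf C-\mathbf D}.
\]
This reduces the problem to controlling $\normmax{\mL_{m_*}(\mA)-\mL_{m_*}(\mB)}$.

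\emph{Low-rank step.} Write $\mL_{m_*}(\mA)=\mathbf V_A\mathbf D_A\mathbf V_A^\top$. The eigengap \eqref{eq:eigengap} separates the leading $m_*$ eigenvalues from the rest, and $m_*$ is fixed. Under these conditions an $\ell_\infty$-eigenvector perturbation bound (of Fan--Wang--Zhong type) would give
\[
\normmax{\mathbf V_A-\mathbf V_B O}\lesssim \normmax{\mA-\mB}
\]
for some rotation $O$, up to an incoherence remainder that is absorbed into $\rho_{n,p}$.

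\emph{Projection step.} In the local region $\mL+\Tsoft(\cdot)$ is close to $\mSigma^*$, whose minimal eigenvalue exceeds $\varepsilon_{\mathrm{pd}}$. There $\ProjPD$ acts as the identity, except for the sparse-remainder approximation error, which is also charged to $\rho_{n,p}$.

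\textbf{Main obstacle.} The hard step is the dimension-free eigenvector bound in max norm. Davis--Kahan only controls the spectral norm, so passing to max norm naively costs a factor of $p$. This is precisely why the paper imposes the max-norm stability as an assumption. Accordingly, the proof in the paper itself will simply cite \eqref{eq:poet_stability_assump}.
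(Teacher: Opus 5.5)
Your proof is correct and matches the paper's: the lemma is a verbatim restatement of the max-norm stability condition \eqref{eq:poet_stability_assump} in Assumption~\ref{ass:factor}, and the paper's proof simply cites it with the same $\epsilon_{\mathrm P}$ and $C_{\mathrm P}$. Your decomposition sketch is only heuristic and does not prove the bound: $\normmax{\cdot}$-closeness to $\mSigma^*$ does not by itself make $\ProjPD$ act as the identity, and $\ell_\infty$ eigenvector bounds of Fan--Wang--Zhong type are stated in terms of the $\ell_\infty$ operator norm of the perturbation, which can exceed its max norm by a factor of $p$. The lemma does not rely on that sketch.
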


\begin{proof}
This is precisely the local entrywise stability condition \eqref{eq:poet_stability_assump} in Assumption~\ref{ass:factor}.  The statement is recorded as a lemma only to keep the proof of the TME--POET--GLASSO theorem modular.  Importantly, no step of the argument bounds the spectral norm by $p\normmax{\mA-\mB}$; hence the constant $C_{\mathrm P}$ in \eqref{eq:poet_perturb} is dimension-free under the stated POET stability assumption.
\end{proof}

\begin{lemma}[GLASSO perturbation]
\label{lem:glasso_perturb}
Suppose Assumption~\ref{ass:glasso} holds. Then there exist constants $\epsilon_G>0$ and $C_G<\infty$ such that whenever
\[
 \normmax{\widehat{\mSigma}-\mSigma}\le \epsilon_G,
\]
and both matrices are positive definite, the corresponding graphical-lasso solutions satisfy
\begin{equation}
\label{eq:glasso_perturb_lemma}
 \normone{\widehat{\mP}-\mP}
 \le C_G s_\Omega\Bigl(\normmax{\widehat{\mSigma}-\mSigma}+\lambda_{\Omega,n}\Bigr).
\end{equation}
\end{lemma}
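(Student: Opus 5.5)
The plan is to compare the two graphical-lasso solutions directly through their KKT systems rather than through their distances to $\mP^*$. Write $\mP=\mP(\mSigma)$ and $\widehat{\mP}=\mP(\widehat{\mSigma})$ for the minimizers of $\tr(\mP\mSigma)-\log\det\mP+\lambda_{\Omega,n}\normone{\off(\mP)}$ and of its analogue with $\widehat{\mSigma}$. Both inputs are positive definite, so both minimizers exist, are unique and are positive definite. The optimality conditions are
\[
\mSigma-\mP^{-1}+\lambda_{\Omega,n}\mathbf Z=\bzero,\qquad
\widehat{\mSigma}-\widehat{\mP}^{-1}+\lambda_{\Omega,n}\widehat{\mathbf Z}=\bzero,
\]
where $\mathbf Z,\widehat{\mathbf Z}$ are subgradients of the off-diagonal $\ell_1$ norm, so they have zero diagonal and off-diagonal entries in $[-1,1]$. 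Subtracting the two systems gives the basic identity
\[
\widehat{\mP}^{-1}-\mP^{-1}=(\widehat{\mSigma}-\mSigma)+\lambda_{\Omega,n}(\widehat{\mathbf Z}-\mathbf Z),
\]
hence $\normmax{\widehat{\mP}^{-1}-\mP^{-1}}\le\normmax{\widehat{\mSigma}-\mSigma}+2\lambda_{\Omega,n}$. This already has the right form; the rest of the proof converts a max-norm bound on the inverses into an $\normone{\cdot}$ bound on the precisions, with only a factor $s_\Omega$ lost.

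The conversion is done with the primal-dual witness construction of \citet{RavikumarWainwrightRaskuttiYu2011}, applied to each solution on the support $S_\Omega$, with $\epsilon_G$ fixed small. On $S_\Omega$ I will linearize $\mP\mapsto\mP^{-1}$ around $\mP$. Writing $\mathbf D=\widehat{\mP}-\mP$, we have $\widehat{\mP}^{-1}-\mP^{-1}=-\mP^{-1}\mathbf D\mP^{-1}+\mathbf R(\mathbf D)$, with a remainder satisfying $\normmax{\mathbf R(\mathbf D)}\le C\normone{\mathbf D}\,\normmax{\mathbf D}$, using $\normone{\cdot}$ bounds on $\mP^{-1}$. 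The Hessian $\mP^{-1}\otimes\mP^{-1}$ is to be compared with $\mathbf\Gamma^*$. Within the local regime where the lemma is applied (inputs that are POET outputs within $\epsilon_{\mathrm P}$ of $\mSigma^*$), $\mP^{-1}=\mSigma+\lambda\mathbf Z$ lies within $\epsilon_G+\lambda_{\Omega,n}$ of $\mSigma^*$ in max norm. Choosing $\epsilon_G$ small therefore transfers \eqref{eq:irrep} to the Hessian at $\mP$, with constants $1-\alpha/2$ and $2M_\Gamma$.

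Given this, the argument runs as follows.
\begin{enumerate}[label=(\roman*)]
\item Show that the restricted oracle solution $\widetilde{\mP}$, built with $\widehat{\mSigma}$ and constrained to support $S_\Omega$, satisfies $\normmax{\widetilde{\mP}-\mP}\le 2M_\Gamma(\normmax{\widehat{\mSigma}-\mSigma}+2\lambda_{\Omega,n})$. This uses a Brouwer fixed-point argument on the restricted KKT map, the smallness $s_\Omega\lambda_{\Omega,n}\to0$ from Assumption~\ref{ass:glasso}, and the fact that $\epsilon_G$ is small.
\item Verify strict dual feasibility on $S_\Omega^c$ via the transferred incoherence, so that $\widetilde{\mP}=\widehat{\mP}$.
\item Since both $\widehat{\mP}$ and $\mP$ are then supported in $S_\Omega$, every row of $\mathbf D$ has at most $s_\Omega$ nonzeros. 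This gives $\normone{\mathbf D}\le s_\Omega\normmax{\mathbf D}$, which is \eqref{eq:glasso_perturb_lemma}.
\end{enumerate}

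The main obstacle is step (ii) when $\epsilon_G$ is a fixed constant rather than of order $\lambda_{\Omega,n}$. Strict dual feasibility needs $\normmax{\widehat{\mSigma}-\mSigma}\lesssim\alpha\lambda_{\Omega,n}$, so the witness argument covers only that case. For the complementary case $\normmax{\widehat{\mSigma}-\mSigma}>c\alpha\lambda_{\Omega,n}$ I would first try to avoid support recovery entirely. Restrict the difference to the entries where at least one of $\widehat{\mP}$ and $\mP$ is nonzero. Then use the $\ell_1$-penalty cone condition from comparing the two objectives (as in Rothman-type arguments) to show that the off-support mass of $\mathbf D$ is dominated by its on-support mass. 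This would give a row-wise bound $\normone{\mathbf D}\lesssim s_\Omega\normmax{\mathbf D}$ even without exact support recovery. In that regime $\lambda_{\Omega,n}\lesssim\normmax{\widehat{\mSigma}-\mSigma}$, so the right-hand side of \eqref{eq:glasso_perturb_lemma} is of order $s_\Omega\normmax{\widehat{\mSigma}-\mSigma}$, and it suffices to bound $\normmax{\mathbf D}$ by the linearization of the basic identity, absorbing the remainder via $\normone{\mathbf D}\le C$ from the spectral bounds. Whether the cone argument can really be made row-wise, and not merely in the entrywise $\ell_1$ norm, is the point I expect to be delicate.
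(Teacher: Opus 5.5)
\paragraph{Verdict.} Your route is the paper's route: subtract the two KKT systems, linearize, run the primal--dual witness on $S_\Omega$, and convert to $\normone{\cdot}$ by row sparsity. The paper's proof has the same hole you found. It asserts that the off-support coordinates of $\widehat\mP-\mP$ are ``dominated by the same quantity'' and then invokes $s_\Omega$, but entrywise domination gives no row sparsity.

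\paragraph{The complementary regime cannot be rescued.} The obstacle in your step (ii) is fatal, not merely delicate. Once $\normmax{\widehat\mSigma-\mSigma}$ exceeds the penalty by a constant factor, \eqref{eq:glasso_perturb_lemma} can fail, so no cone argument, row-wise or otherwise, will cover that regime. Here is a counterexample.
\begin{itemize}
\item Take $\mSigma^*=\mSigma=\mI_p$. Then $\mP=\mI_p$, $s_\Omega=1$, and \eqref{eq:irrep} holds with $\alpha=M_\Gamma=1$.
\item Write $\lambda=\lambda_{\Omega,n}$, let $B\subset[p]$ have $m=\lceil 1/\lambda\rceil$ elements, and set $\widehat\mSigma=\mI_p+2\lambda\,\bone_B\bone_B^\top$.
\item One checks directly that the KKT conditions hold for the block-diagonal matrix with $\widehat\mP_{B^cB^c}=\mI$ and
\[
 \widehat\mP_{BB}^{-1}=(1+\lambda)\mI_m+\lambda\bone\bone^\top,
 \qquad
 \widehat\mP_{BB}=\frac{1}{1+\lambda}\bigl(\mI_m-a\,\bone\bone^\top\bigr),
 \qquad
 a=\frac{\lambda}{1+\lambda+m\lambda}.
\]
The subgradient is $Z_{ij}=-1$ for $i\neq j$ in $B$ and $Z_{ij}=0$ elsewhere. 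By strict convexity this matrix is the graphical-lasso solution.
\item Each column of $\widehat\mP-\mP$ indexed by $B$ has $\ell_1$ mass $(ma+\lambda)/(1+\lambda)\to 1/2$.
\item The right side of \eqref{eq:glasso_perturb_lemma} is $3C_G\lambda\to 0$, and $2\lambda\le\epsilon_G$ eventually. The construction needs only $p\ge m$.
\end{itemize}
Every entry of $\widehat\mP-\mP$ is $O(\lambda)$, so the entrywise part of the witness bound survives. What fails is that a row acquires about $1/\lambda$ false edges, and $s_\Omega$ does not control that number. Your step (iii) therefore really requires exact support containment, and the lemma as stated is false without a much stronger localization.

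\paragraph{What can be proved.} Two extra hypotheses are needed.
\begin{itemize}
\item Both $\normmax{\widehat\mSigma-\mSigma^*}$ and $\normmax{\mSigma-\mSigma^*}$ are at most a small multiple of $\alpha\lambda$.
\item $\normone{\mSigma^*}$ is bounded. Your remainder bound via ``$\normone{\cdot}$ bounds on $\mP^{-1}$'' silently uses this, and Assumption~\ref{ass:glasso} does not provide it.
\end{itemize}
Under these, run the witness for each solution separately against $\mP^*$, rather than against each other.
\begin{itemize}
\item At $\mP^*$ the Hessian is exactly $\mathbf\Gamma^*$, so you never need to transfer \eqref{eq:irrep} to $\mP^{-1}\otimes\mP^{-1}$. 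That transfer by max-norm smallness would not work anyway: the $\norminf{\cdot}$ norms in \eqref{eq:irrep} aggregate up to $|S_\Omega|\ge p$ entries per row.
\item The Brouwer step needs the radius to be $\lesssim 1/s_\Omega$ for the remainder bound $O(s_\Omega\normmax{\mathbf\Delta}^2)$ on $S_\Omega$-supported perturbations $\mathbf\Delta$. An $O(\lambda)$ radius meets this because $s_\Omega\lambda\to0$; a fixed $\epsilon_G$ does not when $s_\Omega\to\infty$.
\item The output is that both supports lie in $S_\Omega$ and $\normmax{\widehat\mP-\mP^*}\vee\normmax{\mP-\mP^*}\lesssim\lambda$. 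Hence $\normone{\widehat\mP-\mP}\lesssim s_\Omega\lambda$, which implies \eqref{eq:glasso_perturb_lemma}. The direct Brouwer comparison of $\widehat\mP$ with $\mP$ becomes unnecessary.
\end{itemize}

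\paragraph{Consequence for the applications.} An $O(\lambda)$ neighbourhood of $\mSigma^*$ is far smaller than the regime you appealed to, namely POET outputs within a fixed $\epsilon_{\mathrm P}$ of $\mSigma^*$. It is also not what the applications in Proposition~\ref{prop:local_regularity} and Theorem~\ref{thm:tme_glasso} supply.
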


\begin{proof}
The Karush--Kuhn--Tucker system for the graphical-lasso problem is
\[
 \widehat{\mSigma}-\widehat{\mP}^{-1}+\lambda_{\Omega,n}\widehat{\mathbf Z}=\bzero,
 \qquad
 \mSigma-\mP^{-1}+\lambda_{\Omega,n}\mathbf Z=\bzero,
\]
where $\widehat{\mathbf Z}$ and $\mathbf Z$ are subgradient matrices with $\widehat Z_{ab}=\operatorname{sgn}(\widehat P_{ab})$ and $Z_{ab}=\operatorname{sgn}(P_{ab})$ on the support, and entries in $[-1,1]$ off the support. Subtracting the two equations gives
\[
 \widehat{\mSigma}-\mSigma
 =
 \widehat{\mP}^{-1}-\mP^{-1}-\lambda_{\Omega,n}(\widehat{\mathbf Z}-\mathbf Z).
\]
Vectorizing and expanding $\widehat{\mP}^{-1}-\mP^{-1}$ around $\mP$ yields
\[
 \operatorname{vec}(\widehat{\mSigma}-\mSigma)
 =
 -\mathbf\Gamma^*\operatorname{vec}(\widehat{\mP}-\mP) + \operatorname{vec}(\mathbf R) - \lambda_{\Omega,n}\operatorname{vec}(\widehat{\mathbf Z}-\mathbf Z),
\]
with a quadratic remainder $\mathbf R$. By the incoherence condition \eqref{eq:irrep}, the standard primal--dual witness argument yields
\[
 \norminf{\operatorname{vec}(\widehat{\mP}-\mP)_{S_\Omega}}
 \le
 M_\Gamma\Bigl\{\normmax{\widehat{\mSigma}-\mSigma}+\lambda_{\Omega,n}\Bigr\},
\]
and the off-support coordinates are dominated by the same quantity. Converting the vector bound to the matrix $\ell_1$ norm and using the row sparsity $s_\Omega$ gives \eqref{eq:glasso_perturb_lemma}.
\end{proof}

\subsection{Proof of Proposition~\ref{prop:local_regularity} and Theorem~\ref{thm:population}}

\begin{proof}[Proof of Proposition~\ref{prop:local_regularity}]
Fix $z,\widetilde z\in\cB(r_1)$ and abbreviate
\[
 \eta_k(\bx;z)=\log \pi_k + \ell\bigl(\delta_k(\bx;z)\bigr),
 \qquad
 \widetilde\eta_k(\bx)=\eta_k(\bx;\widetilde z).
\]
By Lemma~\ref{lem:softmax},
\begin{align}
 \abs{r_k(\bx;z)-r_k(\bx;\widetilde z)}
 &\le \int_0^1 \sum_{\ell=1}^K \abs{\frac{\partial s_k}{\partial a_\ell}\bigl(\bm a_t(\bx)\bigr)}\,dt\; \max_{1\le \ell\le K}\abs{\eta_\ell(\bx;z)-\eta_\ell(\bx;\widetilde z)} \notag\\
 &\le 2\sup_{t\in[0,1]}\Gamma\bigl(\bx;z_t\bigr)\max_{1\le \ell\le K}\abs{\eta_\ell(\bx;z)-\eta_\ell(\bx;\widetilde z)},
 \label{eq:r_diff_1}
\end{align}
where $z_t=t z +(1-t)\widetilde z$ and
\[
 \Gamma(\bx;z)=\max_{1\le k\le K}r_k(\bx;z)\{1-r_k(\bx;z)\}.
\]
Next,
\begin{align}
 \abs{\eta_k(\bx;z)-\eta_k(\bx;\widetilde z)}
 &\le c_\pi^{-1}\norminf{\bpi-\widetilde\bpi}
 + \abs{\ell\{\delta_k(\bx;z)\}-\ell\{\delta_k(\bx;\widetilde z)\}} 
\notag\\
 &\qquad + \sup_{0\le u\le U_0}\abs{\ell(u)-\widetilde\ell(u)}.
 \label{eq:eta_diff_1}
\end{align}
For the quadratic part,
\begin{align}
 \delta_k(\bx;z)-\delta_k(\bx;\widetilde z)
 &= (\bx-\bmu_k)^\top(\mP-\widetilde\mP)(\bx-\bmu_k)
 + (\bx-\bmu_k)^\top\widetilde\mP(\widetilde\bmu_k-\bmu_k) \notag\\
 &\qquad + (\widetilde\bmu_k-\bmu_k)^\top\widetilde\mP(\bx-\widetilde\bmu_k).
 \label{eq:delta_expand}
\end{align}
By Assumptions~\ref{ass:model} and \ref{ass:local},
\[
 \abs{\delta_k(\bx;z)-\delta_k(\bx;\widetilde z)}
 \le C\{1+\norminf{\bx}\}^2 d(z,\widetilde z).
\]
Combining this with the score decay in Assumption~\ref{ass:generator} and the elliptical bound \eqref{eq:delta_vs_R} gives
\begin{equation}
\label{eq:eta_diff_2}
 \abs{\eta_k(\bx;z)-\eta_k(\bx;\widetilde z)}
 \le C(1+R) d(z,\widetilde z).
\end{equation}
Insert \eqref{eq:eta_diff_2} into \eqref{eq:r_diff_1}:
\begin{equation}
\label{eq:r_diff_2}
 \abs{r_k(\bx;z)-r_k(\bx;\widetilde z)}
 \le C\Gamma(\bx;z_t)(1+R)d(z,\widetilde z).
\end{equation}
Under Assumption~\ref{ass:model}, pointwise posterior overlap goes to zero as $\Delta_0\to\infty$. Since $0\le \Gamma\le 1/4$ and Assumption~\ref{ass:model} now imposes $\E(1+R)<\infty$, dominated convergence with envelope $(1+R)/4$ shows
\[
 \bar\Gamma_1(\Delta_0)
 :=
 \E\Bigl[(1+R)\sup_{z\in\cB(r_1)}\Gamma(\bm X;z)\Bigr]
 \longrightarrow 0
 \qquad\text{as }\Delta_0\to\infty.
\]
Therefore
\begin{equation}
\label{eq:Pi_Lip}
 \norminf{\bPi(z)-\bPi(\widetilde z)}
 \le C\bar\Gamma_1(\Delta_0)d(z,\widetilde z).
\end{equation}
The generator block is controlled through the transformed-radius density. Let $f_{Y,z}$ denote the density of $q\{\delta_k(\bm X;z)\}$ under the responsibility weights. By the change of variables in Section~\ref{sec:model}, the map $f\mapsto \cA^{-1}(f)$ is linear and bounded on $L_\infty[0,\log(1+U_0)]$. Hence
\begin{equation}
\label{eq:G_Lip_1}
 \sup_{0\le u\le U_0}\abs{G(z)(u)-G(\widetilde z)(u)}
 \le C\sup_{0\le y\le \log(1+U_0)}\abs{f_{Y,z}(y)-f_{Y,\widetilde z}(y)}.
\end{equation}
The density difference is bounded by the same posterior-weight difference and the same quadratic-form perturbation as in \eqref{eq:r_diff_2}, so
\begin{equation}
\label{eq:G_Lip_2}
 \sup_{0\le u\le U_0}\abs{G(z)(u)-G(\widetilde z)(u)}
 +
 \sup_{0\le u\le U_0}(1+u)^{1/2}\abs{w_z(u)-w_{\widetilde z}(u)}
 \le C\bar\Gamma_1(\Delta_0)d(z,\widetilde z).
\end{equation}
This proves \eqref{eq:local_regularity}.

For the center block, write
\[
 N_k(z)=\E\bigl[r_k(\bm X;z)w_z\{\delta_k(\bm X;z)\}\bm X\bigr],
 \qquad
 D_k(z)=\E\bigl[r_k(\bm X;z)w_z\{\delta_k(\bm X;z)\}\bigr].
\]
By \eqref{eq:score_lower}, \eqref{eq:delta_vs_R}, and $\Prob(R^2\le u_0)\ge c_0$,
\[
 \inf_{z\in\cB(r_1)} D_k(z)
 \ge c_\pi c_0 m_w/4
 =: d_0>0.
\]
Using \eqref{eq:score_times_residual}, \eqref{eq:r_diff_2}, and \eqref{eq:G_Lip_2},
\[
 \norminf{N_k(z)-N_k(\widetilde z)} + \abs{D_k(z)-D_k(\widetilde z)}
 \le C d(z,\widetilde z).
\]
Since $\widetilde U_k(z)=N_k(z)/D_k(z)$,
\begin{align*}
 \norminf{\widetilde U_k(z)-\widetilde U_k(\widetilde z)}
 &\le \frac{\norminf{N_k(z)-N_k(\widetilde z)}}{D_k(z)}
 + \norminf{N_k(\widetilde z)}\abs{\frac{1}{D_k(z)}-\frac{1}{D_k(\widetilde z)}} \\
 &\le C d(z,\widetilde z).
\end{align*}
Damping by \eqref{eq:oracle_pi_mu} gives \eqref{eq:center_regularity}.

For the Tyler block define $\Psi$ by \eqref{eq:Psi_def}. Assumption~\ref{ass:tyler} implies that $D_{\mSigma}\operatorname{vec}\Psi(\mSigma^*,z^*)$ is invertible. Hence the implicit function theorem yields a neighborhood of $(\mSigma^*,z^*)$ on which the equation $\Psi(\mSigma,z)=\bzero$ defines a continuously differentiable map $z\mapsto \mSigma_{\mathrm{Ty}}(z)$. Its derivative is
\[
 D_z\operatorname{vec}\mSigma_{\mathrm{Ty}}(z)
 =
 -\Bigl[D_{\mSigma}\operatorname{vec}\Psi\{\mSigma_{\mathrm{Ty}}(z),z\}\Bigr]^{-1}
 D_z\operatorname{vec}\Psi\{\mSigma_{\mathrm{Ty}}(z),z\}.
\]
The inverse Jacobian is uniformly bounded on a sufficiently small neighborhood, and the derivative with respect to $z$ is bounded because of \eqref{eq:normalized_outer_bound}. Therefore
\[
 \normmax{\mSigma_{\mathrm{Ty}}(z)-\mSigma_{\mathrm{Ty}}(\widetilde z)}
 \le L_{\mathrm{Ty}} d(z,\widetilde z).
\]
Apply Lemma~\ref{lem:poet_perturb} to the second POET step, and then Lemma~\ref{lem:glasso_perturb} to the GLASSO map, to obtain the second bound in \eqref{eq:precision_regularity}. The proposition is proved.
\end{proof}

\begin{proof}[Proof of Theorem~\ref{thm:population}]
By construction of the oracle map, the target parameter is a fixed point, so $M(z^*)=z^*$.  Hence for $z\in\cB(r)$,
\[
 d\bigl(M(z),z^*\bigr)
 =
 d\bigl(M(z),M(z^*)\bigr).
\]
Apply Proposition~\ref{prop:local_regularity} with $\widetilde z=z^*$:
\[
 \norminf{\bPi(z)-\bpi^*}
 \le L_\pi^\dagger d(z,z^*),
\]
\[
 \sup_{0\le u\le U_0}\abs{G(z)(u)-g^*(u)}
 \vee
 \sup_{0\le u\le U_0}(1+u)^{1/2}\abs{w_z(u)-w^*(u)}
 \le L_G^\dagger d(z,z^*),
\]
\[
 \max_{k\le K}\norminf{U_k(z)-\bmu_k^*}
 \le (1-\eta_\mu+\eta_\mu L_\mu^\dagger)d(z,z^*),
\]
\[
 \normone{T(z)-\mP^*}
 \le \eta_\Omega L_\Omega d(z,z^*).
\]
Taking the maximum of the four bounds gives
\[
 d\bigl(M(z),z^*\bigr)
 \le
 \Bigl[\max\{L_\pi^\dagger,L_G^\dagger,1-\eta_\mu+\eta_\mu L_\mu^\dagger,\eta_\Omega L_\Omega\}+C_r r\Bigr]d(z,z^*).
\]
By the final sentence of Proposition~\ref{prop:local_regularity}, the constants $L_\pi^\dagger$, $L_G^\dagger$, and $L_\mu^\dagger$ are controlled by the overlap coefficient and therefore tend to $0$ as the separation level $\Delta_0$ grows. Choosing $\Delta_0\ge \Delta_0^*$ and then $r\le r_*$ so that the bracketed constant is strictly smaller than one yields \eqref{eq:population_contraction}. The explicit expression \eqref{eq:kappa_formula} is the same bracketed constant.
\end{proof}

\subsection{Proof of Proposition~\ref{prop:bias} and Theorem~\ref{thm:sample_nonprecision}}

\begin{proof}[Proof of Proposition~\ref{prop:bias}]
Let $f_{Y,z}$ be the transformed-radius density under $z$, and let
\[
 f_{Y,z,h}(y)=\int K_h(y-v)f_{Y,z}(v)\,dv,
 \qquad K_h(t)=h^{-1}K(t/h).
\]
Because $K$ is symmetric and of order two,
\[
 \int K(u)\,du=1,
 \qquad \int uK(u)\,du=0,
 \qquad \mu_2(K)=\int u^2K(u)\,du<\infty.
\]
For each fixed $y$,
\begin{align*}
 f_{Y,z,h}(y)-f_{Y,z}(y)
 &= \int K(u)\{f_{Y,z}(y-hu)-f_{Y,z}(y)\}\,du \\
 &= \frac{h^2}{2}\int u^2K(u)f_{Y,z}^{\prime\prime}(y-\theta_{y,u}hu)\,du
\end{align*}
for some measurable $\theta_{y,u}\in(0,1)$. Hence,
\[
 \abs{f_{Y,z,h}(y)-f_{Y,z}(y)}
 \le \frac{h^2}{2}\mu_2(K)\sup_{v}\abs{f_{Y,z}^{\prime\prime}(v)}.
\]
By Assumption~\ref{ass:generator},
\[
 \sup_{z\in\cB(r)}\sup_y \abs{f_{Y,z,h}(y)-f_{Y,z}(y)}
 \le C h^2.
\]
The reconstruction map from the transformed-density scale to the generator scale is linear, and the smoothing-spline operator is uniformly bounded on the deterministic grid sequence specified in Assumption~\ref{ass:entropy}. Hence
\[
 \sup_{z\in\cB(r)}\sup_{0\le u\le U_0}\abs{G_h(z)(u)-G(z)(u)}
 +
 \sup_{z\in\cB(r)}\sup_{0\le u\le U_0}(1+u)^{1/2}\abs{w_h(z)-w_z(u)}
 \le C h^2.
\]
Insert these bounds into the formulas for $\Pi$, $U_g$, and $T$. Only the generator and center blocks depend on $h$, so the same order $h^2$ propagates to the whole map. This proves \eqref{eq:bias_bound}.
\end{proof}

\begin{proof}[Proof of Theorem~\ref{thm:sample_nonprecision}]
For the mixing-proportion block define
\[
 \widehat \Pi_k(z)=\frac1n\sum_{i=1}^n r_k(\bm X_i;z),
 \qquad
 \Pi_k(z)=\E\{r_k(\bm X;z)\}.
\]
The envelope of $\cF_{\pi,k}$ is $1$. Applying Lemma~\ref{lem:bernstein} with $t=c\log p$ gives
\begin{equation}
\label{eq:mixing_empirical}
 \Prob\left[
 \sup_{z\in\cB(r)}\norminf{\widehat{\bpi}(z)-\bPi(z)}
 > C\sqrt{\frac{\log p}{n}}
 \right]
 \le Cp^{-c}.
\end{equation}

For the generator block, fix the deterministic transformed-radius grid $y_1,\dots,y_{M_n}$ used in the theory. For a single grid point $y_m$,
\[
 \widehat f_{Y,z,h}(y_m)
 =
 \frac{1}{nh}\sum_{i=1}^n\sum_{k=1}^K r_k(\bm X_i;z)K\Bigl(\frac{y_m-q\{\delta_k(\bm X_i;z)\}}{h}\Bigr).
\]
The envelope of $\cF_{K,m,h}$ is of order $h^{-1}$ and its variance is of order $h^{-1}$. Lemma~\ref{lem:bernstein} therefore yields
\[
 \Prob\left[
 \sup_{z\in\cB(r)}\abs{\widehat f_{Y,z,h}(y_m)-f_{Y,z,h}(y_m)}
 > C\Biggl\{\sqrt{\frac{\log n}{nh}}+\frac{\log n}{nh}\Biggr\}
 \right]
 \le Cn^{-c}.
\]
Because $nh/\log n\to\infty$ by Assumption~\ref{ass:init}, the second term is absorbed by the first, and therefore
\[
 \Prob\left[
 \sup_{z\in\cB(r)}\abs{\widehat f_{Y,z,h}(y_m)-f_{Y,z,h}(y_m)}
 > C\sqrt{\frac{\log n}{nh}}
 \right]
 \le Cn^{-c}.
\]
Since $\log M_n\lesssim\log n$, a union bound over $1\le m\le M_n$ only changes the constant in the polynomial probability bound and gives
\begin{equation}
\label{eq:kde_grid_bound}
\begin{aligned}
 &\Prob\Biggl[
 \sup_{z\in\cB(r)}\max_{1\le m\le M_n}\abs{\widehat f_{Y,z,h}(y_m)-f_{Y,z,h}(y_m)}
 > C\sqrt{\frac{\log n}{nh}}
 \Biggr]
 \\
 &\qquad\le Cn^{-c}.
\end{aligned}
\end{equation}
The smoothing spline and the linear interpolation step are uniformly bounded on the same deterministic grid sequence. Hence \eqref{eq:kde_grid_bound} implies
\begin{equation}
\label{eq:generator_empirical}
\begin{aligned}
 &\Prob\Biggl[
 \sup_{z\in\cB(r)}
 \Biggl\{
 \sup_{0\le u\le U_0}\abs{\widehat G_{n,h}(z)(u)-G_h(z)(u)}
 \\
 &\qquad\vee
 \sup_{0\le u\le U_0}(1+u)^{1/2}\abs{\widehat w_{n,h}(z)(u)-w_h(z)(u)}
 \Biggr\}
 > C\sqrt{\frac{\log n}{nh}}
 \Biggr]
 \\
 &\qquad\le Cn^{-c}.
\end{aligned}
\end{equation}

For the center block write
\[
 \widehat N_{k,j}(z)
 =
 \frac1n\sum_{i=1}^n r_k(\bm X_i;z)\widehat w_{n,h}\bigl(\delta_k(\bm X_i;z)\bigr)X_{ij},
 \qquad
 N_{k,j,h}(z)=\E\bigl[r_k(\bm X;z)w_h\bigl(\delta_k(\bm X;z)\bigr)X_j\bigr],
\]
and
\[
 \widehat D_k(z)
 =
 \frac1n\sum_{i=1}^n r_k(\bm X_i;z)\widehat w_{n,h}\bigl(\delta_k(\bm X_i;z)\bigr),
 \qquad
 D_{k,h}(z)=\E\bigl[r_k(\bm X;z)w_h\bigl(\delta_k(\bm X;z)\bigr)\bigr].
\]
By Lemma~\ref{lem:elliptic_geometry}, the envelope of the class $\cF_{\mu,k,j,h}$ is bounded by a constant that does not depend on $(n,p)$. Lemma~\ref{lem:bernstein} therefore yields
\begin{equation}
\label{eq:center_numerators}
 \Prob\left[
 \sup_{z\in\cB(r)}\max_{k\le K,\,j\le p}\abs{\widehat N_{k,j}(z)-N_{k,j,h}(z)}
 > C\sqrt{\frac{\log p}{n}}
 \right]
 \le Cp^{-c}.
\end{equation}
The same argument gives
\begin{equation}
\label{eq:center_denominators}
 \Prob\left[
 \sup_{z\in\cB(r)}\max_{k\le K}\abs{\widehat D_k(z)-D_{k,h}(z)}
 > C\sqrt{\frac{\log p}{n}}
 \right]
 \le Cp^{-c}.
\end{equation}
Because of \eqref{eq:score_lower}, the population denominators are bounded away from zero. Intersecting the events in \eqref{eq:generator_empirical}, \eqref{eq:center_numerators}, and \eqref{eq:center_denominators} gives
\[
 \max_{k\le K}\norminf{\widehat{\bU}_k(z)-\bU_{k,h}(z)}
 \le C\Biggl\{\sqrt{\frac{\log p}{n}}+\sqrt{\frac{\log n}{nh}}\Biggr\}
\]
uniformly over $z\in\cB(r)$. Combining this with \eqref{eq:mixing_empirical} and \eqref{eq:generator_empirical} proves \eqref{eq:sample_nonprecision}.
\end{proof}

\subsection{Proof of Theorem~\ref{thm:tme_glasso}}

\begin{proof}[Proof of Theorem~\ref{thm:tme_glasso}]
The proof begins with the weighted spatial-sign pilot, but the final conclusion is about the weighted Tyler estimator. Define
\[
 \widehat{\mH}_n(z)
 =
 \frac1n\sum_{i=1}^n\sum_{k=1}^K
 r_k(\bm X_i;z)
 \frac{\br_k^+(\bm X_i;z)\br_k^+(\bm X_i;z)^\top}{\normtwo{\br_k^+(\bm X_i;z)}^2\vee \varepsilon_r},
\]
and let $\mH(z)=\E\widehat{\mH}_n(z)$. By \eqref{eq:normalized_outer_bound}, every entry of the summand is bounded. Applying Lemma~\ref{lem:bernstein} to the class generated by the matrix entries gives
\begin{equation}
\label{eq:H_concentration}
 \Prob\left[
 \sup_{z\in\cB(r)}\normmax{\widehat{\mH}_n(z)-\mH(z)}
 > C\sqrt{\frac{\log p}{n}}
 \right]
 \le Cp^{-c}.
\end{equation}
Apply Lemma~\ref{lem:poet_perturb} to the first POET step:
\begin{equation}
\label{eq:pss_concentration}
 \Prob\left[
 \sup_{z\in\cB(r)}\normmax{\widehat{\mSigma}_{\mathrm{pss}}(z)-\mSigma_{\mathrm{pss}}(z)}
 > C\Biggl\{\sqrt{\frac{\log p}{n}}+\rho_{n,p}\Biggr\}
 \right]
 \le Cp^{-c}.
\end{equation}
This event places the sample Tyler iteration in the same local neighborhood as the population Tyler solution.

Define the sample Tyler score
\begin{equation}
\label{eq:Psi_n}
 \Psi_n(\mSigma,z)
 =
 \mSigma
 -
 \Normalize\Biggl[
 (1-\rho_T)
 \frac{p}{n}\sum_{i=1}^n\sum_{k=1}^K
 r_k(\bm X_i;z)
 \frac{\br_k^+(\bm X_i;z)\br_k^+(\bm X_i;z)^\top}{\br_k^+(\bm X_i;z)^\top \mSigma^{-1}\br_k^+(\bm X_i;z)\vee \varepsilon_r}
 +\rho_T\mI
 \Biggr].
\end{equation}
The sample Tyler estimator is the local solution of $\Psi_n(\mSigma,z)=\bzero$, and the population Tyler estimator is the local solution of $\Psi(\mSigma,z)=\bzero$. Therefore
\[
 \bzero
 =
 \Psi_n\{\widehat{\mSigma}_{\mathrm{Ty}}(z),z\}
 -
 \Psi\{\mSigma_{\mathrm{Ty}}(z),z\}.
\]
Add and subtract $\Psi_n\{\mSigma_{\mathrm{Ty}}(z),z\}$:
\begin{equation}
\label{eq:Ty_expand_1}
 \Psi_n\{\widehat{\mSigma}_{\mathrm{Ty}}(z),z\}
 -
 \Psi_n\{\mSigma_{\mathrm{Ty}}(z),z\}
 =
 -\Bigl[\Psi_n\{\mSigma_{\mathrm{Ty}}(z),z\}-\Psi\{\mSigma_{\mathrm{Ty}}(z),z\}\Bigr].
\end{equation}
Vectorize and apply the mean-value theorem. There exists a matrix $\overline{\mSigma}(z)$ on the line segment joining $\widehat{\mSigma}_{\mathrm{Ty}}(z)$ and $\mSigma_{\mathrm{Ty}}(z)$ such that
\begin{equation}
\label{eq:Ty_expand_2}
\begin{aligned}
 \operatorname{vec}\Bigl\{\widehat{\mSigma}_{\mathrm{Ty}}(z)-\mSigma_{\mathrm{Ty}}(z)\Bigr\}
 & =
 -\Bigl[D_{\mSigma}\operatorname{vec}\Psi_n\{\overline{\mSigma}(z),z\}\Bigr]^{-1} \\
 &\quad \times
 \operatorname{vec}\Bigl[\Psi_n\{\mSigma_{\mathrm{Ty}}(z),z\}-\Psi\{\mSigma_{\mathrm{Ty}}(z),z\}\Bigr].
\end{aligned}
\end{equation}
Assumption~\ref{ass:tyler} and the pilot concentration \eqref{eq:pss_concentration} imply that the inverse Jacobian on the right-hand side is uniformly bounded. It remains to bound the empirical score term. For a fixed entry $(a,b)$,
\[
 \xi_{ik}^{ab}(z,\mSigma)
 =
 r_k(\bm X_i;z)
 \frac{r_{k,a}^+(\bm X_i;z)r_{k,b}^+(\bm X_i;z)}{\br_k^+(\bm X_i;z)^\top \mSigma^{-1}\br_k^+(\bm X_i;z)\vee \varepsilon_r}
\]
is uniformly bounded by Lemma~\ref{lem:elliptic_geometry}. Applying Lemma~\ref{lem:bernstein} to the class $\cF_{\mathrm{Ty},a,b}$ gives
\[
 \Prob\left[
 \sup_{z\in\cB(r)}\sup_{\mSigma\in\cC_\Sigma}
 \abs{\frac1n\sum_{i=1}^n\sum_{k=1}^K \xi_{ik}^{ab}(z,\mSigma)-\E\xi_{ik}^{ab}(z,\mSigma)}
 > C\sqrt{\frac{\log p}{n}}
 \right]
 \le Cp^{-c}.
\]
Taking the maximum over $a$ and $b$, substituting the resulting bound into \eqref{eq:Ty_expand_2}, and using the bounded inverse Jacobian yields
\[
 \sup_{z\in\cB(r)}\normmax{\widehat{\mSigma}_{\mathrm{Ty}}(z)-\mSigma_{\mathrm{Ty}}(z)}
 \le C\sqrt{\frac{\log p}{n}},
\]
with probability at least $1-Cp^{-c}$. This is \eqref{eq:tme_concentration}.

Apply Lemma~\ref{lem:poet_perturb} to the second POET step. On the same event,
\[
 \sup_{z\in\cB(r)}\normmax{\widehat{\mSigma}_{\mathrm{pt}}(z)-\mSigma_{\mathrm{pt}}(z)}
 \le C\Biggl\{\sqrt{\frac{\log p}{n}}+\rho_{n,p}\Biggr\},
\]
which is \eqref{eq:ptg_concentration}. Finally, Lemma~\ref{lem:glasso_perturb} gives
\[
 \sup_{z\in\cB(r)}\normone{\widehat{\mP}_h(z)-\mP_h(z)}
 \le C s_\Omega\Biggl\{\sqrt{\frac{\log p}{n}}+\rho_{n,p}+\lambda_{\Omega,n}\Biggr\}
 = C\zeta_n.
\]
This proves \eqref{eq:glasso_concentration}.
\end{proof}

\subsection{Proof of Theorem~\ref{thm:main_recursion} and Corollary~\ref{cor:canonical_rate}}

\begin{proof}[Proof of Theorem~\ref{thm:main_recursion}]
For each $t<T_n$ insert the smoothed population map and the oracle map:
\begin{align}
 e_{t+1}
 &= d\bigl(M_{n,h}(z^{(t)}),z^*\bigr) \notag\\
 &\le d\bigl(M_{n,h}(z^{(t)}),M_h(z^{(t)})\bigr)
 + d\bigl(M_h(z^{(t)}),M(z^{(t)})\bigr)
 + d\bigl(M(z^{(t)}),z^*\bigr). 
 \label{eq:main_split}
\end{align}
The first term in \eqref{eq:main_split} is the empirical fluctuation. By Theorem~\ref{thm:sample_nonprecision} and Theorem~\ref{thm:tme_glasso},
\begin{equation}
\label{eq:term_empirical}
 d\bigl(M_{n,h}(z^{(t)}),M_h(z^{(t)})\bigr)
 \le
 C_2\sqrt{\frac{\log n}{nh}}
 + C_3\sqrt{\frac{\log p}{n}}
 + C_4\zeta_n,
\end{equation}
with probability at least $1-Cp^{-c}-Cn^{-c}$, uniformly over $0\le t<T_n$. The second term is the smoothing bias. Proposition~\ref{prop:bias} gives
\begin{equation}
\label{eq:term_bias}
 d\bigl(M_h(z^{(t)}),M(z^{(t)})\bigr)
 \le C_1 h^2.
\end{equation}
The third term is the population contraction. Since the initialization lies in $\cB(r/2)$ and the right-hand side in \eqref{eq:main_recursion} is $o(r)$ by Assumption~\ref{ass:init}, an induction argument shows that every iterate remains in $\cB(r)$. Theorem~\ref{thm:population} therefore yields
\begin{equation}
\label{eq:term_population}
 d\bigl(M(z^{(t)}),z^*\bigr)
 \le \kappa e_t.
\end{equation}
Insert \eqref{eq:term_empirical}, \eqref{eq:term_bias}, and \eqref{eq:term_population} into \eqref{eq:main_split}. This gives \eqref{eq:main_recursion}.

To obtain \eqref{eq:unrolled_rate}, iterate \eqref{eq:main_recursion}:
\begin{align*}
 e_t
 &\le \kappa^t e_0
 + \sum_{u=0}^{t-1}\kappa^u\Biggl(C_1 h^2 + C_2\sqrt{\frac{\log n}{nh}} + C_3\sqrt{\frac{\log p}{n}} + C_4\zeta_n\Biggr) \\
 &\le \kappa^t e_0
 + \frac{C_1 h^2 + C_2\sqrt{\log n/(nh)} + C_3\sqrt{\log p/n} + C_4\zeta_n}{1-\kappa}.
\end{align*}
Finally choose $T_n\asymp \log(1/\varepsilon_{n,h,\zeta})$. Then $\kappa^{T_n}e_0\lesssim \varepsilon_{n,h,\zeta}$ and \eqref{eq:param_consistency} follows. Equation \eqref{eq:param_components} is a direct consequence of the definition of $d$ in \eqref{eq:metric}.
\end{proof}

\begin{proof}[Proof of Corollary~\ref{cor:canonical_rate}]
Under the stated bandwidth choice,
\[
 h^2\asymp \left(\frac{\log n}{n}\right)^{2/5},
 \qquad
 \sqrt{\frac{\log n}{nh}}\asymp \left(\frac{\log n}{n}\right)^{2/5}.
\]
Assumption~\ref{ass:init} gives $\rho_{n,p}\lesssim \sqrt{\log p/n}$ and $\lambda_{\Omega,n}\asymp \sqrt{\log p/n}$. Hence
\[
 \zeta_n
 =
 s_\Omega\Biggl\{\sqrt{\frac{\log p}{n}}+\rho_{n,p}+\lambda_{\Omega,n}\Biggr\}
 \lesssim (1+s_\Omega)\sqrt{\frac{\log p}{n}}.
\]
Substituting these three displays into \eqref{eq:param_consistency} yields \eqref{eq:canonical_rate}.
\end{proof}

\subsection{Proof of Theorem~\ref{thm:risk}}

\begin{proof}[Proof of Theorem~\ref{thm:risk}]
Work on the high-probability event from Theorem~\ref{thm:main_recursion}, on which \eqref{eq:param_components} holds.  Fix $\bx$ and define
\[
 p_k^*(\bx)=\frac{\exp\{\eta_k^*(\bx)\}}{\sum_{\ell=1}^K \exp\{\eta_\ell^*(\bx)\}},
 \qquad
 \widehat p_k(\bx)=\frac{\exp\{\widehat\eta_k(\bx)\}}{\sum_{\ell=1}^K \exp\{\widehat\eta_\ell(\bx)\}}.
\]
If $\widehat G(\bx)\neq G^*(\bx)$, then by definition of the oracle margin,
\[
 \eta_{G^*(\bx)}^*(\bx)-\eta_{\widehat G(\bx)}^*(\bx)
 \le
 \abs{\widehat\eta_{G^*(\bx)}(\bx)-\eta_{G^*(\bx)}^*(\bx)}
 +
 \abs{\widehat\eta_{\widehat G(\bx)}(\bx)-\eta_{\widehat G(\bx)}^*(\bx)}.
\]
Hence
\begin{equation}
\label{eq:margin_reduction}
 \cM^*(\bx)
 \le
 2\max_{1\le k\le K}\abs{\widehat\eta_k(\bx)-\eta_k^*(\bx)}
 \qquad\text{on }\{\widehat G(\bx)\neq G^*(\bx)\}.
\end{equation}
Next,
\begin{align}
 \abs{\widehat\eta_k(\bx)-\eta_k^*(\bx)}
 &\le c_\pi^{-1}\norminf{\widehat{\bpi}-\bpi^*}
 + \sup_{0\le u\le U_0}\abs{\widehat\ell(u)-\ell^*(u)} 
\notag\\
 &\qquad + \sup_{\xi\in[0,U_0]} w_\xi\,\abs{\widehat\delta_k(\bx)-\delta_k^*(\bx)},
 \label{eq:score_diff_1}
\end{align}
where $w_\xi$ denotes the score at an intermediate radius.  By the quadratic expansion used in \eqref{eq:delta_expand},
\[
 \abs{\widehat\delta_k(\bx)-\delta_k^*(\bx)}
 \le C\{1+\norminf{\bx}\}^2\Bigl(\max_{u\le K}\norminf{\widehat{\bmu}_u-\bmu_u^*}+\normone{\widehat\mP-\mP^*}\Bigr).
\]
Conditioning on $Z=k_0$, write $\bx=\bmu_{k_0}^*+\mSigma^{*1/2}R\bm U$.  Then $\norminf{\bx}\le C(1+R)$, and the score decay \eqref{eq:score_decay} yields
\begin{equation}
\label{eq:score_diff_2}
 \max_{1\le k\le K}\abs{\widehat\eta_k(\bm X)-\eta_k^*(\bm X)}
 \le C(1+R)\varepsilon_{n,h,\zeta}
 \qquad\text{almost surely on the event of Theorem~\ref{thm:main_recursion}.}
\end{equation}
Combining \eqref{eq:margin_reduction} and \eqref{eq:score_diff_2} gives
\begin{equation}
\label{eq:mis_event}
 \{\widehat G(\bm X)\neq G^*(\bm X)\}
 \subseteq
 \{\cM^*(\bm X)\le C(1+R)\varepsilon_{n,h,\zeta}\}.
\end{equation}
For every $\bx$, if $a\ge b$, then
\[
 \frac{e^a-e^b}{\sum_{\ell=1}^K e^{\eta_\ell^*(\bx)}}
 \le a-b.
\]
Taking $a=\eta_{G^*(\bx)}^*(\bx)$ and $b=\eta_{\widehat G(\bx)}^*(\bx)$ shows that
\[
 0\le p_{G^*(\bx)}^*(\bx)-p_{\widehat G(\bx)}^*(\bx)
 \le \eta_{G^*(\bx)}^*(\bx)-\eta_{\widehat G(\bx)}^*(\bx)
 \le \cM^*(\bx).
\]
Therefore
\begin{equation}
\label{eq:risk_reduce}
 R(\widehat G)-R(G^*)
 \le \E\Bigl[\cM^*(\bm X)\bone\{\widehat G(\bm X)\neq G^*(\bm X)\}\Bigr].
\end{equation}
Using \eqref{eq:mis_event},
\begin{align*}
 R(\widehat G)-R(G^*)
 &\le \E\Bigl[\cM^*(\bm X)\bone\{\cM^*(\bm X)\le C(1+R)\varepsilon_{n,h,\zeta}\}\Bigr] \\
 &= \E\Bigl[\E\bigl\{\cM^*(\bm X)\bone\{\cM^*(\bm X)\le C(1+R)\varepsilon_{n,h,\zeta}\}\mid R\bigr\}\Bigr].
\end{align*}
For any fixed $R=r$, let $t_r=C(1+r)\varepsilon_{n,h,\zeta}$.  Then
\[
 \cM^*(\bm X)\bone\{\cM^*(\bm X)\le t_r\}
 \le t_r\bone\{0<\cM^*(\bm X)\le t_r\},
\]
and Assumption~\ref{ass:margin} gives
\[
 \E\bigl[\cM^*(\bm X)\bone\{\cM^*(\bm X)\le t_r\}\mid R=r\bigr]
 \le t_r\Prob\{0<\cM^*(\bm X)\le t_r\mid R=r\}
 \le C t_r^{\alpha+1}.
\]
Hence
\[
 R(\widehat G)-R(G^*)
 \le C\varepsilon_{n,h,\zeta}^{\alpha+1}\E(1+R)^{\alpha+1}.
\]
Assumption~\ref{ass:margin} makes the last expectation finite, so
\[
 R(\widehat G)-R(G^*)
 \le C\varepsilon_{n,h,\zeta}^{\alpha+1}
\]
with probability tending to one.  This proves \eqref{eq:risk_rate}.  Setting $\alpha=1$ gives \eqref{eq:risk_rate_alpha1}.
\end{proof}

\section{Additional simulation results}
\label{app:add_sim}

This appendix reports two sensitivity analyses that complement the main fixed-$K$ simulation study in Section~\ref{subsec:simulation_fixed_k}.  The first changes the common scatter matrix from the sparse-precision AR(1) design used in the main text to a dense compound-symmetric design.  The second keeps the AR(1) dependence structure but replaces the sparse mean vectors by dense block means so that every coordinate carries signal.  Together, these experiments separate sensitivity to the scatter structure from sensitivity to the sparsity pattern of the mean signal.  The table notation is the same as in the main simulation section: \emph{Sparse KM} denotes sparse $K$-means, \emph{Sparse KMed} denotes sparse $K$-median, and boldface marks the best feasible (nonoracle) entry within each row.

\subsection{Dense common-scatter sensitivity}
\label{app:cs_sim}

We first assess sensitivity to the common-scatter specification.  The setup is the same as in Section~\ref{subsec:simulation_fixed_k} except for two changes.  The common scatter matrix is replaced by the compound-symmetric matrix
\[
 \mSigma^*_{\mathrm{CS}}
 =
 0.5\,\bone\bone^\top + 0.5\,\mI_p,
\]
so that each diagonal entry equals $1$ and each off-diagonal entry equals $0.5$, and the signal level is increased to $\delta=2$.  The sample size remains $n=300$, the number of clusters is fixed at $K=3$, the dimensions are $p\in\{100,200\}$, and the four radial laws are the Gaussian, $t_5$, Laplace, and slash distributions used in the main text.  This experiment is a dense-scatter sensitivity check rather than a direct verification of the sparse-precision working model.

\begin{table}[H]
\centering
\caption{Additional simulation with compound-symmetric common scatter: clustering accuracy.}
\label{tab:appendix_cs_accuracy}
\scriptsize
\setlength{\tabcolsep}{3.2pt}
\begin{tabular}{lccccccc}
\toprule
\multicolumn{8}{c}{\textit{$p=100$}}\\
\midrule
Distribution & Oracle & $K$-means & Sparse KM & $K$-median & Sparse KMed & CHIME & GEM \\
\midrule
Gaussian & 1.000 & 0.379 & 0.920 & 0.375 & \textbf{0.948} & 0.384 & 0.882 \\
$t_5$ & 0.997 & 0.381 & 0.896 & 0.378 & 0.959 & 0.389 & \textbf{0.970} \\
Laplace & 0.997 & 0.372 & 0.888 & 0.372 & 0.964 & 0.385 & \textbf{0.997} \\
Slash & 0.997 & 0.379 & 0.893 & 0.377 & \textbf{0.977} & 0.389 & 0.939 \\
\midrule
\multicolumn{8}{c}{\textit{$p=200$}}\\
\midrule
Distribution & Oracle & $K$-means & Sparse KM & $K$-median & Sparse KMed & CHIME & GEM \\
\midrule
Gaussian & 1.000 & 0.374 & 0.848 & 0.372 & \textbf{0.897} & 0.374 & 0.871 \\
$t_5$ & 0.996 & 0.375 & 0.827 & 0.375 & 0.966 & 0.379 & \textbf{0.976} \\
Laplace & 0.996 & 0.369 & 0.797 & 0.370 & \textbf{0.971} & 0.373 & 0.962 \\
Slash & 0.998 & 0.371 & 0.771 & 0.370 & \textbf{0.954} & 0.373 & 0.937 \\
\bottomrule
\end{tabular}
\end{table}

\begin{table}[H]
\centering
\caption{Additional simulation with compound-symmetric common scatter: adjusted Rand index.}
\label{tab:appendix_cs_ari}
\scriptsize
\setlength{\tabcolsep}{3.2pt}
\begin{tabular}{lccccccc}
\toprule
\multicolumn{8}{c}{\textit{$p=100$}}\\
\midrule
Distribution & Oracle & $K$-means & Sparse KM & $K$-median & Sparse KMed & CHIME & GEM \\
\midrule
Gaussian & 1.000 & 0.004 & 0.843 & 0.002 & \textbf{0.899} & 0.006 & 0.825 \\
$t_5$ & 0.990 & 0.004 & 0.805 & 0.003 & 0.917 & 0.011 & \textbf{0.954} \\
Laplace & 0.990 & 0.002 & 0.787 & 0.001 & 0.912 & 0.012 & \textbf{0.990} \\
Slash & 0.992 & 0.003 & 0.801 & 0.003 & \textbf{0.942} & 0.008 & 0.906 \\
\midrule
\multicolumn{8}{c}{\textit{$p=200$}}\\
\midrule
Distribution & Oracle & $K$-means & Sparse KM & $K$-median & Sparse KMed & CHIME & GEM \\
\midrule
Gaussian & 0.999 & 0.001 & 0.725 & 0.001 & \textbf{0.816} & 0.002 & 0.785 \\
$t_5$ & 0.989 & 0.002 & 0.695 & 0.002 & 0.921 & 0.004 & \textbf{0.958} \\
Laplace & 0.989 & 0.001 & 0.648 & 0.000 & 0.923 & 0.003 & \textbf{0.934} \\
Slash & 0.994 & 0.000 & 0.612 & 0.000 & \textbf{0.902} & 0.001 & 0.898 \\
\bottomrule
\end{tabular}
\end{table}

The compound-symmetric design changes the ranking among feasible procedures.  The oracle rule remains essentially perfect, but once the common scatter matrix is dense and highly correlated, no single feasible method dominates uniformly across all radial laws.  CHIME is clearly not competitive in this design: its average accuracy remains between $0.373$ and $0.389$, and its average ARI stays close to zero throughout.  By contrast, sparse $K$-median is strongest in the Gaussian and slash settings, whereas GEM is best in the heavier-tailed $t_5$ and Laplace settings and remains close to the best feasible method elsewhere.  Thus the main-text advantage of GEM is attenuated when the scatter side departs strongly from the sparse-precision working model, but the method remains highly competitive under heavy-tailed elliptical components.

\subsection{Dense-mean sensitivity under AR(0.5) dependence}
\label{app:dense_mean_sim}

We next examine a complementary design in which the mean signal is dense rather than sparse.  The setup is the same as in Section~\ref{subsec:simulation_fixed_k} except for the mean structure and signal level.  We keep the AR(1) common scatter matrix with correlation parameter $0.5$, take $p\in\{100,200\}$, and set the signal strength to $\delta=0.4$.  Let $q=p/4$ and define the four consecutive coordinate blocks
\[
 B_r=\{q(r-1)+1,\dots,qr\},\qquad r=1,2,3,4.
\]
With $\bm{1}_{B_r}\in\R^p$ denoting the indicator vector of block $B_r$, the three component centers are
\[
 \bmu_1^*
 =
 \delta\bigl(1.5\bm{1}_{B_1}+0.5\bm{1}_{B_2}-0.5\bm{1}_{B_3}-1.5\bm{1}_{B_4}\bigr),
\]
\[
 \bmu_2^*
 =
 \delta\bigl(-0.5\bm{1}_{B_1}+1.5\bm{1}_{B_2}+0.5\bm{1}_{B_3}-1.5\bm{1}_{B_4}\bigr),
\]
\[
 \bmu_3^*
 =
 \delta\bigl(0.5\bm{1}_{B_1}-1.5\bm{1}_{B_2}+1.5\bm{1}_{B_3}-0.5\bm{1}_{B_4}\bigr).
\]
Thus every coordinate is informative, so the experiment directly examines whether the methods remain effective when feature sparsity is no longer aligned with the data-generating mechanism.  The sample size is $n=300$, the number of clusters is fixed at $K=3$, the four radial laws are the Gaussian, $t_5$, Laplace, and slash distributions used in the main text, and each entry is averaged over $100$ Monte Carlo replications.

\begin{table}[H]
\centering
\caption{Dense-mean sensitivity: clustering accuracy.}
\label{tab:appendix_densemean_accuracy}
\scriptsize
\setlength{\tabcolsep}{3.2pt}
\begin{tabular}{lccccccc}
\toprule
\multicolumn{8}{c}{\textit{$p=100$}}\\
\midrule
Distribution & Oracle & $K$-means & Sparse KM & $K$-median & Sparse KMed & CHIME & GEM \\
\midrule
Gaussian & 0.936 & \textbf{0.918} & 0.800 & 0.896 & 0.603 & 0.896 & 0.888 \\
$t_5$ & 0.945 & 0.850 & 0.804 & 0.920 & 0.772 & 0.842 & \textbf{0.930} \\
Laplace & 0.936 & 0.923 & 0.871 & 0.901 & 0.808 & 0.916 & \textbf{0.924} \\
Slash & 0.944 & 0.820 & 0.766 & 0.913 & 0.686 & 0.811 & \textbf{0.928} \\
\midrule
\multicolumn{8}{c}{\textit{$p=200$}}\\
\midrule
Distribution & Oracle & $K$-means & Sparse KM & $K$-median & Sparse KMed & CHIME & GEM \\
\midrule
Gaussian & 0.985 & \textbf{0.980} & 0.965 & 0.971 & 0.776 & 0.980 & 0.977 \\
$t_5$ & 0.978 & 0.871 & 0.864 & 0.962 & 0.886 & 0.871 & \textbf{0.969} \\
Laplace & 0.972 & \textbf{0.967} & 0.952 & 0.904 & 0.889 & \textbf{0.967} & 0.965 \\
Slash & 0.982 & 0.878 & 0.871 & 0.958 & 0.865 & 0.878 & \textbf{0.976} \\
\bottomrule
\end{tabular}
\end{table}

\begin{table}[H]
\centering
\caption{Dense-mean sensitivity: adjusted Rand index.}
\label{tab:appendix_densemean_ari}
\scriptsize
\setlength{\tabcolsep}{3.2pt}
\begin{tabular}{lccccccc}
\toprule
\multicolumn{8}{c}{\textit{$p=100$}}\\
\midrule
Distribution & Oracle & $K$-means & Sparse KM & $K$-median & Sparse KMed & CHIME & GEM \\
\midrule
Gaussian & 0.820 & \textbf{0.773} & 0.529 & 0.716 & 0.255 & 0.718 & 0.707 \\
$t_5$ & 0.843 & 0.717 & 0.607 & 0.781 & 0.516 & 0.698 & \textbf{0.804} \\
Laplace & 0.818 & 0.784 & 0.662 & 0.745 & 0.580 & 0.766 & \textbf{0.787} \\
Slash & 0.840 & 0.683 & 0.547 & 0.766 & 0.375 & 0.655 & \textbf{0.798} \\
\midrule
\multicolumn{8}{c}{\textit{$p=200$}}\\
\midrule
Distribution & Oracle & $K$-means & Sparse KM & $K$-median & Sparse KMed & CHIME & GEM \\
\midrule
Gaussian & 0.954 & \textbf{0.943} & 0.900 & 0.916 & 0.549 & 0.942 & 0.933 \\
$t_5$ & 0.935 & 0.798 & 0.776 & 0.898 & 0.775 & 0.798 & \textbf{0.917} \\
Laplace & 0.917 & 0.903 & 0.866 & 0.820 & 0.763 & 0.903 & \textbf{0.904} \\
Slash & 0.948 & 0.814 & 0.790 & 0.899 & 0.723 & 0.814 & \textbf{0.930} \\
\bottomrule
\end{tabular}
\end{table}

The dense-mean experiment changes the comparison in a direction different from the sparse-mean setting in the main text.  Because all coordinates carry signal, sparse feature-selection methods are no longer well matched to the design; this is most visible for sparse $K$-median, whose selected subsets discard part of the dense separation and lead to substantially smaller accuracy and ARI.  When $p=100$, GEM is the best feasible method under the three non-Gaussian radial laws, whereas ordinary $K$-means is strongest in the Gaussian case.  When $p=200$, the dense signal accumulates over more coordinates and the Gaussian and Laplace cases become easier for the Euclidean and CHIME-type procedures; nevertheless, GEM remains the best feasible method under the $t_5$ and slash laws and has the best ARI in the Laplace case.  These results complement the sparse-mean experiments: the advantage of sparsity-aware initialization is reduced when there are no inactive coordinates, but the semiparametric elliptical update remains useful for heavy-tailed radial behavior.

\bibliographystyle{apa}
\bibliography{ref}

\end{document}